\documentclass[acmsmall,screen,nonacm,final,author]{acmart} 
\pdfoutput=1
\usepackage[T1]{fontenc}
\usepackage{graphicx}
\usepackage{tikz}
\usetikzlibrary{calc}
\usetikzlibrary{positioning}
\usetikzlibrary{arrows,automata}
\usetikzlibrary{patterns,patterns.meta}
\usepackage{xspace}
\usepackage{xcolor}
\usepackage{amsmath}
\usepackage{thmtools}
\usepackage{thm-restate}
\usepackage{mathrsfs}
\usepackage{ifdraft}
\usepackage{wrapfig}
\usepackage{subcaption}
\usepackage{verbatim}
\usepackage{booktabs}
\usepackage{multirow}
\usepackage{paralist}
\usepackage{lipsum}
\usepackage{amsbsy}
\usepackage[normalem]{ulem}
\usepackage{textcomp}
\usepackage{stmaryrd}
\usepackage{mathpartir}
\usepackage{keyval}
\usepackage{algorithm}
\usepackage{algorithmicx}
\usepackage[noend]{algpseudocode}
\usepackage{makecell}
\usepackage{array}
\input{style/notation.sty} 
\usepackage{newunicodechar}
\newunicodechar{∃}{\ensuremath{\exists}}
\newunicodechar{∀}{\ensuremath{\forall}}
\newunicodechar{θ}{\ensuremath{\theta}}
\newunicodechar{τ}{\ensuremath{\tau}}
\newunicodechar{φ}{\ensuremath{\varphi}}
\newunicodechar{ξ}{\ensuremath{\xi}}
\newunicodechar{ζ}{\ensuremath{\zeta}}
\newunicodechar{ψ}{\ensuremath{\psi}}
\newunicodechar{π}{\ensuremath{\pi}}
\newunicodechar{α}{\ensuremath{\alpha}}
\newunicodechar{β}{\ensuremath{\beta}}
\newunicodechar{γ}{\ensuremath{\gamma}}
\newunicodechar{δ}{\ensuremath{\delta}}
\newunicodechar{ε}{\ensuremath{\varepsilon}}
\newunicodechar{κ}{\ensuremath{\kappa}}
\newunicodechar{λ}{\ensuremath{\lambda}}
\newunicodechar{μ}{\ensuremath{\mu}}
\newunicodechar{ρ}{\ensuremath{\rho}}
\newunicodechar{σ}{\ensuremath{\sigma}}
\newunicodechar{ω}{\ensuremath{\omega}}
\newunicodechar{Γ}{\ensuremath{\Gamma}}
\newunicodechar{Φ}{\ensuremath{\Phi}}
\newunicodechar{Δ}{\ensuremath{\Delta}}
\newunicodechar{Σ}{\ensuremath{\Sigma}}
\newunicodechar{Π}{\ensuremath{\Pi}}
\newunicodechar{∑}{\ensuremath{\Sigma}}
\newunicodechar{∏}{\ensuremath{\Pi}}
\newunicodechar{Θ}{\ensuremath{\Theta}}
\newunicodechar{Ω}{\ensuremath{\Omega}}
\newunicodechar{⇒}{\ensuremath{\Rightarrow}}
\newunicodechar{⇐}{\ensuremath{\Leftarrow}}
\newunicodechar{⇔}{\ensuremath{\Leftrightarrow}}
\newunicodechar{→}{\ensuremath{\rightarrow}}
\newunicodechar{←}{\ensuremath{\leftarrow}}
\newunicodechar{↔}{\ensuremath{\leftrightarrow}}
\newunicodechar{¬}{\ensuremath{\neg}}
\newunicodechar{∧}{\ensuremath{\land}}
\newunicodechar{∨}{\ensuremath{\lor}}
\newunicodechar{≠}{\ensuremath{\neq}}
\newunicodechar{≡}{\ensuremath{\equiv}}
\newunicodechar{∼}{\ensuremath{\sim}}
\newunicodechar{≈}{\ensuremath{\approx}}
\newunicodechar{≥}{\ensuremath{\geq}}
\newunicodechar{≤}{\ensuremath{\leq}}
\newunicodechar{≫}{\ensuremath{\gg}}
\newunicodechar{≪}{\ensuremath{\ll}}
\newunicodechar{∅}{\ensuremath{\emptyset}}
\newunicodechar{⊆}{\ensuremath{\subseteq}}
\newunicodechar{⊂}{\ensuremath{\subset}}
\newunicodechar{∩}{\ensuremath{\cap}}
\newunicodechar{⋂}{\ensuremath{\cap}}
\newunicodechar{∪}{\ensuremath{\cup}}
\newunicodechar{⋃}{\ensuremath{\cup}}
\newunicodechar{⊎}{\ensuremath{\uplus}}
\newunicodechar{∈}{\ensuremath{\in}}
\newunicodechar{∉}{\ensuremath{\not\in}}
\newunicodechar{⊤}{\ensuremath{\top}}
\newunicodechar{⊥}{\ensuremath{\bot}}
\newunicodechar{₀}{\ensuremath{_0}}
\newunicodechar{₁}{\ensuremath{_1}}
\newunicodechar{₂}{\ensuremath{_2}}
\newunicodechar{₃}{\ensuremath{_3}}
\newunicodechar{₄}{\ensuremath{_4}}
\newunicodechar{₅}{\ensuremath{_5}}
\newunicodechar{₆}{\ensuremath{_6}}
\newunicodechar{₇}{\ensuremath{_7}}
\newunicodechar{₈}{\ensuremath{_8}}
\newunicodechar{₉}{\ensuremath{_9}}
\newunicodechar{⁰}{\ensuremath{^0}}
\newunicodechar{¹}{\ensuremath{^1}}
\newunicodechar{²}{\ensuremath{^2}}
\newunicodechar{³}{\ensuremath{^3}}
\newunicodechar{⁴}{\ensuremath{^4}}
\newunicodechar{⁵}{\ensuremath{^5}}
\newunicodechar{⁶}{\ensuremath{^6}}
\newunicodechar{⁷}{\ensuremath{^7}}
\newunicodechar{⁸}{\ensuremath{^8}}
\newunicodechar{⁹}{\ensuremath{^9}}
\newunicodechar{𝔹}{\ensuremath{\mathbb{B}}}
\newunicodechar{ℝ}{\ensuremath{\mathbb{R}}}
\newunicodechar{ℕ}{\ensuremath{\mathbb{N}}}
\newunicodechar{ℂ}{\ensuremath{\mathbb{C}}}
\newunicodechar{ℚ}{\ensuremath{\mathbb{Q}}}
\newunicodechar{𝕋}{\ensuremath{\mathbb{T}}}
\newunicodechar{𝕏}{\ensuremath{\mathbb{X}}}
\newunicodechar{ℤ}{\ensuremath{\mathbb{Z}}}
\newunicodechar{✓}{\ding{51}}
\newunicodechar{✗}{\ding{55}}
\newunicodechar{◊}{\ensuremath{\lozenge}}
\newunicodechar{□}{\ensuremath{\square}}
\newunicodechar{𝓐}{\ensuremath{\mathcal{A}}}
\newunicodechar{𝓑}{\ensuremath{\mathcal{B}}}
\newunicodechar{𝓒}{\ensuremath{\mathcal{C}}}
\newunicodechar{𝓓}{\ensuremath{\mathcal{D}}}
\newunicodechar{𝓔}{\ensuremath{\mathcal{E}}}
\newunicodechar{𝓕}{\ensuremath{\mathcal{F}}}
\newunicodechar{𝓖}{\ensuremath{\mathcal{G}}}
\newunicodechar{𝓗}{\ensuremath{\mathcal{H}}}
\newunicodechar{𝓘}{\ensuremath{\mathcal{I}}}
\newunicodechar{𝓙}{\ensuremath{\mathcal{J}}}
\newunicodechar{𝓚}{\ensuremath{\mathcal{K}}}
\newunicodechar{𝓛}{\ensuremath{\mathcal{L}}}
\newunicodechar{𝓜}{\ensuremath{\mathcal{M}}}
\newunicodechar{𝓝}{\ensuremath{\mathcal{N}}}
\newunicodechar{𝓞}{\ensuremath{\mathcal{O}}}
\newunicodechar{𝓟}{\ensuremath{\mathcal{P}}}
\newunicodechar{𝓠}{\ensuremath{\mathcal{Q}}}
\newunicodechar{𝓡}{\ensuremath{\mathcal{R}}}
\newunicodechar{𝓢}{\ensuremath{\mathcal{S}}}
\newunicodechar{𝓣}{\ensuremath{\mathcal{T}}}
\newunicodechar{𝓤}{\ensuremath{\mathcal{U}}}
\newunicodechar{𝓥}{\ensuremath{\mathcal{V}}}
\newunicodechar{𝓦}{\ensuremath{\mathcal{W}}}
\newunicodechar{𝓧}{\ensuremath{\mathcal{X}}}
\newunicodechar{𝓨}{\ensuremath{\mathcal{Y}}}
\newunicodechar{𝓩}{\ensuremath{\mathcal{Z}}}
\newunicodechar{…}{\ensuremath{\ldots}}
\newunicodechar{∗}{\ensuremath{\ast}}
\newunicodechar{⊢}{\ensuremath{\vdash}}
\newunicodechar{⊧}{\ensuremath{\models}}
\newunicodechar{′}{\ensuremath{'}}
\newunicodechar{″}{\ensuremath{''}}
\newunicodechar{‴}{\ensuremath{'''}}
\newunicodechar{∥}{\ensuremath{\|}}
\newunicodechar{⊕}{\ensuremath{\oplus}}
\newunicodechar{⁺}{\ensuremath{^+}}
\newunicodechar{⊇}{\ensuremath{\supseteq}}
\newunicodechar{∘}{\ensuremath{\circ}}
\newunicodechar{∙}{\ensuremath{\cdot}}
\newunicodechar{⋅}{\ensuremath{\cdot}}
\newunicodechar{≈}{\ensuremath{\approx}}
\newunicodechar{×}{\ensuremath{\times}}
\newunicodechar{∞}{\ensuremath{\infty}}
\newunicodechar{⊑}{\ensuremath{\sqsubseteq}}

\input{style/hmsc-tikz.sty}

\ifoptionfinal
{\usepackage[disable]{todonotes}}
{\usepackage{todonotes}}

\usepackage{hyperref}
\usepackage[capitalise]{cleveref}
\usepackage[shortlabels]{enumitem}

\crefformat{section}{\S#2#1#3} % § for sections
\crefmultiformat{section}{\S#2#1#3}{ and~\S#2#1#3}{, \S#2#1#3}{, and~\S#2#1#3} % § section ranges
\crefrangeformat{section}{\S#3#1#4 to~\S#5#2#6} % § section ranges
\AtBeginDocument{%
  }

\newcommand{\toolname}{\textsc{Sprout}\xspace} 
\newcommand{\newtoolname}{\toolname$\!(\netarch)$\xspace}
\newcommand{\sessionstar}{\textsc{Session*}\xspace}

\newcommand{\muCLP}{$\mu$CLP\xspace}
\newcommand{\muval}{\textsc{MuVal}\xspace}

\newcommand{\bagavail}{\mathsf{bagavail}}
\newcommand{\bag}{\texttt{bag}\xspace}
\newcommand{\ptpbox}{\texttt{p2p}\xspace}
\newcommand{\mailbox}{\texttt{mb}\xspace}
\newcommand{\senderbox}{\texttt{sb}\xspace}
\newcommand{\monobox}{\texttt{monob}\xspace}

\begin{document}

%%
%% The "title" command has an optional parameter,
%% allowing the author to define a "short title" to be used in page headers.

%%
%% The "author" command and its associated commands are used to define
%% the authors and their affiliations.
%% Of note is the shared affiliation of the first two authors, and the
%% "authornote" and "authornotemark" commands
%% used to denote shared contribution to the research.
\title{Implementability of Global Distributed Protocols modulo Network Architectures}

\author{Elaine Li}
%\authornote{corresponding author}
\orcid{0000-0003-0173-4498}
\affiliation{%
	\institution{New York University}
	\city{New York}
	\country{USA}
}
\email{efl9013@nyu.edu}

\author{Thomas Wies}
\orcid{0000-0003-4051-5968}
\affiliation{%
	\institution{New York University}
	\city{New York}
	\country{USA}
}
\email{wies@cs.nyu.edu}

%%
%% By default, the full list of authors will be used in the page
%% headers. Often, this list is too long, and will overlap
%% other information printed in the page headers. This command allows
%% the author to define a more concise list
%% of authors' names for this purpose.
%%\renewcommand{\shortauthors}{Anonymous}

%%
%% The abstract is a short summary of the work to be presented in the
%% article.
\begin{abstract}
Global protocols specify distributed, message-passing protocols from a birds-eye view, and are used as a specification for synthesizing local implementations. Implementability asks whether a given global protocol admits a distributed implementation. 
We present the first comprehensive investigation of global protocol implementability modulo network architectures. 
We propose a set of network-parametric Coherence Conditions, and exhibit sufficient assumptions under which it precisely characterizes implementability. 
We further reduce these assumptions
% , stated as algebraic properties of formal languages,
to a minimal set of operational axioms describing insert and remove behavior of individual message buffers. 
Our reduction immediately establishes that five commonly studied asynchronous network architectures, namely peer-to-peer FIFO, mailbox, senderbox, monobox and bag, are instances of our network-parametric result. 
%Additionally, our reduction yields that heterogeneous network architectures that mix FIFO queues and unordered multisets are also instances of our network-parametric result. 
We use our characterization to derive optimal complexity results for implementability modulo networks, relationships between classes of implementable global protocols, and symbolic algorithms for deciding implementability modulo networks. 
We implement the latter in the first network-parametric tool \newtoolname, and show that it achieves network generality without sacrificing performance and modularity. 

%%% Local Variables:
%%% mode: latex
%%% TeX-master: "main"
%%% End:

\end{abstract}

%%
%% The code below is generated by the tool at http://dl.acm.org/ccs.cfm.
%% Please copy and paste the code instead of the example below.
%%
%\begin{CCSXML}
%<ccs2012>
% <concept>
%  <concept_id>00000000.0000000.0000000</concept_id>
%  <concept_desc>Do Not Use This Code, Generate the Correct Terms for Your Paper</concept_desc>
%  <concept_significance>500</concept_significance>
% </concept>
% <concept>
%  <concept_id>00000000.00000000.00000000</concept_id>
%  <concept_desc>Do Not Use This Code, Generate the Correct Terms for Your Paper</concept_desc>
%  <concept_significance>300</concept_significance>
% </concept>
% <concept>
%  <concept_id>00000000.00000000.00000000</concept_id>
%  <concept_desc>Do Not Use This Code, Generate the Correct Terms for Your Paper</concept_desc>
%  <concept_significance>100</concept_significance>
% </concept>
% <concept>
%  <concept_id>00000000.00000000.00000000</concept_id>
%  <concept_desc>Do Not Use This Code, Generate the Correct Terms for Your Paper</concept_desc>
%  <concept_significance>100</concept_significance>
% </concept>
%</ccs2012>
%\end{CCSXML}

%\ccsdesc[500]{Do Not Use This Code~Generate the Correct Terms for Your Paper}
%\ccsdesc[300]{Do Not Use This Code~Generate the Correct Terms for Your Paper}
%\ccsdesc{Do Not Use This Code~Generate the Correct Terms for Your Paper}
%\ccsdesc[100]{Do Not Use This Code~Generate the Correct Terms for Your Paper}

%%
%% Keywords. The author(s) should pick words that accurately describe
%% the work being presented. Separate the keywords with commas.
\keywords{
	Asynchronous communication, network semantics, global protocol verification, communicating state machines, multiparty session types
}

\maketitle
        
\section{Introduction}
\label{sec:intro}

Distributed, message-passing protocols are notoriously difficult to implement correctly. 
Asynchronous network interleavings of independent events make the detection of communication errors such as orphan messages, unspecified receptions and deadlocks especially challenging. 
Global protocol specifications enjoy the illusion of synchrony, specifying send and receive events atomically from the perspective of an omniscient observer, thereby ruling out large classes of communication errors by construction.
Global protocols are then used as a blueprint from which to synthesize correct-by-construction distributed implementations. 
This approach inspires a top-down verification methodology, whose central decision problem is realizability, also called implementability. 
Implementability asks whether a given global specification admits a distributed implementation, in other words, whether each participant's local view is sufficient for all participants to collectively follow the protocol designer's omniscient intent without relying on covert coordination. 

Global protocols as a specification mechanism do not a priori commit to a particular network model. This choice is instead deferred to the assignment of semantics to global protocols, which in turn depends on the target distributed implementation model. 
By far the most common network model studied in the literature is peer-to-peer (p2p) FIFO communication, in which protocol participants are pairwise connected by ordered channels. 
Realizability of global protocols targeting p2p FIFO communcation has been thoroughly studied in the domain of high-level message sequence charts (HMSCs)~\cite{DBLP:conf/sdl/MauwR97,
	DBLP:conf/ac/GenestMP03,%
	DBLP:conf/acsd/GenestM05,%
	DBLP:conf/concur/GazagnaireGHTY07,%
	DBLP:journals/tosem/RoychoudhuryGS12, 
	DBLP:journals/tse/AlurEY03, 
	DBLP:journals/tcs/Lohrey03, 
	DBLP:conf/concur/AlurY99,DBLP:conf/mfcs/MuschollP99, 
	DBLP:conf/stacs/Morin02,DBLP:journals/jcss/GenestMSZ06}, and p2p FIFO is the predominant network model for multiparty session types~\cite{DBLP:conf/popl/HondaYC08, DBLP:conf/concur/BocchiHTY10, DBLP:conf/tgc/BocchiDY12, DBLP:journals/jlp/ToninhoY17, DBLP:journals/pacmpl/00020HNY20, DBLP:conf/cav/LiSWZ23, DBLP:journals/pacmpl/HinrichsenBK20, DBLP:journals/lmcs/HinrichsenBK22, DBLP:journals/pacmpl/JacobsHK23, Multris, DBLP:conf/cav/LiSWZ23} and choreographic programming frameworks~\cite{DBLP:journals/tcs/Cruz-FilipeM20,DBLP:conf/ecoop/GiallorenzoMPRS21,DBLP:journals/pacmpl/HirschG22,DBLP:journals/corr/abs-2111-03484,DBLP:journals/corr/abs-2303-00924}. Implementations of such frameworks can be found in more than a dozen programming languages. 
While p2p FIFO communication is ubiquitous in practice, other network models also hold practical and theoretical interest. For example, Erlang and Go implement mailbox communication, and many correctness proofs of classic distributed algorithms such as leader election~\cite{DBLP:journals/toplas/GallagerHS83} and clock synchronization~\cite{DBLP:books/acm/19/Lamport19b} rely on bag semantics.

Naturally, the question arises whether the top-down methodology of global protocols can be extended to target different network models. 
The diversity of communication models and ways to define them semantically has been studied as a topic of independent interest~\cite{DBLP:journals/fac/ChevrouHQ16, DBLP:conf/fm/ChevrouH0Q19, DBLP:journals/pacmpl/GiustoFLL23}, but the more salient research question is whether existing results targeting a particular network model can be generalized to the others. 
Decision problems that have been studied in a multi-network setting include synchronizability and reachability~\cite{DBLP:journals/tcs/Lohrey03,DBLP:journals/lmcs/FinkelL23, DBLP:conf/concur/BolligGFLLS21,DBLP:conf/concur/DelpyMS24} of communicating finite-state machines. 
Communication errors in one network model do not necessarily arise in another, making the design of truly network-parametric algorithms very challenging.  
Consequently, the vast majority of existing results target a fixed network model, and the question of if and how they generalize is left unanswered.

We provide a positive answer to this question through the first comprehensive investigation of the implementability problem for global specifications modulo network architectures. We represent network architectures as formal languages, and propose a set of semantic conditions, called Generalized Coherence Conditions, parametric in a choice of network architecture. We exhibit sufficient conditions under which our network-parametric conditions are sound and complete with respect to implementability. These sufficient conditions take the shape of a set of abstract, algebraic properties over formal languages defining the behavior of network architectures, and are the key to enabling a unified, network-parametric proof of soundness and completeness. This first set of contributions constitutes a network-parametric, precise solution to global protocol implementability.

We take network parametricity one step further by reducing our formal language-theoretic sufficiency conditions to a set of simple, operational axioms defined over buffer data structures. These buffer axioms greatly simplify the task of determining whether a network architecture is an instance of our result, by boiling it down to proving a few specifications about insert and remove operations. As an immediate consequence of this reduction, we conclude that five commonly considered network models, namely peer-to-peer FIFO, mailbox (n-1), senderbox (1-n), monobox (1-1) and bag, are all instances of our parametric result. A perhaps more surprising result is that our axioms leave the communication topology completely unconstrained and accommodate even heterogeneous network architectures featuring multiple buffer data structures.

As a final set of contributions, we derive concrete algorithms for checking implementability modulo network architectures and obtain decidability and complexity results for various protocol fragments.
For finite global protocols, we show that implementability is co-NP-complete for all aforementioned network architectures, even for the special case when global protocols belong to the fragment of directed choice multiparty session types~\cite{DBLP:conf/popl/HondaYC08}.
For symbolic global protocols with first-order logic transition constraints, we show that our network-parametric implementability characterization admits an encoding in the first-order fixpoint logic \muCLP~\cite{DBLP:journals/pacmpl/UnnoTGK23}. 

Thus, our characterization yields a symbolic algorithm for checking implementability of protocols with infinitely many states and data. We implement our derived symbolic algorithm in a tool \newtoolname, extending an existing tool that decides implementability for global protocols assuming p2p FIFO networks~\cite{DBLP:conf/cav/LiSWZ25}. Our evaluation shows that \newtoolname extends modularly to different architectures without sacrificing performance. 

\smallskip
\noindent\emph{Contributions.} In summary, our contributions are: 
\begin{itemize}
\item We introduce the implementability problem for global protocol specifications modulo network architectures.
\item We give conditions that characterize implementability in a network-parametric fashion and provide sufficient conditions under which this characterization is sound and complete.
\item We reduce these sufficient conditions to a simple axiomatic model of buffer data structures that can be easily checked for any given network architecture. 
\item  We derive decidability and complexity results for finite-state protocol implementability modulo five commonly considered network architectures as well as algorithms for checking implementability of symbolic global protocols.
\item We present \newtoolname, the first network-parametric implementability checker for symbolic global protocols. 
\end{itemize}

%%% Local Variables:
%%% mode: latex
%%% TeX-master: "main"
%%% End:

\section{Motivating Example} 
\label{sec:overview}

\begin{figure}[t]
	\centering
	\resizebox{1.01\textwidth}{!}{
	\begin{tikzpicture}[hmsc,baseline,msg/.append style={tight,above=2pt,pos=.2},node distance=8ex]

%\node at (s-1) {};

\begin{scope}[msc=s1]
  \begin{scope}[participant=s]
    \node[head];
    \node[event] {};
  \end{scope}
  \node[yshift=.7pt] at (s-1) {$\server$};

  \begin{scope}[participant=wa]
    \node[head];
    \node[event] {};
  \end{scope}
  \node at (wa-1) {$\workerOne$};

  \begin{scope}[participant=wb]
    \node[head];
    \node[no event] {};
  \end{scope}
  \node at (wb-1) {$\workerTwo$};

  \draw[messages]
    (s-2) edge[blue] node[msg,midway]{$\fullMsg$} (wa-2)
  ;
\end{scope}

\begin{scope}[msc=s2,xshift=24ex]
  \begin{scope}[participant=s]
    \node[head];
    \node[event] {};
    \node[event] {};
  \end{scope}
  \node[yshift=.7pt] at (s-1) {$\server$};

  \begin{scope}[participant=wa]
    \node[head];
    \node[no event] {};
    \node[no event] {};
  \end{scope}
  \node at (wa-1) {$\workerOne$};

  \begin{scope}[participant=wb]
    \node[head];
    \node[event] {};
    \node[event] {};
  \end{scope}
  \node at (wb-1) {$\workerTwo$};

  \draw[messages]
    (s-2) edge[blue] node[msg,pos=.25]{$\fullMsg$} (wb-2)
    (wb-3) edge[blue] node[msg,pos=.25]{$\fullMsg$} (s-3)
    %(s-4) edge node[msg,midway]{$z \cond{z > 0}$} (ba-4)
  ;
\end{scope}

\begin{scope}[msc=s3,xshift=48ex]
  \begin{scope}[participant=s]
    \node[head];
    \node[event] {};
    \node[event] {};
    \node[event] {};
  \end{scope}
  \node[yshift=.7pt] at (s-1) {$\server$};

  \begin{scope}[participant=wa]
    \node[head];
    \node[event] {};
    \node[no event] {};
    \node[no event] {};
  \end{scope}
  \node at (wa-1) {$\workerOne$};

  \begin{scope}[participant=wb]
    \node[head];
    \node[no event] {};
    \node[event] {};
    \node[event] {};
  \end{scope}
  \node at (wb-1) {$\workerTwo$};

  \draw[messages]
    (s-2) edge[red] node[msg,midway]{$\halfMsg$} (wa-2)
    (s-3) edge[red] node[msg,pos=.25]{$\halfMsg$} (wb-3)
    (wb-4) edge[red] node[msg,pos=.25]{$\halfMsg$} (s-4)
    %(s-4) edge node[msg,midway]{$z \cond{z > 0}$} (ba-4)
  ;
\end{scope}

\begin{scope}[msc=s4,yshift=-16ex,xshift=-12ex]
  \begin{scope}[participant=s]
    \node[head];
    \node[no event] {};
    \node[event] {};
    \node[event] {};
  \end{scope}
  \node[yshift=.7pt] at (s-1) {$\server$};

  \begin{scope}[participant=wa]
    \node[head];
    \node[event] {};
    \node[event] {};
    \node[no event] {};
  \end{scope}
  \node at (wa-1) {$\workerOne$};

  \begin{scope}[participant=wb]
    \node[head];
    \node[event] {};
    \node[no event] {};
    \node[event] {};
  \end{scope}
  \node at (wb-1) {$\workerTwo$};

  \draw[messages]
    (wa-2) edge[red] node[msg,midway]{$\halfMsg$} (wb-2)
    (wa-3) edge[red] node[msg,midway]{$\halfMsg$} (s-3)
    (wb-4) edge[red] node[msg,pos=.25]{$\halfMsg$} (s-4)
    %(s-4) edge node[msg,midway]{$z \cond{z > 0}$} (ba-4)
  ;
\end{scope}

\begin{scope}[msc=s5, yshift=-16ex,xshift=12ex]
  \begin{scope}[participant=s]
    \node[head];
    \node[event] {};
  \end{scope}
  \node[yshift=.7pt] at (s-1) {$\server$};

  \begin{scope}[participant=wa]
    \node[head];
    \node[event] {};
  \end{scope}
  \node at (wa-1) {$\workerOne$};

  \begin{scope}[participant=wb]
    \node[head];
    \node[no event] {};
  \end{scope}
  \node at (wb-1) {$\workerTwo$};

  \draw[messages]
    (wa-2) edge[blue] node[msg,midway]{$\fullMsg$} (s-2)
  ;
\end{scope}

\begin{scope}[msc=s6, yshift=-18ex,xshift=36ex]
  \begin{scope}[participant=s]
    \node[head];
    \node[no event] {};
    \node[event] {};
  \end{scope}
  \node[yshift=.7pt] at (s-1) {$\server$};

  \begin{scope}[participant=wa]
    \node[head];
    \node[event] {};
    \node[no event] {};
  \end{scope}
  \node at (wa-1) {$\workerOne$};

  \begin{scope}[participant=wb]
    \node[head];
    \node[event] {};
    \node[event] {};
  \end{scope}
  \node at (wb-1) {$\workerTwo$};

  \draw[messages]
    (wa-2) edge[red] node[msg,midway]{$\halfMsg$} (wb-2)
    (wb-3) edge[red] node[msg,pos=.25]{$\halfMsg$} (s-3)
  ;
\end{scope}

\begin{scope}[msc=s7, yshift=-18ex,xshift=59ex]
  \begin{scope}[participant=s]
    \node[head];
    \node[event] {};
  \end{scope}
  \node[yshift=.7pt] at (s-1) {$\server$};

  \begin{scope}[participant=wa]
    \node[head];
    \node[event] {};
  \end{scope}
  \node at (wa-1) {$\workerOne$};

  \begin{scope}[participant=wb]
    \node[head];
    \node[no event] {};
  \end{scope}
  \node at (wb-1) {$\workerTwo$};

  \draw[messages]
    (wa-2) edge[red] node[msg,midway]{$\halfMsg$} (s-2)
  ;
\end{scope}

% \begin{scope}[msc=s2,final]
%   \begin{scope}[participant=e]
%     \node[event] {};
%     \node[event] {};
%     \node[event] {};
%   \end{scope}

%   \begin{scope}[participant=o]
%     \node[head];
%     \node[event] {};
%     \node[event] {};
%     \node[event] {};
%   \end{scope}

%   \draw[messages]
%     (e-2) edge node[msg]{1} (o-3)
%     (o-2) edge node[msg]{1} (e-3)
%     (o-4) edge node[msg,midway]{win} (e-4)
%   ;
% \end{scope}

% \begin{scope}[msc=s3,final]
%   \begin{scope}[participant=e]
%     \node[head,right=of s2o];
%     \node[event] {};
%     \node[event] {};
%     \node[event] {};
%   \end{scope}

%   \begin{scope}[participant=o]
%     \node[head];
%     \node[event] {};
%     \node[event] {};
%     \node[event] {};
%   \end{scope}

%   \draw[messages]
%     (e-2) edge node[msg]{1} (o-3)
%     (o-2) edge node[msg]{0} (e-3)
%     (e-4) edge node[msg,midway]{win} (o-4)
%   ;
% \end{scope}

% \begin{scope}[msc=s4,final]
%   \begin{scope}[participant=e]
%     \node[head,right=of s3o];
%     \node[event] {};
%     \node[event] {};
%     \node[event] {};
%   \end{scope}

%   \begin{scope}[participant=o]
%     \node[head];
%     \node[event] {};
%     \node[event] {};
%     \node[event] {};
%   \end{scope}

%   \draw[messages]
%     (e-2) edge node[msg]{0} (o-3)
%     (o-2) edge node[msg]{1} (e-3)
%     (e-4) edge node[msg,midway]{win} (o-4)
%   ;
% \end{scope}

% \useasboundingbox +(0,.3cm);
% \scoped[exclude from bounding box]
% \path (s2.north) -- node[initial,empty block,name=s0,above=1em]{} (s3.north);

\node[above=10ex of s2,fill] (s0) {};

\path[trans]
  (s0) edge node[midway,yshift=1.2ex]{$1$} (s1.north)
  (s0) edge node[midway,xshift=-1ex,yshift=.4ex]{$2$} (s2.north)
  (s0) edge node[midway,yshift=1.2ex]{$3$} (s3.north)
  (s1.south) edge node[midway,yshift=1ex]{$a$} (s4.north)
  (s1.south) edge node[midway,yshift=1.2ex]{$b$} (s5.north)
  (s3.south) edge node[midway,yshift=1ex]{$c$} (s6.north)
  (s3.south) to node[midway,yshift=1.2ex]{$d$} (s7.north)
  %(s5) edge (s4)
  %(s2) edge (s5)
  %(s5) edge (s6)
  ;
\iffalse
  \draw[trans]
  (s6.south) -- node {} ++(0,-0.15) |- ++(2.40,0) |- 
  node[yshift=1.5ex,xshift=-7ex]{\scriptsize$\initUpdLoop{z_1,z_2}{b_1,b_2}$} (s2.east)
  ;
\fi
\end{tikzpicture}

%%% Local Variables:
%%% mode: latex
%%% TeX-master: "main"
%%% End:
	}
	\caption{Task scheduler with task delegation.}
        \label{fig:task-scheduling}
\end{figure}
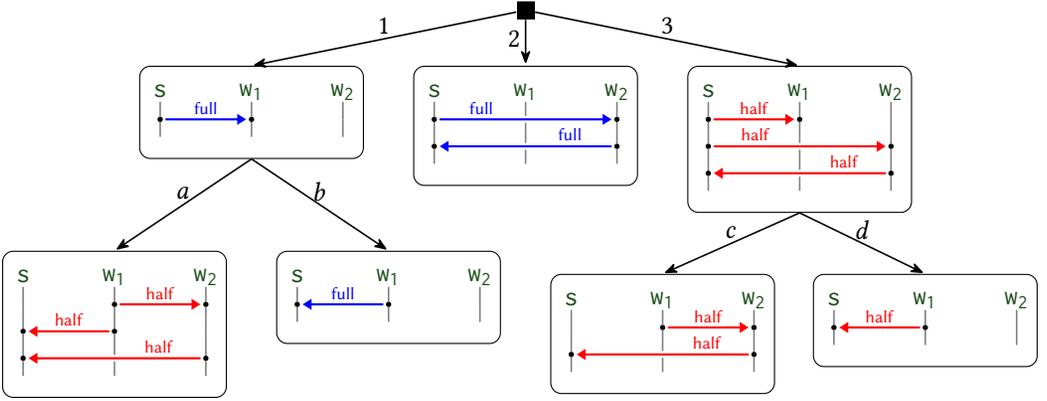

We use a task scheduling protocol to motivate the subtle differences across network architectures. The global specification of the protocol is shown in \cref{fig:task-scheduling}, using high-level message sequence chart (HMSC) notation. The protocol involves three participants: a scheduler $\server$ and two workers $\workerOne$ and $\workerTwo$. Initially, $\server$ chooses to schedule the entire task to either only $\workerOne$ or only $\workerTwo$, or it decides to split the workload between the two workers. The first two cases are depicted in branches $1$ and $2$, where $\server$ sends a $\fullMsg$ message to the respective worker. The third case is branch $3$ where $\server$ sends a $\halfMsg$ message to both workers. Worker $\workerTwo$ always completes its assigned task immediately and sends the result back to $\server$ by echoing the message it has received from $\server$. However, whenever $\workerOne$ is assigned a task, it has the option to behave like $\workerTwo$ (branches $b$ and $d$) or delegate some or all of its work to $\workerTwo$ (branches $a$ and $c$). The protocol operates in a loop, but we omit the back-edges to the initial state in \cref{fig:task-scheduling} for readability.

\paragraph{Implementability modulo network architectures}
Implementability asks whether there exist local implementations for the three participants that behave according to the global protocol specification when executed concurrently on an asynchronous network architecture. In particular, implementations should never deadlock and all participants should behave consistently according to each locally chosen branch, executing send and receive actions exactly in the prescribed order. The latter property is known as \emph{protocol fidelity}. The network architecture is left a parameter of the problem statement. 
\paragraph{Determining implementability}
A local implementation can only gain information about the global protocol state by making branching decisions and by receiving messages. Non-implementability arises when a participant's local information, comprising the decisions and observations it has made so far, is insufficient for determining its next action. 

Let us start by analyzing the implementability question for the task scheduling protocol, assuming a standard peer-to-peer FIFO network architecture (also referred to as \emph{peer-to-peer box} semantics in this paper). Throughout the paper, we use $\snd{\procA}{\procB}{\val}$ to denote a send event where participant $\procA$ sends $\val$ to $\procB$. Likewise, we use $\rcv{\procA}{\procB}{\val}$ to denote an event where $\procB$ receives message $\val$ that was previously sent from $\procA$.

Under a peer-to-peer box network, the protocol is not implementable: $\workerTwo$ cannot distinguish between a run that follows branches $1$ and $a$ and a run that follows branches $3$ and $c$. In both cases, $\workerOne$ may find itself in the same local state $q$ where only the $\halfMsg$ message from $\workerOne$ is available in its associated channel buffer (i.e., in the $3c$ run, the $\halfMsg$ message from $\server$ to $\workerOne$ may be delayed). If the protocol is following the $1a$ run, $\workerTwo$'s next action should be to send a reply to $\server$. However, in the $3c$ run it should first wait for the arrival of $\halfMsg$ from $\server$. If $\workerTwo$ were to wait in state $q$, this would lead to a deadlock in the $1a$ run and if it were to send to $\server$, it would violate protocol fidelity in the $3c$ run.

Perhaps surprisingly, replacing the peer-to-peer box network by another asynchronous network architecture does not resolve this problem. The reason for non-implementability solely depends on the asynchronous nature of communication and the fact that the two send events $\snd{\server}{\workerTwo}{\halfMsg}$ and $\snd{\workerOne}{\workerTwo}{\halfMsg}$ in the $3a$ run do not causally depend on each other. They can therefore happen concurrently, causing the two messages to arrive at $\workerTwo$ in any order. Thus, the protocol is non-implementable for any asynchronous network architecture.

However, in general, implementability depends on the specific network architecture. For example, consider a possible repair of the global specification that replaces the message value $\halfMsg$ of the send from $\workerOne$ to $\workerTwo$ on branch $1a$ with $\delegMsg$. Now, $\workerTwo$ can tell the two branches apart: it can wait until either $\halfMsg$ is available in its buffer from $\server$, indicating that the protocol follows branch $3$, or $\delegMsg$ is available in the buffer from $\workerTwo$, indicating that the other participants chose to follow branch $1a$. Since the two cases are exclusive, $\workerTwo$ can make its decision as soon as it observes one of the two. This change renders the protocol implementable under peer-to-peer box semantics.

On the other hand, this repair does not help for the mailbox network architecture, in which all messages sent to the same recipient are collected in one FIFO buffer. The issue with mailbox is that in the $3c$ branch, the network may still asynchronously reorder the two messages sent to $\workerTwo$ by delaying the message from $\server$. Since messages are buffered in FIFO order of arrival, this would force $\workerTwo$ to first receive the message from $\workerOne$ before being able to retrieve the one from $\server$ in the buffer. The resulting sequence of events would violate protocol fidelity. Note that this time, the ensuing violation of protocol fidelity has nothing to do with incomplete information by any of the protocol participants about what branch the protocol is following. Instead, it is solely due to the ability of the network architecture to reorder independent events in executions of individual protocol runs. A possible repair of the protocol for this architecture is to introduce a causal dependency between $\snd{\server}{\workerTwo}{\halfMsg}$ and $\snd{\workerOne}{\workerTwo}{\halfMsg}$, e.g., by inserting an additional message exchange $\snd{\server}{\workerOne}{\doneMsg}$ between the two exchanges, forcing $\workerOne$ to wait until $\server$ has sent its message to $\workerTwo$.

Finally, if we swap the network architecture from mailbox to \emph{mailbag}, where the single FIFO queue per recipient is replaced by an unordered multiset, the protocol becomes implementable again even without the proposed second repair for mailbox.

%%% Local Variables:
%%% mode: latex
%%% TeX-master: "main"
%%% End:

\section{Preliminaries}
\label{sec:prelim}
\paragraph{Words}
Let $\AlphAsync$ be an alphabet.
$\AlphAsync^*$ denotes the set of finite words over $\AlphAsync$, $\AlphAsync^\omega$ the set of infinite words, and $\AlphAsync^\infty$ their union $\AlphAsync^* \cup \AlphAsync^\omega$.
A word $u \in \AlphAsync^*$ is a \emph{prefix} of word $v \in \AlphAsync^\infty$, denoted $u \leq v$, if there exists $w \in \AlphAsync^\infty$ with $u \cdot w = v$;
we denote all prefixes of $u$ with $\pref(u)$. 
We sometimes omit the concatenation symbol $\cdot$, instead writing $uw = v$. 
Given a word $w = w_0 \ldots w_n$, we use $w[i]$ to denote the $i$-th symbol $w_i \in \AlphAsync$, and $w[0..i]$ to denote the subword between and including $w_0$ and $w_i$, i.e.\ $w_0 \ldots w_i$.

\paragraph{Message Alphabets}
Let $\Procs$ be a (possibly infinite) set of participants and $\MsgVals$ be a (possibly infinite) data domain. 
We define the set of \emph{synchronous events} $\AlphSyncSubscript \is \set{ \msgFromTo{\procA}{\procB}{\val} \mid \procA,\procB ∈ \Procs \text{ and } \val ∈ \MsgVals}$ where $\msgFromTo{\procA}{\procB}{\val}$ denotes a message exchange of $\val$ from sender $\procA$ to receiver $\procB$.
For a participant $\procA\in \Procs$, we define the alphabet $\AlphSync_\procA = \set{\msgFromTo{\procA}{\procB}{\val} \mid \procB \in \Procs,\; \val \in \MsgVals} \cup \set{\msgFromTo{\procB}{\procA}{\val} \mid \procB \in \Procs,\; \val \in \MsgVals}$, and a homomorphism~$\wproj_{\AlphSync_\procA}$, where $x \wproj_{\AlphSync_\procA} = x$ if $x \in \AlphSync_\procA$ and $\emptystring$ otherwise. 
For a participant $\procA \in\Procs$, we define the alphabet
$\AlphAsync_{\procA,!} = \set{\snd{\procA}{\procB}{\val} \mid \procB \in \Procs,\; \val \in \MsgVals }$ of \emph{send} events and the alphabet
$\AlphAsync_{\procA,?} = \set{\rcv{\procB}{\procA}{\val} \mid \procB \in \Procs,\; \val \in \MsgVals }$ of \emph{receive} events.
The event $\snd{\procA}{\procB}{\val}$ denotes participant $\procA$ sending a message $\val$ to $\procB$,
and $\rcv{\procB}{\procA}{\val}$ denotes participant $\procA$ receiving a message $\val$ from $\procB$.
We write $\AlphAsync_{\procA} = \AlphAsync_{\procA,!} \union \AlphAsync_{\procA,?}$,
$\AlphAsync_! = \Union_{\procA \in \Procs} \AlphAsync_{\procA,!}$, and
$\AlphAsync_? = \Union_{\procA \in \Procs} \AlphAsync_{\procA,?}$.
Finally, the set of \emph{asynchronous} events is $\AlphAsyncSubscript = \AlphAsync_! \cup \AlphAsync_?$.

\paragraph{Projections.}
We say that $\procA$ is \emph{active} in $x \in \AlphAsyncSubscript$ if $x \in \AlphAsync_{\procA}$.
For each participant $\procA\in \Procs$, we define a homomorphism~$\wproj_{\AlphAsync_\procA}$, where $x \wproj_{\AlphAsync_\procA} = x$ if $x \in \AlphAsync_\procA$ and $\emptystring$ otherwise.
We define a class of projections based on pattern-matching of alphabet symbols, denoted $\wproj_{\hole}$. The result of the projection is determined by the unspecified parts of the pattern. 
For example, $\wproj_{\snd{\procA}{\hole}{\hole}}$ projects the symbol $\snd{\procA}{\procB}{\val}$ onto $(\procB,\val)$, and non-send symbols and send-symbols that do not have $\procA$ as the sender onto $\emptystring$. 
The function $\wproj_{\rcv{\procA}{\procB}{\hole}}$ projects receive events of $\procA$ from $\procB$ of any message value onto the message value, and all other events to $\emptystring$. 

We adopt labeled transition systems over the synchronous alphabet $\AlphSyncSubscript$ as our starting point for specifying global protocols. 

\paragraph{Labeled Transition Systems}
A \emph{labeled transition system} (LTS) is a tuple
$\mathcal{S} = (S, \AlphSync, T, s_0, F)$
where
$S$ is a set of states, $\AlphSync$ is a set of labels,
$T$ is a set of transitions from $S \times \AlphSync \times S$,
$F \subseteq S$ is a set of final states,
and $s_0 \in S$ is the initial state. 
We use $p \xrightarrow{\alpha} q$ to denote the transition $(p, \alpha, q) \in T$.
Runs and traces of an LTS are defined in the expected way.
A run is $\emph{maximal}$ if it is either finite and ends in a final state, or is infinite.
The language of an LTS $\mathcal{S}$, denoted $\lang(\mathcal{S})$, is defined as the set of maximal traces.
A state $s \in S$ is a \emph{deadlock} if it is not final and has no outgoing transitions.
An LTS is \emph{deadlock-free} if no reachable state is a deadlock.

In \cite{Li25OopslaOfficial}, LTS over $\AlphSyncSubscript$ are constrained with three additional conditions, to yield a fragment called \emph{global communicating labeled transition systems}, hereafter GCLTS. 
The three GCLTS assumptions are sink finality, sender-driven choice, and deadlock freedom. 
Sink finality is a purely syntactic condition enforcing that final states have no outgoing transitions. 
Sender-driven choice states that from any state, all outgoing transitions share a unique sender, and moreover are deterministic. That is, at any branching point in the protocol, there is a unique sender that decides which branch is taken. 
It was shown in \cite{DBLP:journals/tcs/Lohrey03,DBLP:phd/dnb/Stutz24} that forgoing sender-driven choice leads to undecidability of implementability, though there has been recent interest in mixed choice for synchronous protocol fragments~\cite{DBLP:conf/lics/PetersY24}. 
Deadlock freedom requires that all reachable states are either final, or have outgoing transitions.

Our task scheduling example (\cref{fig:task-scheduling}) satisfies all GCLTS assumptions. In particular, to see that it satisfies sender-driven choice, observe that $\server$ chooses among the branches $\{1,2,3\}$ and $\workerOne$ chooses among $\{a,b\}$ and $\{c,d\}$, respectively. 

%%% Local Variables:
%%% mode: latex
%%% TeX-master: "main"
%%% End:

\section{Implementability modulo network architectures} 
\label{sec:implementability}

In this section, we present network-parametric definitions of distributed implementations, global protocol semantics, and finally the implementability problem. 

Our implementation model is based on communicating state machines (CSMs)~\cite{DBLP:journals/jacm/BrandZ83}. CSMs consist of a collection of finite state machines, one for each participant, that communicate via pairwise FIFO channels. 
We generalize CSMs along two key dimensions: the communication topology, and the data structure for message buffers. We also lift the restriction imposed by CSMs that the number of participants and the state spaces of the local state machines must be finite, following~\cite{Li25OopslaOfficial,DBLP:conf/itp/LiW25}.

\begin{definition}[Network architecture] 
	A \emph{network architecture} over a set of participants $\Procs$ and message values $\MsgVals$ is a tuple $\netarch = (\channels, \buffers, \buffermap, \bufinsert, \bufremove, \empbuffer)$ where $\channels$ is a set of channels, $\buffers$ a set of \emph{channel contents} and $\buffermap : \Procs \times \Procs \rightarrow \channels$ a map that associates each sender and receiver with a channel. Intuitively, $\buffermap(\procA,\procB)$ denotes the message buffer to which messages sent from $\procA$ to $\procB$ are deposited. We refer to $\buffermap$ as the \emph{communication topology} of $\netarch$. The set of \emph{channel states} is $\channelstates = \channels \to \buffers$.

        Messages $\msg \in \msgs$ in channel contents are tagged with their sender and receiver, i.e., $\msgs = \Procs \times \Procs \times \MsgVals$. Channel contents are equipped with partial insert and remove operations, $\bufinsert,\bufremove : \msgs \to \buffers \to \buffers$, where $\bufinsert(\msg)(\buffer)$ being undefined indicates that $\buffer$ blocks on inserting $\msg$ and $\bufremove(\msg)(\buffer)$ is only defined when $\msg$ is available for removal in $\buffer$. 
       Finally, $\empbuffer \in \buffers$ is the empty channel contents.
\end{definition} 

\begin{example}
  \label{ex:netarchs}
We define eight concrete network architectures that we will revisit later.
We consider four communication topologies: n-to-n, in which all senders and receivers share the same channel, one-to-n, in which receivers share the same channel to receive from a single sender, n-to-one, in which senders share the same channel to send to a single receiver, and one-to-one, in which each sender and receiver pair have a unique channel. 
We consider two message buffer data structures: ordered FIFO queues, and unordered multisets. 
The aforementioned network architectures often appear under the names of global bus (n-to-n FIFO), message soup (one-to-one multiset), and mailbox (n-to-one FIFO). Message soups are commonly found in leader election protocols such as Paxos and Raft, and one-to-n FIFO is found in work stealing patterns in parallel programming. Finally, the one-to-one or peer-to-peer FIFO is the standard network architecture for CSMs, and widely assumed in the theory and practice of message-passing concurrency. 

The four communication topologies are defined as follows (with our naming conventions given in parenthesis, where ``B'' refers to the name of one of the buffer types below): 
\begin{itemize} 
	\item n-to-n (peer-to-peer B): $\channels = \Procs \times \Procs, \buffermap(\procA,\procB) = (\procA,\procB)$,
	\item one-to-n (mailB): $\channels = \Procs, \buffermap(\procA,\procB) = \procA$,
	\item n-to-one (senderB): $\channels = \Procs, \buffermap(\procA,\procB) = \procB$,
	\item one-to-one (monoB): $\channels = \{0\}, \buffermap(\procA,\procB) = 0$,
\end{itemize} 
and the two buffer types are
FIFO queues (B=box), $\buffers = \msgs^*$, and multisets (B=bag), $\buffers = \msgs \to \Nat$.
In the case of FIFO queues, $\textit{insert}$ corresponds to appending at the end of the queue, $\textit{remove}$ corresponds to removing from the head; in the case of multisets, $\textit{insert}$ is multiset addition, and $\textit{remove}$ multiset deletion. 
The empty buffer contents $b_0$ is $\emptystring$ in the case of FIFO queues and $\emptyset$ in the case of multiset buffers.

We note that the four network architectures with homogeneous bag channels are operationally equivalent and collapse to the monobag case: since messages are all labeled with their sender and receiver, one can ``on demand'' separate the message soup into $\Procs^2$ multisets, or $\Procs$ multisets by sender or receiver whenever messages are sent or received, and thus simulate the other network architectures. This leaves us with the four FIFO network architectures, which we refer to as \ptpbox, \mailbox, \senderbox, and \monobox in the rest of the paper, and the collapsed case for bag channels (\bag).
\end{example}

We next present communicating labeled transition systems parametric in a choice of network architecture $\netarch$. For the following definitions, we fix a network architecture $\netarch = (\channels, \buffers, \buffermap, \bufinsert, \bufremove, \empbuffer)$. We lift the \textit{insert} and \textit{remove} operations to channel states $\chstate \in \channelstates$ of $\netarch$ by defining $\textit{insert}(\chstate, \procA, \procB, \val) = \chstate'$ where $\chstate'(\buffermap(\procA,\procB)) = \bufinsert(\procA,\procB,\val)(\chstate(\buffermap(\procA,\procB)))$ and all other channel contents remain unchanged; 
$\textit{remove}(\chstate, \procA, \procB, \val)$ is defined analogously.

\begin{definition}[Network-parametric CLTS]
$\mathcal{T_\netarch} = \CLTS{T}$ is a \emph{communicating labeled transition system} (CLTS) over $\Procs$, $\MsgVals$ and $\netarch$ if
${T}_\procA$
is a deterministic LTS
over~$\AlphAsync_\procA$ for every $\procA\in\Procs$, denoted by
$(Q_\procA, \AlphAsync_\procA, \delta_\procA, q_{0, \procA}, F_\procA)$.
Let 
$\prod_{\procA \in \Procs} Q_\procA$ 
denote the set of global states. 
A~\emph{configuration} of $\mathcal{T_\netarch}$ is a pair $(\vec{s}, \chstate)$, where $\vec{s}\,$ is a global state and
$\chstate \in \channelstates$ is  a channel state. 
We use $\vec{s}_\procA$ to denote the state of $\procA$ in $\vec{s}$.
The CLTS transition relation, denoted $\rightarrow$, is defined as follows. 
\begin{itemize}
	\item
	$(\vec{s},\chstate) \xrightarrow{\snd{\procA}{\procB}{\val}} (\pvec{s}',\chstate')$ if
	$(\vec{s}_\procA, \snd{\procA}{\procB}{\val}, \pvec{s}'_\procA)\in\delta_\procA$,
	$\vec{s}_\procC = \pvec{s}'_\procC$ for every participant $\procC \neq \procA$,
	$\chstate' =  \textit{insert}(\chstate, \procA, \procB, \val)$. 
	
	\item
	$(\vec{s},\chstate) \xrightarrow{\rcv{\procA}{\procB}{\val}} (\pvec{s}',\chstate')$ if
	$(\vec{s}_\procB, \rcv{\procA}{\procB}{\val}, \pvec{s}'_\procB)\in\delta_\procB$,
	$\vec{s}_\procC = \pvec{s}'_\procC$ for every participant $\procC \neq \procB$,
	$\chstate' = \textit{remove}(\chstate,\procA,\procB,\val)$. 

\end{itemize}
In the initial configuration $(\vec{s}_0, \chstate_0)$, each participant's state in $\vec{s}_0$ is the initial state $q_{0,\procA}$ of $A_\procA$, and $\chstate_0$ maps each channel to the empty buffer $b_0$.
A configuration $(\vec{s}, \chstate)$ is \emph{final} iff $\vec{s}_\procA$ is final for every $\procA$ and $\chstate=\chstate_0$.
Runs and traces are defined in the expected way. 
A run is \emph{maximal} if either it is finite and ends in a final configuration, or it is infinite. 
The language $\lang(\mathcal{T}_\netarch)$ of the CLTS $\mathcal{T}_\netarch$ is defined as the set of maximal traces, and $\text{pref}(\lang(\mathcal{T}_\netarch))$ is defined as the set of prefixes of maximal traces. 
A configuration $(\vec{s}, \chstate)$ is a \emph{deadlock} if it is not final and has no outgoing transitions.
A CLTS is \emph{deadlock-free} if no reachable configuration is a~deadlock.
\end{definition}

The local implementations for participants $\server$, $\workerOne$ and $\workerTwo$ of the repaired task scheduling protocol from \cref{sec:overview} are depicted in \cref{fig:local-implementations-s}, \cref{fig:local-implementations-w1} and \cref{fig:local-implementations-w2} respectively. Note that the active participants are omitted from transition labels for clarity. 

\begin{figure}[t]
	\begin{minipage}[b]{.55\textwidth}
		\centering 
	\begin{tikzpicture}[sem] % node distance=1cm and 2cm,>=stealth', line width=0.25mm]
	% Leftmost states 
	\node[state, initial](q0){};

	% Upper branch
	\node[state, above right=1.2cm and 1.5cm of q0](q1){};
	\node[state, above right=0.3cm and 1.5cm of q1](q2){};
	\node[state, right=1.5cm of q2](q2a){};
	\node[state, below right=0.3cm and 1.5cm of q1](q2b){};
	
	% Middle branch
	\node[state, right=1cm and 1.5cm of q0](q3){};
	\node[state, right=1.5cm of q3](q4){};
	
	% Lower branch
	\node[state, below right=1.2cm and 1.5cm of q0](q5){};
	\node[state, right=1.5cm of q5](q6){};
	\node[state, right=1.5cm of q6](q7){};
	\node[state, right=1.5cm of q7](q8){};

	% First branch
	\path(q0) edge node[sloped, pos=0.5, above] 
	{$\ssnd{\workerOne}{\fullMsg}$} (q1);
	\path(q1) edge node[sloped, pos=0.5, above] 
	{$\srcv{\workerOne}{\halfMsg}$} (q2);
	\path(q1) edge node[sloped, pos=0.5, below] 
	{$\srcv{\workerOne}{\fullMsg}$} (q2b);
	\path(q2) edge node[sloped, pos=0.5, above] 
	{$\srcv{\workerTwo}{\halfMsg}$} (q2a);
	
	% Second branch 
	\path(q0) edge node[sloped, pos=0.5, below] 
	{$\ssnd{\workerOne}{\halfMsg}$} (q5);

	% Third branch 
	\path(q0) edge node[sloped, pos=0.5, above] 
	{$\ssnd{\workerTwo}{\fullMsg}$} (q3);
	\path(q3) edge node[sloped, pos=0.5, above] 
	{$\srcv{\workerTwo}{\fullMsg}$} (q4);
	\path(q5) edge node[sloped, pos=0.5, below] 
	{$\ssnd{\workerTwo}{\halfMsg}$} (q6);
	\path(q6) edge node[sloped, pos=0.5, below] 
	{$\srcv{\workerTwo}{\halfMsg}$} (q7);
	\path(q7) edge node[sloped, pos=0.5, above] 
	{$\srcv{\workerOne}{\halfMsg}$} (q8);
	\path(q7) edge node[sloped, pos=0.5, below] 
	{$\srcv{\workerTwo}{\halfMsg}$} (q8);

\end{tikzpicture}
	\caption{Local implementation for $\server$
	\label{fig:local-implementations-s}}
	\end{minipage}%
	\hfill
        \begin{minipage}[b]{.4\textwidth}
        %\begin{minipage}[b]{.45\textwidth}
		\centering 
		\begin{tikzpicture}[sem]
	% States
	\node[state, initial] (q0) {};
	\node[state] (q1) [above right=0.5cm and 1.5cm of q0] {};
	\node[state] (q2) [right=1.5cm of q1] {};
	\node[state] (q3) [below right=0.5cm and 1.5cm of q0] {};
	\node[state] (q4) [right=1.5cm of q3] {};
	
	% Transitions
	\path (q0) edge node[sloped, pos=0.5, above] {$\srcv{\server}{\fullMsg}$} (q1);
	\path (q1) edge node[pos=0.5, above] {$\ssnd{\server}{\fullMsg}$} (q2);
	\path (q0) edge node[sloped, pos=0.5, above] {$\srcv{\workerOne}{\halfMsg}$} (q3);
	\path (q0) edge node[sloped, pos=0.5, below] {$\srcv{\workerOne}{\delegMsg}$} (q3);
	\path (q3) edge node[sloped, pos=0.5, below] {$\ssnd{\server}{\halfMsg}$} (q4);
\end{tikzpicture}
		\caption{Local implementation for $\workerTwo$
		\label{fig:local-implementations-w2}}
	%\end{minipage}
	%\begin{minipage}[b]{.45\textwidth}
		\centering 
		\begin{tikzpicture}[sem] % node distance=1cm and 2cm,>=stealth', line width=0.25mm]
	\node[state, initial](q0){};
	% Upper branch
	\node[state, above right=0.3cm and 1.5cm of q0](q1){};
	\node[state, above right = 0.05cm and 1.5cm of q1](q2){};
	\node[state, below right=0.05cm and 1.5cm of q1](q3){};
	\node[state, right=1.5cm of q2](q4){};

	% Upper branch
	\path(q0) edge node[sloped, pos=0.5, above] 
	{$\srcv{\server}{\fullMsg}$} (q1);
	\path(q1) edge node[sloped, pos=0.5, above] 
	{$\ssnd{\workerTwo}{\delegMsg}$} (q2);
	\path(q2) edge node[sloped, pos=0.5, above] 
	{$\ssnd{\server}{\halfMsg}$} (q4);
	\path(q1) edge node[sloped, pos=0.5, below] 
	{$\ssnd{\server}{\fullMsg}$} (q3);
	
	% Lower branch
	\node[state, below right=0.3cm and 1.5cm of q0](q6){};
	\node[state, right=1.5cm of q6](q7){};
	
	% Lower branch
	\path(q0) edge node[sloped, pos=0.5, below] 
	{$\srcv{\server}{\halfMsg}$} (q6);
	\path(q6) edge node[sloped, pos=0.5, below] 
	{$\ssnd{\server}{\halfMsg}$} (q7);
	\path(q6) edge node[sloped, pos=0.5, above] 
	{$\ssnd{\workerTwo}{\halfMsg}$} (q7);
	
\end{tikzpicture}
                \vspace{-2em}
		\caption{Local implementation for $\workerOne$
		\label{fig:local-implementations-w1}}
	\end{minipage} 
\end{figure}

\paragraph{Channel compliance.}
Before we define the semantics of global protocols, it is useful to disentangle the behavior of a given network architecture $\netarch$ from that of any particular CLTS $\mathcal{T}_\netarch$. To this end, we introduce the semantic notion of channel compliance that describes all words $w$ that are consistent with $\netarch$. The channel-compliant words are those words that are traces of the universal CLTS, which lifts all constraints on traces imposed by participants' local states.
Formally, let $\mathcal{U}_\netarch$ be the \emph{universal CLTS} for $\netarch$, with $\mathcal{U}_\netarch = \CLTS{U}$ such that $\lang(U_\procA) = \AlphAsync_\procA^*$ for every $\procA \in \Procs$. We say that $w \in \AlphAsyncSubscript^*$ is $\netarch$-channel-compliant if $w$ is a trace of $\mathcal{U}_\netarch$. If $\netarch$ is understood, we just say $w$ is channel-compliant. Moreover, if in fact $w \in \lang(\mathcal{U}_\netarch)$ holds, then we call $w$ \emph{channel-matched}. Intuitively, if one can execute a channel-matched word in any given CLTS for $\netarch$, then in the reached configuration all buffers will be empty.
We use $\lang(\netarch) \subseteq \AlphAsyncSubscript^*$ to denote all channel-compliant words. 

\paragraph{Equality under local projection}
We say that $w_1$ and $w_2$ are equal under local projection, denoted $w_1 \equiv_\Procs w_2$, if for all $\procA$, $w_1 \wproj_{\AlphAsync_\procA} = w_2 \wproj_{\AlphAsync_\procA}$. 
We use $[w]_{\equiv_\Procs}$ to denote the equivalence class under local projection with representative $w$. We lift this to sets $W \subseteq \AlphAsyncSubscript^\infty$, by defining $[W]_{\equiv_\Procs} = \bigcup_{w \in W} [w]_{\equiv_\Procs}$.

\paragraph{Global protocol semantics}
We define the asynchronous semantics of a global protocol $\mathcal{S}$, denoted $\interswaplang_{\mathbb{A}}(\mathcal{S}) \subseteq \AlphAsyncSubscript^\infty$, modulo a choice of network architecture $\netarch$. 
Recall that $\mathcal{S}$ is an LTS over the alphabet $\AlphSyncSubscript$.
The starting point for the semantics $\interswaplang_{\mathbb{A}}(\mathcal{S})$ is the synchronous language $\lang(\mathcal{S})$. 
From $\lang(\mathcal{S})$ we can obtain a set of 1-synchronous asynchronous words through $\SyncToAsync$, which simply splits each atomic send and receive event into its two counterparts, denoted $\SyncToAsync(\lang(\mathcal{S}))$. 
We want to include all asynchronous words that are equal to these 1-synchronous words under local projection and the given network architecture $\netarch$.

We handle the finite and infinite words separately to define the global protocol semantics as the union of its finite and infinite semantics:
\[
\interswaplang_{\mathbb{A}}(\mathcal{S}) = \interswaplang_\netarch^{\fin}(\mathcal{S}) \cup \interswaplang^{\inf}_\netarch(\mathcal{S})\enspace
\]
Following the above recipe and restricting $\SyncToAsync(\lang(\mathcal{S}))$ to finite words yields the finite semantics: 
\[
	\interswaplang^{\fin}_{\netarch}(\mathcal{S}) = [\AlphAsyncSubscript^* \cap \SyncToAsync(\mathcal{L}(\mathcal{S}))]_{\equiv \Procs} \inters
	\lang(\netarch)\enspace. 
\]
The infinite semantics are those words whose prefixes are extensible to some word in $\lang(\mathcal{S})$ modulo equality under local projection and the network semantics:
\[
\interswaplang^{\inf}_{\netarch}(\mathcal{S}) = 
\set{w \in \AlphAsyncSubscript^\infty~\mid~\forall u \leq w.~u \in \pref([\SyncToAsync(\mathcal{L}(\mathcal{S}))]_{\equiv \Procs} \inters
	\lang(\netarch))}\enspace. 
\] 
For disambiguation, we refer to $\lang(\mathcal{S}) \subseteq \AlphSyncSubscript^\omega$ as the \emph{LTS semantics} of $\mathcal{S}$, and refer to $\interswaplang_{\mathbb{A}}(\mathcal{S}) \subseteq \AlphAsyncSubscript^\omega$ as the \emph{protocol semantics} of $\mathcal{S}$. 

\begin{example}
To illustrate our protocol semantics, consider the following protocol, which contains both finite and infinite words in its semantics: $\msgFromTo{\procA}{\procB}{\val}^\infty + (\msgFromTo{\procA}{\procB}{\val})^* \cdot \msgFromTo{\procC}{\procB}{\val}$.
The synchronous runs of the protocol are either of the form $\msgFromTo{\procA}{\procB}{\val}^\infty$, or of the form $(\msgFromTo{\procA}{\procB}{\val})^n \cdot \msgFromTo{\procC}{\procB}{\val}$. 
The $\SyncToAsync$ runs are subsequently of the form $(\snd{\procA}{\procB}{\val} \cdot \rcv{\procA}{\procB}{\val})^\infty$ or 
$(\snd{\procA}{\procB}{\val} \cdot \rcv{\procA}{\procB}{\val})^n \cdot \snd{\procC}{\procB}{\val} \cdot \rcv{\procC}{\procB}{\val}$. Because our infinite word semantics do not impose any fairness assumptions, the unfairly scheduled word $\snd{\procA}{\procB}{\val}^\infty$ is part of the protocol's infinite semantics. 
The word $\snd{\procC}{\procB}{\val} \cdot \snd{\procA}{\procB}{\val} \cdot \rcv{\procA}{\procB}{\val} \cdot \rcv{\procC}{\procB}{\val}$ is part of the protocol's finite semantics under \ptpbox, where the network reorders the send events from $\procA$ and $\procC$, but $\procB$ receives in the specified protocol order, first from $\procA$ and then from $\procC$. 
\end{example}

We are now ready to define the network-parametric implementability problem: 
\begin{definition}[Network-parametric Protocol Implementability]
	A protocol $\mathcal{S}$ is \emph{implementable} under network architecture $\netarch$ if there exists a CLTS $\mathcal{T}_{\netarch} = \CLTS{T}$ such that the following two properties hold:
	\begin{inparaenum}[(i)]
		\item \label{def:lts-implementability-protocol-fidelity}
		\emph{protocol fidelity}: $\lang(\CLTS{T}) = \interswaplang_{\mathbb{A}}(\mathcal{S})$, and
		\item \label{def:lts-implementability-deadlock-freedom}
		\emph{deadlock freedom}: $\CLTS{T}$ is deadlock-free.
	\end{inparaenum}
	We say that $\CLTS{T}$ implements $\mathcal{S}$ under $\netarch$.
\end{definition}

%%% Local Variables:
%%% mode: latex
%%% TeX-master: "main"
%%% End:

\section{Characterization of generalized implementability} 
\label{sec:characterization}

In this section, we present our sound and complete network-parametric implementability characterization. 
We take as a starting point a recently proposed precise characterization for \ptpbox implementability~\cite{Li25OopslaOfficial}. 
We illuminate key abstract assumptions about the protocol semantics and implementation model made by the characterization in \cite{Li25OopslaOfficial} that enable its soundness and completeness proofs. 
In a process analogous to computing weakest pre-conditions, we distill, and in some cases weaken these abstract assumptions, in tandem with developing our network-parametric characterization. 
Ultimately, we obtain a network-parametric set of conditions, that we call Generalized Coherence Conditions, along with a set of abstract assumptions that, when satisfied, render our characterization sound and complete with respect to implementability. 

The implementability characterization of \cite{Li25OopslaOfficial} takes the form of three Coherence Conditions (\Characterization) that a global protocol must satisfy. 
The Coherence Conditions are 2-hyperproperties that scrutinize pairs of global protocol states from which a participant can perform different actions, but whose distinction may not be locally observable to the participant. 
\Characterization describes the kinds of local actions that are safe to perform in this state of unawareness. 
\emph{Send Coherence} says that if a participant has the option to perform a send action from one state, it must have the option to perform the same send action from any indistinguishable state. 
\emph{Receive Coherence} says that if a participant has the option to perform a receive action from one state, then this same receive action could not possibly be performed from any other indistinguishable state. 
\emph{No Mixed Choice} says that a participant cannot equivocate between performing a send and receive action. 

Formally, a global protocol state $s$ is reachable for $\procA$ on $u \in \AlphSync_{\procA}^*$
when there exists $w \in \AlphSyncSubscript^*$ such that $s$ is reachable with trace $w$ and $w \wproj_{\AlphSync_{\procA}} = u$. 
If two global protocol states $s_1$ and $s_2$ are both reachable for $\procA$ on the same $u \in \AlphSync_{\procA}^*$, we call them \emph{simultaneously reachable}. 
Simultaneous reachability captures pairs of states that are locally indistinguishable to a participant. 

\citet{Li25OopslaOfficial} show that \Characterization is sound by invoking a \emph{canonical implementation}, which serves as the witness to implementability. 

\begin{definition}[Canonical implementations~\cite{Li25OopslaOfficial}]
	\label{def:local-language-property}
	A CLTS $\CLTS{T}$ is a \emph{canonical implementation} for a protocol $\mathcal{S} = (S, \AlphSyncSubscript, T, s_0, F)$
	if for every $\procA \in \Procs$, $T_\procA$ satisfies: \\
	\begin{inparaenum}[(i)]
		\item $\forall w \in \Alphabet_{\procA}^*.~w \in \lang(T_\procA) \Leftrightarrow w \in \lang(\mathcal{S}) \wproj_{\Alphabet_\procA}$, and
		\item $\pref(\lang(T_\procA)) = \pref(\lang(\mathcal{S}) \wproj_{\Alphabet_\procA})$.
	\end{inparaenum}
\end{definition}

As observed in \cite{DBLP:conf/itp/LiW25}, canonicity can be defined directly in terms of a global protocol's LTS semantics, since local projections are oblivious to asynchronous reorderings. As discussed in \cite{Li25OopslaOfficial} and formalized in \cite{DBLP:conf/itp/LiW25}, canonical implementations can always be constructed using a generalized subset construction, and synthesis is separate from considerations of network architecture for implementability. We discuss synthesis further in \S\ref{sec:discussion}. 

The key technical argument for soundness lies in showing that the canonical implementation's language is a subset of global protocol semantics, and is deadlock-free. 
This requires showing that every canonical implementation trace can be associated with a run in the global protocol that each participant has partially completed the prescribed actions of. 
This in turn is shown by induction on canonical implementation traces, appealing to \Characterization to argue that the extension by either a send or receive event retains the existence of a global run that can be associated with the resulting trace. 

Completeness of \Characterization is established in \cite{Li25OopslaOfficial} via modus tollens: from the negation of each Coherence Condition a trace is constructed that is compliant with no protocol run, yet must be admitted by any candidate implementation of the protocol. This suffices for an implementability violation, because either the trace leads to a deadlock, or to a maximal word not in the global protocol semantics. 

\paragraph{Revisiting Send Coherence and No Mixed Choice}
These two Coherence Conditions remain sound and complete for network-parametric implementability in their original form. However, their soundness and completeness proofs still need to be generalized to the network-parametric case.

Towards generalizing the completeness proofs of Send Coherence and No Mixed Choice, we observe that the witness constructed by the completeness proof in \cite{Li25OopslaOfficial} can be adapted to one that is the prefix of a 1-synchronous trace. Concretely, the witnesses for both conditions assume the form of $\SyncToAsync(\alpha) \cdot x$, where $\alpha$ is a synchronous word and $x$ is a send event. 
The proof then shows that $\SyncToAsync(\alpha) \cdot x$ is channel-compliant, and together with the fact that it must be executable by any candidate implementation, thus consistutes a counterexample to implementability. 
The generalization thus first requires the assumption that send transitions are always enabled in the network architecture under consideration. We formalize this assumption as follows: 
\begin{equation}
	\label{asm:cc-send}
	\forall w \in \AlphAsyncSubscript^*, x \in \AlphAsync_!.~w \text{ is channel-compliant } \implies wx \text{ is channel-compliant}.
	\tag{\ref*{ca:send-enabled}}
\end{equation}
To complete the network-parametric proof, we show that $\SyncToAsync(\alpha) \cdot x$ is channel-compliant. This fact follows from~\ref{asm:cc-send}, provided $\SyncToAsync(\alpha)$ is channel-compliant. This yields our second assumption:
\begin{equation}
  \label{asm:cc-sync}
  \forall \alpha \in \AlphSyncSubscript^*.~\SyncToAsync(\alpha) \text{ is channel-compliant}.
  \tag{\ref*{ca:sync}}
\end{equation}
Intuitively, a reasonable network should allow sends and their receives to be executed consecutively.

The soundness of Send Coherence seeks to establish that any send extension by the canonical implementation remains a protocol prefix. It turns out that this fact depends on both No Mixed Choice and Receive Coherence. Next, we discuss how to generalize Receive Coherence. 

\paragraph{Revisiting Receive Coherence}
In contrast to send events, receive events are conditioned on transitions as well as the availability of the message in question to be received.  
Message availability in turn highly depends on the network architecture, and requires considering possible asynchronous reorderings, as evidenced by participant $\workerTwo$'s plight in the task delegation protocol from \S\ref{sec:overview}. 

Consider the protocols $\mathcal{S}_a$ and $\mathcal{S}_b$, depicted in \cref{fig:bag-neq-p2p} and \cref{fig:p2p-neq-senderbox}, both from the perspective of the receiver $\procB$.
In $\mathcal{S}_a$ under a monobag network architecture, $\procB$'s bag can contain any number of $\val$ messages from $\procA$, in addition to a $\bot$ message. Participant $\procB$ can never know when it is safe to receive the $\bot$ message, and thus the protocol is non-implementable. 
If we replace the monobag network with a peer-to-peer box network, the final $\bot$ message is ordered after all $\val$ messages have been sent, and thus participant $\procB$ can always receive the message at the head of its FIFO queue from $\procA$. 
In $\mathcal{S}_b$ under a peer-to-peer box network, when message $\val$ from $\procB$ is available to receive, the same message $\val$ could also be available from $\procA$ simultaneously, leading $\procB$ to a protocol violation. 
If we replace the peer-to-peer box network with a senderbox network, this ambiguity again goes away: $\procC$'s message to $\procD$ effectively blocks the message to $\procB$ from being available to receive, thus a unique message is available for $\procB$ to receive regardless of the branch selected by $\procA$. 

\begin{figure}[t]
	\centering
	\begin{minipage}[b]{0.37\textwidth}\vspace{0pt}%
		\centering 
		\begin{tikzpicture}[sem]
	% States
	\node[state, initial] (q0) {};
	\node[state, accepting, right=2cm of q0] (q1) {};
	
	% Transitions
	\path (q0) edge [loop above] node {$\msgFromTo{\procA}{\procB}{\val}$} (q0);
	\path (q0) edge node[sloped, pos=0.5, above] {$\msgFromTo{\procA}{\procB}{\bot}$} (q1);
\end{tikzpicture} 
		\caption{A protocol $\mathcal{S}_a$ that is not implementable on a bag network but implementable on a peer-to-peer box network.\label{fig:bag-neq-p2p} }
	\end{minipage}%
	\hfill%
	\begin{minipage}[b]{0.59\textwidth}\vspace{0pt}%
		\centering
		\begin{comment}
	$G_b \is 
	+ 
	\begin{cases}
		\msgFromTo{\procA}{\procC}{\msgO}. \msgFromTo{\procC}{\procB}{\val}. 0  \\
		\msgFromTo{\procA}{\procC}{\msgB}. \msgFromTo{\procA}{\procB}{\val}. \msgFromTo{\procC}{\procD}{\val}. \msgFromTo{\procC}{\procB}{\val}.0
	\end{cases}
	$
\end{comment}

\begin{tikzpicture}[sem] % node distance=1cm and 2cm,>=stealth', line width=0.25mm]
	\node[state,initial](q0){};
	% Upper branch
	\node[state, above right=0.1cm and 1.5cm of q0](q1){};
	\node[state, accepting, right=1.2cm of q1](q2){};

	% Upper branch
	\path(q0) edge node[sloped, pos=0.5, above] {$\msgFromTo{\procA}{\procC}{\msgO}$} (q1);
	\path(q1) edge node[sloped, pos=0.5, above] {$\msgFromTo{\procC}{\procB}{\val}$} (q2);
	
	% Lower branch
	\node[state, below right=0.1cm and 1.5cm of q0](q6){};
	\node[state, right=1.2cm of q6](q7){};
	\node[state, right=1.2cm of q7](q78){};
	\node[state, right=1.2cm of q78](q8){};
	\node[state, accepting, right=1.2cm of q8](q9){};
	
	% Lower branch
	\path(q0) edge node[sloped, pos=0.5, below] {$\msgFromTo{\procA}{\procC}{\msgB}$} (q6);
	\path(q6) edge node[sloped, pos=0.5, below] {$\msgFromTo{\procA}{\procB}{\val}$} (q7);
	\path(q7) edge node[sloped, pos=0.5, below] {$\msgFromTo{\procB}{\procD}{\val}$} (q78);	
	\path(q78) edge node[sloped, pos=0.5, below] {$\msgFromTo{\procC}{\procD}{\val}$} (q8);
	\path(q8) edge node[sloped, pos=0.5, below] {$\msgFromTo{\procC}{\procB}{\val}$} (q9);
	
\end{tikzpicture}
                \vspace*{-1em}
		\caption{A protocol $\mathcal{S}_b$ that is not implementable on a peer-to-peer box network but implementable on a senderbox network.
			\label{fig:p2p-neq-senderbox} }
	\end{minipage}
\end{figure}

To generalize Receive Coherence, we first make the following key observation about all implementations of global protocols. 
If at any point during the implementation's execution, a participant has the choice between receiving two different messages, then one of these choices must constitute a protocol violation.
Because protocol runs totally order the events of all participants, there exists no protocol run that is compliant with both choices of receptions. 
Assume that for each choice of reception, there exists a protocol run that is compliant with the resulting execution trace. 
Due to the sender-driven nature of global protocols, for any two runs of a global protocol, the first event along which they differ must share a common sender. 
The fact that the receiver can choose between two receptions means it is oblivious to the difference between the corresponding compliant runs, and their choice thus forces the sender to commit to one of two branches, a coordination between participants that is impossible since either the send event causally precedes the receive event, or the send event is independent of the receive event.
Indeed, the soundness proof of Receive Coherence in \cite{Li25OopslaOfficial} argues that all receive extensions to the canonical implementation's traces are unique, and the completeness proof of Receive Coherence constructs a counterexample to implementability from the existence of an implementation prefix that admits two different receive extensions.  

It follows from this key observation that irrespective of network architecture, it is sufficient and necessary to guarantee that no two distinct receptions are enabled for a given participant from any configuration. 
We then capture this simpler property in a network-parametric way. 
The notion of $\val$ being receivable by $\procB$ from $\procA$ in some configuration after executing a word $w$ is captured simply as the channel compliance of $w \cdot \rcv{\procA}{\procB}{\val}$. 
We desire two more constraints on $w$: first, the sending of $\val$ by $\procA$ should not depend on any events by $\procB$, and thus we require that $w \wproj_{\AlphAsync_{\procB}} = \emptystring$, and second, a different message must also have been sent to $\procB$ in $w$: either the sender or the message value differ. 

This leads to our Generalized Receive Coherence condition, defined as follows:

\begin{definition}[Generalized Receive Coherence]
	A protocol $\mathcal{S}$ % = (S, \AlphSyncSubscript, T, s_0, F)
	satisfies Generalized Receive Coherence when for every two simultaneously reachable by $\procB$ states $s_1, s_1'$ with transitions 
	$s_1 \xrightarrow{\msgFromToNS{\procA}{\procB}{\val}} s_2$ and $s_1' \xrightarrow{\msgFromToNS{\procC}{\procB}{\val'}} s_2'$, if  
	$\procC \neq \procA$ or $\val' \neq \val$, then
        for any $w \in \pref (\lang(\mathcal{S}_{s_1'}))$ such that
	$w \wproj_{\Alphabet_{\procB}} = \emptystring$ and 
	$\snd{\procC}{\procB}{\val'} \leq w \wproj_{\AlphAsync_\procC}$, 
        the extension $w\cdot\rcv{\procA}{\procB}{\val}$ is not channel-compliant.
\end{definition}

We highlight two differences from Receive Coherence for \ptpbox from \cite{Li25OopslaOfficial}: $w$ is required to be a protocol prefix for the protocol reinitialized at $s_1'$ instead of $s_2'$, and we require an additional conjunct that $\snd{\procC}{\procB}{\val'}$ is the first event for participant $\procC$ in $w$. 
It is easy to see that for \ptpbox channel compliance, the two RHSs are equivalent, because messages from $\procC$ to $\procB$ do not share a channel with messages from a different sender, and thus unmatched send events can always be prepended to traces without affecting channel compliance. 
However, imposing this fact as an assumption would a priori rule out \mailbox and \monobox network architectures. 
Thus, the RHS of Generalized Receive Coherence captures that a send from $\procC$ to $\procB$ exists in $w$, without committing to where in $w$ it resides, and we formulate the following assumption on channel compliance instead, which we call \emph{history insensitivity}: for all $w \in \AlphAsyncSubscript^*$, $\alpha,\beta \in \AlphSyncSubscript^*$, $\procA \neq \procB \in \Procs$, $\val \in \MsgVals$, then
\begin{align*}
\label{asm:cc-history-insensitive}
(\forall \procA \in \Procs.~w \wproj_{\AlphAsync_{\procA}} \leq \SyncToAsync(\alpha\beta) \wproj_{\AlphAsync_{\procA}}) \land w \wproj_{\AlphAsync_{\procB}} \!=\! \SyncToAsync(\alpha)  \wproj_{\AlphAsync_{\procB}} \land w \!\cdot\! \rcv{\procA}{\procB}{\val} \text{ is channel-compliant } \\
\implies
\exists w'.(\forall \procA \in \Procs.~w' \wproj_{\AlphAsync_{\procA}} \leq \SyncToAsync(\beta) \wproj_{\AlphAsync_{\procB}}) \land w' \wproj_{\AlphAsync_{\procB}} \!=\! \emptystring \land w' \!\cdot\! \rcv{\procA}{\procB}{\val} \text{ is channel-compliant } & \tag{\ref*{ca:history-insensitive}}
\end{align*}

Intuitively, \ref{ca:history-insensitive} allows an asynchronous trace that is compliant with a a synchronous trace $\alpha\beta$ to selectively forget events from $\alpha$, resulting in a trace $w'$ that is compliant with $\beta$, such that if $\procB$ completed all events in $\alpha$ in $w$, then all messages that were receivable in $w$ remain receivable in $w'$. 
\ref{asm:cc-history-insensitive} is thus sufficient to establish the soundness of Generalized Receive Coherence: from an execution trace extensible with a receive event, we argue that this receive event must be uniquely determined by all compliant protocol runs of the trace thus far, otherwise we use \ref{asm:cc-history-insensitive} to construct a new trace that contradicts Generalized Receive Coherence. 

\paragraph{Prefix Extensibility.}
The three Coherence Conditions for \ptpbox  alone are no longer sufficient to guarantee implementability, even with Generalized Receive Coherence. It turns out that 
there is a fourth source of non-implementability that never occurs for \ptpbox networks but does occur for other architectures. We capture this source of non-implementability in a new fourth Coherence Condition, coined Prefix Extensibility.

To motivate the new condition, consider the simple straight-line global specification consisting only of the synchronous word $\msgFromTo{\procA_1}{\procB}{\val} \cdot \msgFromTo{\procA_2}{\procB}{\val}$. 
Despite its simplicity, there does not exist a deadlock-free \mailbox (or \monobox) CLTS that implements this specification. 
Any candidate CLTS must exhibit the following deadlocking trace: 
$\snd{\procA_2}{\procB}{\val} \cdot \snd{\procA_1}{\procB}{\val}$.
Because the network can reorder $\procA_2$'s send event before $\procA_1$'s send event, yet the local actions of $\procB$ tell it to receive in the opposite order, the only way for the CLTS to not deadlock is for $T_\procB$ to admit the local trace $\rcv{\procA_2}{\procB}{\val} \cdot \rcv{\procA_1}{\procB}{\val}$.

To rule out such cases of non-implementability, Prefix Extensibility requires that protocol prefixes are closed under per-participant equality.

\begin{definition}[Prefix Extensibility] 
	A protocol $\mathcal{S}$ %= (S, \AlphSyncSubscript, T, s_0, F)  
	satisfies Prefix extensibility when for every protocol trace $\rho \in \text{pref}(\lang(\mathcal{S}))$ and $w \in \AlphAsyncSubscript^*$ such that $w$ is $\netarch$-channel-compliant and agrees with $\rho$, there exists $u \in \AlphAsyncSubscript^*$ such that $wu$ is channel compliant, and $wu \equiv_\Procs \SyncToAsync(\rho)$. 
\end{definition}

Note that the protocol above violates Prefix Extensibility: let $\rho = \msgFromTo{\procA_1}{\procB}{\val} \cdot \msgFromTo{\procA_2}{\procB}{\val}$ and $w = \snd{\procA_2}{\procB}{\val}$, then $w$ agrees with $\rho$ and is \mailbox channel-compliant. However, all extensions of $w$ that are per-participant equal to $\SyncToAsync(\rho)$ are not \mailbox channel-compliant because they deadlock. 

\paragraph{Generalized Coherence Conditions}
Our generalized characterization of implementability is thus defined as the conjunction of four conditions. 
\begin{definition}[Generalized Coherence Conditions]
	A protocol $\mathcal{S}$ % = (S, \AlphSyncSubscript, T, s_0, F)
	satisfies Generalized Coherence Conditions under network architecture $\netarch$ when it satisfies Send Coherence, Generalized Receive Coherence, No Mixed Choice, and Prefix Extensibility. 
\end{definition}

We present the three remaining assumptions we impose on the network architecture that are required for our preciseness proof.
We refer to the collection of six assumptions \ccfact{1} through \ccfact{6} altogether as \emph{channel compliance facts}, and these together constitute the sufficient conditions under which our network-parametric characterization is sound and complete. 
	\label{def:cc-facts} 
	Let $w \in \AlphAsyncSubscript^*$ be channel-compliant. 
	\begin{enumerate}[\ccfact{\arabic*}]
		%\item For all $x \in \AlphAsync_!$, $wx$ is channel-compliant. \label{ca:send-enabled}
		\item[\ref*{ca:send-before-receive}] For all $\procA \neq \procB \in \Procs$ and $\val \in \MsgVals$, $\card {w \wproj_{\snd{\procA}{\procB}{\val}}} \geq \card {w \wproj_{\rcv{\procA}{\procB}{\val}}}$.
		\item[\ref*{ca:sync-matched}] For all $\rho \in \AlphSyncSubscript^*$, if 
		$w \equiv_\Procs \SyncToAsync(\rho)$, then $w$ is channel-matched.
		\item[\ref*{ca:intro-send}] For all $x \in \AlphAsync_!$, $y \in \AlphAsync_?$, if $wy$ is channel-compliant then $wxy$ is channel-compliant. 
	\end{enumerate}
These basic assumptions are used throughout the proofs. For example, \ref*{ca:send-before-receive} states that messages cannot be received before they were sent.

The following theorem states that our Generalized Coherence Conditions are sound and complete with respect to implementability, assuming a network architecture that satisfies our channel compliance facts. 
The proof of \cref{thm:preciseness} is fully mechanized in Rocq, and builds on the Rocq mechanization for \ptpbox networks in \cite{DBLP:conf/itp/LiW25}.

\begin{theorem}[Preciseness of Generalized Coherence Conditions]
  \label{thm:preciseness}
   Let $\netarch$ be a network architecture that satisfies \ccfact{1} through \ccfact{6} and let $\mathcal{S}$ be a global protocol. 
	Then, $\mathcal{S}$ is implementable under $\netarch$ if and only if it satisfies Generalized Coherence Conditions under $\netarch$. 
\end{theorem}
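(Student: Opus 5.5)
We prove the two directions separately, following the structure of the \ptpbox proof in \cite{Li25OopslaOfficial} and replacing every appeal to concrete FIFO behaviour by one of the channel compliance facts.

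\emph{Soundness.} Assume $\mathcal{S}$ satisfies the Generalized Coherence Conditions. As witness we take a canonical implementation $\CLTS{T}$ (\cref{def:local-language-property}), which exists by the generalized subset construction.

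The central invariant is proved by induction on traces $w$ of $\CLTS{T}$. It states that there is a protocol prefix $\rho \in \pref(\lang(\mathcal{S}))$ such that, for every $\procA$, $w\wproj_{\AlphAsync_\procA} \leq \SyncToAsync(\rho)\wproj_{\AlphAsync_\procA}$, and $w$ is channel-compliant. We call such a $\rho$ \emph{agreeing} with $w$. The inductive step splits on the last event.
\begin{itemize}
\item \emph{Send event.} We combine canonicity with Send Coherence and No Mixed Choice to extend $\rho$ by the corresponding synchronous event. Channel compliance of the extended trace comes from \ref{ca:send-enabled}.
\item \emph{Receive event $\rcv{\procA}{\procB}{\val}$.} We must show that every agreeing run prescribes exactly this reception for $\procB$.
\end{itemize}

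The receive case is where we expect the main obstacle. Suppose an agreeing $\rho$ prescribes a different reception $\msgFromTo{\procC}{\procB}{\val'}$. We write $\rho = \alpha\beta$, where $\alpha$ ends at the point at which $\procB$'s projection has been consumed. We then apply \ref{ca:history-insensitive} to obtain a trace $w'$ with $w'\wproj_{\AlphAsync_\procB}=\emptystring$, compliant with $\beta$, such that $w'\cdot\rcv{\procA}{\procB}{\val}$ is channel-compliant. From $\rho$ and $w$ we also extract two states that are simultaneously reachable for $\procB$, one enabling $\procA\to\procB{:}\val$ and one enabling $\procC\to\procB{:}\val'$. Finally we arrange $w'$ so that $\snd{\procC}{\procB}{\val'}$ occurs in it; here \ref{ca:intro-send} and \ref{ca:send-before-receive} are used to insert the missing sends. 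This contradicts Generalized Receive Coherence.

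We expect the delicate part to be the bookkeeping that makes $w'$ a prefix of $\lang(\mathcal{S}_{s_1'})$. To handle it, we first prove a lemma that sender-driven choice forces the first divergence of two agreeing runs to share a sender.

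To finish soundness, we show $\lang(\CLTS{T}) \subseteq \interswaplang_\netarch(\mathcal{S})$.
\begin{itemize}
\item \emph{Finite maximal traces.} Canonicity (i) gives $w \equiv_\Procs \SyncToAsync(\rho)$ for some maximal $\rho$, and \ref{ca:sync-matched} ensures the buffers are empty.
\item \emph{Infinite traces.} We use the invariant on every prefix.
\end{itemize}
Deadlock freedom follows from Prefix Extensibility. At a non-final reachable configuration with trace $w$ and agreeing $\rho$, we may extend $\rho$ if needed using GCLTS deadlock freedom. Prefix Extensibility then yields $u$ with $wu$ channel-compliant and $wu\equiv_\Procs\SyncToAsync(\rho)$. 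The first event of $u$ is enabled in $\CLTS{T}$ by canonicity (ii). For the reverse inclusion $\interswaplang_\netarch(\mathcal{S})\subseteq\lang(\CLTS{T})$, we use canonicity, the fact that CLTS languages are closed under $\equiv_\Procs$ within channel-compliant words, and channel compliance.

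\emph{Completeness.} We argue by contraposition, one condition at a time, for an arbitrary implementation $\CLTS{T}$.
\begin{itemize}
\item \emph{Send Coherence or No Mixed Choice fails.} We build the witness $\SyncToAsync(\alpha)\cdot x$. It is channel-compliant by \ref{ca:sync} and \ref{ca:send-enabled}, and is forced into $\CLTS{T}$ by protocol fidelity together with the locality of $T_\procA$. It extends to no protocol word, so it yields either a deadlock or a fidelity violation.
\item \emph{Generalized Receive Coherence fails.} We obtain a configuration in which $\procB$, with the same local history, can receive $\val$ from $\procA$ and $\val'$ from $\procC$. We merge the two runs by interleaving $\procB$'s common prefix with $w$, using \ref{ca:intro-send}. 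The result is a trace of $\CLTS{T}$ admitting a reception that no protocol run agrees with.
\item \emph{Prefix Extensibility fails.} The compliant $w$ agreeing with $\rho$ is a trace of $\CLTS{T}$, since each local projection is a protocol prefix. Any maximal extension of $w$ lies in the protocol semantics, so it is $\equiv_\Procs$ to some $\SyncToAsync(\rho')$. Determinism of the local machines lets us steer this extension to match $\rho$ (using sender-driven choice), which contradicts the failure of Prefix Extensibility; otherwise $\CLTS{T}$ deadlocks.
\end{itemize}
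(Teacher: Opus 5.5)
Your overall plan matches the paper's outline: the canonical implementation as witness, an agreeing-run invariant proved by induction on traces, history insensitivity in the receive step, and $\SyncToAsync(\alpha)\cdot x$ witnesses for completeness. The gap is the send step, which does not go through with the ingredients you list.

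If the agreeing run $\rho$ already prescribes a different next action for $\procA$, you cannot ``extend'' $\rho$. You must reroute it at $\procA$'s choice point, and every event the other participants have already performed in $w$ beyond that point must be re-embedded in the new branch. Send Coherence can handle their sends, but their receives need Generalized Receive Coherence. This is why the paper remarks that soundness of the send extension depends on No Mixed Choice \emph{and} Receive Coherence.

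A concrete witness is the GCLTS with the two branches $\msgFromTo{\procA}{\procB}{\val_1}\cdot\msgFromTo{\procC}{\procE}{\val}\cdot\msgFromTo{\procD}{\procE}{\val'}$ and $\msgFromTo{\procA}{\procB}{\val_2}\cdot\msgFromTo{\procD}{\procE}{\val'}\cdot\msgFromTo{\procC}{\procE}{\val}$.
\begin{itemize}
\item It satisfies Send Coherence: $\procC$ and $\procD$ each send once, in both branches, and $\procA$ acts only initially.
\item It satisfies No Mixed Choice, since nobody both sends and receives. It violates Generalized Receive Coherence for $\procE$, which is the point.
\item Under \ptpbox, $w=\snd{\procC}{\procE}{\val}\cdot\rcv{\procC}{\procE}{\val}$ is a canonical trace, and the first branch agrees with it, so your invariant holds.
\item $w\cdot\snd{\procA}{\procB}{\val_2}$ is also a canonical trace, yet no run agrees with it: $\procA$'s projection forces the second branch and $\procE$'s forces the first.
\end{itemize}
So canonicity, Send Coherence, No Mixed Choice and your existential inductive hypothesis cannot yield the send step. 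You must bring Generalized Receive Coherence into the send case. You can do this directly, when re-embedding the other participants' receives into the rerouted run. Alternatively, strengthen the invariant to record the universal statement your receive step proves (every agreeing run prescribes each performed reception) and use it there.

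Three further steps are too quick.
\begin{itemize}
\item For finite maximal traces, canonicity (i) is per participant. It gives, for each participant separately, some maximal run matching its projection, not a single $\rho$ with $w\equiv_\Procs\SyncToAsync(\rho)$. You must start from the agreeing run of your invariant and argue that, in a final configuration, it completes to a maximal run whose projections coincide with those of $w$.
\item For infinite traces, an agreeing run plus channel compliance of a prefix $u$ does not place $u$ in $\pref([\SyncToAsync(\lang(\mathcal{S}))]_{\equiv_\Procs}\cap\lang(\netarch))$. That membership requires a channel-compliant completion, which for architectures such as \mailbox is exactly what Prefix Extensibility provides. So Prefix Extensibility is needed for fidelity, not only for deadlock freedom.
\item In the completeness argument for Generalized Receive Coherence, the claim that no protocol run agrees with the constructed trace is the heart of the matter. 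It needs the first-divergence argument via sender-driven choice, which the intro-send fact does not supply. Moreover, the negated condition only gives receivability of $\val$ from $\procA$, not of $\val'$ from $\procC$.
\end{itemize}
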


%%% Local Variables:
%%% mode: latex
%%% TeX-master: "main"
%%% End:

\section{Reducing channel compliance facts to buffer axioms}
\label{sec:channel-compliance}

With a network-parametric implementability characterization and preciseness proof in place, what remains is determining whether a given network architecture $\netarch$ is a suitable instance of the framework. 
Based on our technical development thus far, this amounts to showing that $\netarch$-channel compliance satisfies the six channel compliance facts assumed in the theorem. However, proving these facts for every network architecture individually is tedious and repetitive at best, and non-trivial at worst: in particular, history insensitivity (\ref{asm:cc-history-insensitive}) requires reasoning globally across events in a word that are not causally ordered either by participants or by channels. As empirical evidence of the proof effort required to prove the channel compliant facts, the Rocq mechanization of \ccfact{1} through \ccfact{6} for \senderbox comprises some 5000 lines of code!
 
 We thus take a further step in reducing channel compliance facts to a set of simple, operational buffer data structure specifications that only describe the behavior of operations per individual channel in $\netarch$. 
 We call these \emph{buffer axioms}, and the set of buffer axioms constitute our axiomatic model of network architectures to which our implementability characterization applies.

In the following, for a network architecture $\netarch$, we denote by $\ops$ the set of all operations $\bufinsert(\msg),\bufremove(\msg): \buffers \pto \buffers$ for $\msg \in \msgs$. By slight abuse of notation, we identify finite sequences $\opseq =\op_1\dots\op_n \in \ops^*$ with the partial function $\op_n \circ \dots \circ \op_1$ obtained by composing $\op_1,\dots,\op_n$ in reverse order, and the empty sequence $\emptystring$ with the identity function on $\buffers$. In other words, $\opseq$ is identified with the composite operation that captures the cumulative effect of $\opseq$'s constituent operations. We further write $\revapp{f}{x}$ for reverse function application, i.e., $\revapp{x}{f} = f(x)$.

\begin{definition}[Axiomatic network model]
	\label{def:buffer-axioms}
  We denote by $\netarchs$ the set of all network architectures $\netarch$ that satisfy the following properties, for all $\buffer \in \buffers$, $\msg,\msgb \in \msgs$, and $\opseq \in \ops^*$:
  \begin{enumerate}[\baxiom{\arabic*}]
  \item $\revapp{\buffer}{\bufinsert(\msg)}$ is defined. \label{ba:insert-total}
  \item $\revapp{\revapp{\empbuffer}{\bufinsert(\msg)}}{\bufremove(\msg)}$ is defined.\label{ba:emp-insert-remove}
  \item $\revapp{\empbuffer}{\bufremove(\msg)}$ is undefined. \label{ba:emp-remove}
  \item If $\revapp{\buffer}{\bufremove(\msg)}$ is defined, then $\revapp{\revapp{\buffer}{\bufinsert(\msgb)}}{\bufremove(\msg)}$ is defined. \label{ba:intro-insert}
  \item If $\revapp{\revapp{\buffer}{\bufinsert(\msg)}}{\opseq}$ is defined and $\bufremove(\msg)$ does not occur in $\opseq$, then $\revapp{\buffer}{\opseq}$ is defined. \label{ba:elim-insert}
  \item If $\revapp{\buffer}{\bufremove(\msg)}$ is defined, then  $\revapp{\revapp{\buffer}{\bufinsert(\msg)}}{\bufremove(\msg)} = \revapp{\revapp{\buffer}{\bufremove(\msg)}}{\bufinsert(\msg)}$. \label{ba:left-com-remove-eq}
  \item If $\msg \neq \msgb$, then $\revapp{\revapp{\buffer}{\bufinsert(\msg)}}{\bufremove(\msgb)} = \revapp{\revapp{\buffer}{\bufremove(\msgb)}}{\bufinsert(\msg)}$. \label{ba:left-com-remove-neq}
  \item If $\revapp{\buffer}{\bufremove(\msg)}$ is undefined, $\revapp{\revapp{\revapp{\buffer}{\bufinsert(\msg)}}{\opseq}}{\bufremove(\msg)}$ is defined, and $\bufremove(\msg)$ does not occur in $\opseq$, then $\revapp{\revapp{\revapp{\buffer}{\bufinsert(\msg)}}{\opseq}}{\bufremove(\msg)} = \revapp{\buffer}{\opseq}$. \label{ba:cancel-insert-remove}
  \end{enumerate}
\end{definition}

Intuitively, \ref{ba:insert-total} says that $\bufinsert$ is total, \ref{ba:emp-insert-remove} says that one can always insert and immediately remove the same message from an empty channel, \ref{ba:emp-remove} says that one cannot remove from the empty channel, \ref{ba:intro-insert} says that adding inserts does not disable subsequent removes, \ref{ba:elim-insert} says that unmatched inserts can be omitted, \ref{ba:left-com-remove-eq} says that removes left-commute with inserts of the same message, provided the message has already been inserted earlier, \ref{ba:left-com-remove-neq} says that removes left-commute with inserts of different messages, and \ref{ba:cancel-insert-remove} says that matching inserts and removes cancel each other out.

It is now trivial to observe that FIFO queues and multisets satisfy these axioms. Thus all network architectures of \cref{ex:netarchs} are in $\netarchs$. However, some common channel data structures are ruled out. In particular, \ref{ba:insert-total} rules out bounded channels and \ref{ba:cancel-insert-remove} rules out channels that allow message duplication.

The following lemma states that our network model implies the precondition of \cref{thm:preciseness}.

\begin{lemma}
  \label{lem:model-entails-ccfacts}
  All $\netarch \in \netarchs$ satisfy \ccfact{1} through \ccfact{6}.
\end{lemma}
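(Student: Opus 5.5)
Our plan is to first reduce channel compliance of a word to a per-channel statement, and then derive each fact from the buffer axioms.

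\textbf{Step 1: per-channel reformulation.} Since every local LTS of the universal CLTS $\mathcal{U}_\netarch$ accepts all of $\AlphAsync_\procA^*$, a word $w$ is channel-compliant exactly when, for every channel $c \in \channels$, the operation sequence $\opseq_c(w)$ is defined on $\empbuffer$. Here $\opseq_c(w)$ maps each $\snd{\procA}{\procB}{\val}$ with $\buffermap(\procA,\procB)=c$ to $\bufinsert(\procA,\procB,\val)$ and each such $\rcv{\procA}{\procB}{\val}$ to $\bufremove(\procA,\procB,\val)$. Channel-matchedness adds the requirement $\revapp{\empbuffer}{\opseq_c(w)} = \empbuffer$ for all $c$, because final configurations also require the participants' states to be final, and every state of $U_\procA$ can be taken final. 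With this reformulation every fact becomes a statement about sequences in $\ops^*$ acting on one buffer at a time.

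\textbf{Step 2: the easy facts.}
\begin{itemize}
\item \ref{ca:send-enabled} follows from \ref{ba:insert-total}.
\item \ref{ca:intro-send} follows from \ref{ba:insert-total} together with \ref{ba:intro-insert} applied on the affected channel.
\item \ref{ca:send-before-receive} goes by induction on the number of removes. Take the last $\bufremove(\msg)$ that occurs before its matching insert would exist. Repeatedly eliminate unmatched inserts of other messages with \ref{ba:elim-insert}, or commute them with \ref{ba:left-com-remove-neq}, and cancel earlier matched pairs with \ref{ba:cancel-insert-remove}. This reaches a remove of $\msg$ applied to a buffer with no pending $\msg$, hence ultimately to $\empbuffer$, contradicting \ref{ba:emp-remove}. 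A cleaner route is to prove a general counting lemma: if $\revapp{\empbuffer}{\opseq}$ is defined, then every prefix of $\opseq$ has at least as many inserts of each $\msg$ as removes.
\item \ref{ca:sync} goes by induction on $\alpha$. After each synchronous event the relevant buffer returns to its prior value: insert then remove from $\buffer$ returns $\buffer$. This follows from \ref{ba:cancel-insert-remove} with empty $\opseq$ when $\revapp{\buffer}{\bufremove(\msg)}$ is undefined. When it is defined, \ref{ba:left-com-remove-eq} instead gives $\revapp{\revapp{\buffer}{\bufremove(\msg)}}{\bufinsert(\msg)}$. Since all previous buffers along $\SyncToAsync(\alpha)$ are $\empbuffer$, we only need the undefined case, which \ref{ba:emp-remove} supplies. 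Definedness of the remove comes from \ref{ba:emp-insert-remove}.
\end{itemize}

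\textbf{Step 3: \ref{ca:sync-matched}.} Take $w \equiv_\Procs \SyncToAsync(\rho)$. We show each $\opseq_c(w)$ reduces to $\empbuffer$ by induction on $\rho$. Consider the first synchronous event $\msgFromTo{\procA}{\procB}{\val}$. Its send is the first $\procA$-event in $w$ and its receive is the first $\procB$-event. Between them, by \ref{ca:send-before-receive}, there are no removes of the same message; this needs a FIFO-free argument that uses the counts. So \ref{ba:cancel-insert-remove} cancels the pair, and the remaining word is per-participant equal to $\SyncToAsync$ of the tail of $\rho$. The subtlety is that \ref{ba:cancel-insert-remove} needs $\revapp{\buffer}{\bufremove(\msg)}$ undefined at the insertion point. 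That holds here because no earlier $\msg$ is pending.

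\textbf{Step 4: history insensitivity \ref{ca:history-insensitive}, the main obstacle.} We plan to construct $w'$ by deleting from $w$ all events belonging to $\SyncToAsync(\alpha)$, using per-participant prefixes. For participants other than $\procB$ this may delete sends whose matching receive (by $\procB$) lies in $\alpha$ and thus also gets deleted. Sends from $\alpha$ whose receivers outside $\procB$ have not yet received them are unmatched inserts and are removed with \ref{ba:elim-insert}. Matched insert/remove pairs are removed with \ref{ba:cancel-insert-remove}. For that we need the matching invariant that the $k$-th remove of $\msg$ corresponds to the $k$-th insert, derivable from the counting lemma. The key difficulty is showing that the $\alpha$-events form, on each channel, a set of inserts together with their matching removes, and that deleting them preserves definedness of the final $\bufremove(\procA,\procB,\val)$. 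We will proceed by deleting events one at a time from the right, commuting removes leftward using \ref{ba:left-com-remove-eq} and \ref{ba:left-com-remove-neq} until an insert and its remove are adjacent, or an insert has no remove in the suffix. We expect the bookkeeping here, particularly when $\procA$'s message is itself inserted by an $\alpha$-send, to dominate the proof. Finally we check that $w' \wproj_{\AlphAsync_\procB} = \emptystring$, since all of $\procB$'s events lie in $\alpha$, and that each remaining projection is a prefix of $\SyncToAsync(\beta)$.
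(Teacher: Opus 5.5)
Steps 1--3 are sound and follow the paper. The one deviation is in \ref{ca:sync-matched}: you peel off the first event of $\rho$, whereas the paper peels off the first event of $w$. At that point the buffer is still $\empbuffer$, so \ref{ba:emp-remove} discharges the side condition of \ref{ba:cancel-insert-remove} directly. Your variant needs the counting lemma for that side condition, which is fine.

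Do write the counting lemma out, since everything rests on it. It has a short proof by induction on $\opseq$ from the front:
\begin{itemize}
\item a leading remove contradicts \ref{ba:emp-remove};
\item a leading $\bufinsert(\msg)$ with no later $\bufremove(\msg)$ is dropped by \ref{ba:elim-insert};
\item otherwise it is cancelled against the first later $\bufremove(\msg)$ by \ref{ba:cancel-insert-remove}, whose side condition is again \ref{ba:emp-remove}.
\end{itemize}

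The genuine gap is Step 4, where the procedure you outline gets stuck. The axioms only let a remove move left past an insert (\ref{ba:left-com-remove-neq}, or \ref{ba:left-com-remove-eq} when the message is already removable). Nothing commutes two removes or two inserts, and since the lemma ranges over all of $\netarchs$ you cannot borrow FIFO- or bag-specific facts.

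Here is a concrete case in the \bag architecture (a single multiset). Take $\alpha = \msgFromTo{\procC}{\procB}{\val}$, $\beta = \msgFromTo{\procD}{\procC}{\val}\cdot\msgFromTo{\procA}{\procB}{\val}$ and $w = \snd{\procC}{\procB}{\val}\,\snd{\procD}{\procC}{\val}\,\rcv{\procD}{\procC}{\val}\,\rcv{\procC}{\procB}{\val}\,\snd{\procA}{\procB}{\val}$; this satisfies all hypotheses of \ref{ca:history-insensitive}. Write $\msg = (\procC,\procB,\val)$ and $\msgb = (\procD,\procC,\val)$. The $\alpha$-pair encloses the $\beta$-pair $\bufinsert(\msgb)\,\bufremove(\msgb)$, which must survive into $w'$. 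Now $\bufremove(\msg)$ cannot pass $\bufremove(\msgb)$, and $\bufremove(\msgb)$ cannot pass $\bufinsert(\msgb)$ because $\msgb$ is not removable before it. So the $\alpha$-pair never becomes adjacent.

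Working from the right causes a further problem. It does not secure the side condition of \ref{ba:cancel-insert-remove} when earlier copies of the same message are still pending. Also, a ``$k$-th insert matches $k$-th remove'' invariant has no intrinsic meaning for bags.

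The missing idea is that adjacency is unnecessary: \ref{ba:cancel-insert-remove} cancels across any interval free of $\bufremove(\msg)$. Cancel front to back, each time pairing the first remaining $\bufinsert(\msg)$ of an $\alpha$-message with the first remaining $\bufremove(\msg)$. The interval contains no $\bufremove(\msg)$ by choice. The side condition follows from the counting lemma, because the prefix contains no $\bufinsert(\msg)$. In every projection all $\alpha$-events precede all $\beta$-events, and $w$ has at least as many sends as receives of each message. Hence this phase deletes exactly the $\alpha$-receives together with equally many of the earliest $\alpha$-sends, without changing the channel state before the final receive.

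A leftover $\alpha$-send of $\msg$ can exist only if every receive of $\msg$ in $w$ was an $\alpha$-receive. Then no $\bufremove(\msg)$ remains after it, and \ref{ba:elim-insert} deletes it, provided the appended $\rcv{\procA}{\procB}{\val}$ is not a remove of $\msg$. This is the case you flagged but left open, and counting settles it. $\procB$ performs in $w$ exactly the $a$ receives of $(\procA,\procB,\val)$ prescribed by $\alpha$, so \ref{ca:send-before-receive} applied to $w\cdot\rcv{\procA}{\procB}{\val}$ yields at least $a+1$ sends of that message. All $\alpha$-copies are therefore cancelled in the first phase, and the copy consumed at the end is a $\beta$-send.

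This two-phase argument, \ref{ba:cancel-insert-remove} for matched $\alpha$-pairs followed by \ref{ba:elim-insert} for unmatched $\alpha$-sends, is precisely the paper's Claims 1 and 2.
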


\begin{corollary}
  Let $\netarch \in \netarchs$. Then for all protocols $\mathcal{S}$, $\mathcal{S}$ is implementable under $\netarch$ if and only if it satisfies Generalized Coherence Conditions. 
\end{corollary}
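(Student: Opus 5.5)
The corollary follows by chaining the two results stated just before it, with no new construction. Fix $\netarch \in \netarchs$ and an arbitrary global protocol $\mathcal{S}$.

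\Cref{lem:model-entails-ccfacts} tells us that $\netarch$ satisfies all six channel compliance facts \ccfact{1} through \ccfact{6}. These are exactly the hypotheses on the network architecture in \cref{thm:preciseness}. Instantiating that theorem with $\netarch$ and $\mathcal{S}$ gives: $\mathcal{S}$ is implementable under $\netarch$ if and only if $\mathcal{S}$ satisfies the Generalized Coherence Conditions under $\netarch$. Since $\mathcal{S}$ was arbitrary, the corollary holds.

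Before concluding, I would check one point: that both results use the same notion of channel compliance. In \cref{thm:preciseness}, channel compliance is defined through the universal CLTS $\mathcal{U}_\netarch$ built from the concrete $\bufinsert$ and $\bufremove$ operations. In \cref{lem:model-entails-ccfacts}, the buffer axioms speak about those same operations, lifted channel-wise through $\buffermap$. So the facts the lemma delivers are literally the facts the theorem requires.

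I would also point out where the network enters the conclusion. The Generalized Coherence Conditions depend on $\netarch$ only through Generalized Receive Coherence and Prefix Extensibility, both of which mention channel compliance. The corollary's phrase ``satisfies Generalized Coherence Conditions'' should therefore be read as ``under $\netarch$''.

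I expect no real obstacle here. All the substantive work sits in the two results being combined: the mechanized preciseness proof, and the reduction of history insensitivity to the buffer axioms (\ref{ba:elim-insert}, \ref{ba:left-com-remove-eq}, \ref{ba:left-com-remove-neq}, \ref{ba:cancel-insert-remove}) inside \cref{lem:model-entails-ccfacts}. Finally, since \cref{ex:netarchs} observes that FIFO queues and multisets satisfy the buffer axioms, the corollary applies in particular to \ptpbox, \mailbox, \senderbox, \monobox and \bag.
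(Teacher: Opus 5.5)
Your proposal is correct and matches the paper's argument, which the paper leaves implicit. \cref{lem:model-entails-ccfacts} discharges the hypotheses \ccfact{1} through \ccfact{6} of \cref{thm:preciseness} for every $\netarch \in \netarchs$, and instantiating the theorem gives the equivalence, with the Generalized Coherence Conditions read relative to $\netarch$ (one cosmetic slip: the paper notes that FIFO queues and multisets satisfy the buffer axioms right after \cref{def:buffer-axioms}, not in \cref{ex:netarchs}).
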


The proof of \cref{lem:model-entails-ccfacts} is in \cref{sec:lem:model-entails-ccfacts-proof}. For example, to show that \ref{asm:cc-sync} holds one proves the stronger property that $\SyncToAsync(\alpha)$ is channel-matched by induction on the length of $\alpha$. In the induction step, one uses \ref{ba:emp-insert-remove} to show that the extended word remains channel-compliant, and then \ref{ba:cancel-insert-remove} to show that it leaves all channels empty, i.e., is channel-matched.

\newcommand\ccarch[0]{\mathcal{C}_\netarch\xspace}

%%% Local Variables:
%%% mode: latex
%%% TeX-master: "main"
%%% End:

\section{Derivatives of Generalized \Characterization}
\label{sec:instantiations}

In this section, we derive a series of results from our sound and complete characterization of network-agnostic implementability. 
First, we introduce symbolic protocols featuring dependent refinements. 
Following the blueprint in~\cite{Li25OopslaOfficial}, we show that Generalized \Characterization similarly admits an encoding as least and greatest fixpoints over the symbolic protocol's transition relation. 
This allows us to define a sound and relatively complete algorithm for deciding implementability of symbolic protocols for the five network architectures considered in \cref{ex:netarchs}. 
As an interlude, we use our encoding to establish the relationship between classes of implementable protocols for each network architecture. 
Finally, we present complexity results for finite fragments under the considered architectures. 

\subsection{Checking Implementability of Symbolic Protocols}
\label{sec:symbolic}

Symbolic protocols~\cite{Li25OopslaOfficial} are defined over a fixed but unspecified first-order background theory of message values (\eg linear integer arithmetic).
We assume standard syntax and semantics of first-order formulas and denote by $\formulas$ the set of first-order formulas with free variables drawn from an infinite set $X$.
We assume that these variables are interpreted over the set of message values $\MsgVals$.

\begin{definition}[Symbolic protocol~\cite{Li25OopslaOfficial}]
	A symbolic protocol is a tuple 
	$\SymProt = (S, R, \Delta, s_{0}, \rho_0, F)$ 
	where
	\begin{itemize}
		\item $S$ is a finite set of control states,
		\item $R$ is a finite set of register variables,
		\item $\Delta \subseteq S \times \Procs \times X \times \Procs \times \formulas \times S$ is a finite set that consists of symbolic transitions of the form $s \xrightarrow{\msgFromToNS{\procA}{\procB}{x \set{\varphi}}} s'$ where the formula $\varphi$ with free variables $R \uplus R' \uplus \set{x}$ expresses a transition constraint that relates the old and new register values ($R$ and $R'$), and the sent value $x$,
		\item $s_0 \in S$ is the initial control state,
		\item $\rho_0 \from R \rightarrow \MsgVals$ is the initial register assignment, and
		\item $F \subseteq S$ is a set of final states.
	\end{itemize}
\end{definition}

In the remainder of the section, we fix a symbolic protocol $\SymProt = (S, R, \Delta, s_{0}, \rho_0, F)$ whose concretization satisfies the GCLTS assumptions.

First, we show how to encode Generalized Receive Coherence (GRC) into \muCLP. 
We denote the four conjuncts on the right-hand side of the implication of GRC by $\text{RHS} \is 
	w \in \pref(\lang(\mathcal{S}_{s_1'})) \land 
	w \wproj_{\Alphabet_{\procB}} = \emptystring 
	\land
	\snd{\procC}{\procB}{\val'} \leq w \wproj_{\AlphAsync_\procC}
	\land
	w \cdot \rcv{\procA}{\procB}{\val} \text{ is channel-compliant}$. 
Let $s_1 \xrightarrow{\msgFromToNS{\procA}{\procB}{\val}} s_2, s_1' \xrightarrow{\msgFromToNS{\procC}{\procB}{\val'}} s_2' \in T$ such that $s_1, s_2$ are simultaneously reachable by $\procB$. 
We first factor GRC into two conditions by doing case analysis on its precondition. 
Let $\textsc{GRC(a)}$ denote $(\procC \neq \procA \!\implies\! \neg~ \exists~w.~\text{RHS}(w))$, and let $\textsc{GRC(b)}$ denote $(\procC = \procA \land \val' \neq \val \implies \neg~ \exists~w.~\text{RHS}(w))$. 
It is clear that $\textsc{GRC(b)}$ is unsatisfiable for network models with FIFO channels. 
In a FIFO channel, reception order follows send order. 
By the semantics of global protocols, per-participant events are totally ordered, and thus $\snd{\procA}{\procB}{\val'}$ by $\procA$ is ordered before $\snd{\procA}{\procB}{\val}$ by $\procA$, which must exist in $w$ because channel compliant words satisfy send-before-receive order. 
The only way in which $\val$ is receivable by $\procB$ in $w$ is if $\procB$ has already received $\val'$, but the conjunct $w \wproj_{\Alphabet_{\procB}} = \emptystring$ says that $\procB$ has no events in $w$. 
In conclusion, $\textsc{GRC(b)}$ need only be checked only for $\netarch = \bag$. 

Next, we show how to encode $\textsc{RHS}(w)$ as a least fixpoint. 
\citet{Li25OopslaOfficial} define a family of predicates $\avail_{\procA, \procB, \blockedset}(x_1, s_2, \boldsymbol{r_2})$ that captures whether $x_1$ may be available as the first message from $\procB$ to $\procA$ in a \ptpbox network, while tracking causal dependencies using $\blockedset$. 
$\blockedset$ tracks the set of participants that are blocked from sending a message because their send action causally depends on $\procB$ first receiving from~$\procC$. 
However, their $\avail$ predicate is specific to the \ptpbox network architecture. 

We present a network-parametric version of $\avail$ below, parametric in a choice of sender $\procA$ and receiver $\procB$ whose message exchange we are searching for, a blocked set $\blockedset$ that tracks causal dependencies between participants, and finally a network architecture $\netarch \in \{\ptpbox,\senderbox,\mailbox,\monobox,\bag\}$. 

\newcommand{\isbag}{\mathit{isbag}}

\begin{definition}[Symbolic Availability]
	\begin{align*}
		\small
		\avail_{\netarch, \procA, \procB, \blockedset}(x_1, s, \boldsymbol{r})
		\is_\mu % \\
		&\phantom{\lor j}(
		\bigvee_{\substack{(s, \,\msgFromToNS{\procC}{\procE}{x \set{\varphi}}, \,s') \in \Delta \\ \procC \in \blockedset %\\ 
				%\buffermap(\procC,\procE) \neq \buffermap(\procA,\procB)
				%\top 
		}}
		\hspace{-2ex}
		\exists x~\boldsymbol{r'}.\, \avail_{\netarch, \procA, \procB, \blockedset \cup \set{\procE}} (x_1, s', \boldsymbol{r'})
		\land 
		\varphi
		\, )  \\
		& \lor(
		\bigvee_{\substack{(s, \,\msgFromToNS{\procC}{\procE}{x \set{\varphi}}, \,s') \in \Delta \\ \procC \notin \blockedset \\ 
				\buffermap(\procC,\procE) \neq \buffermap(\procA,\procB)
				\lor 
				\procE \notin \blockedset \lor \netarch=\bag
		}}
		\hspace{-2ex}
		\exists x~\boldsymbol{r'}.\, \avail_{\netarch, \procA, \procB, \blockedset}(x_1, s', \boldsymbol{r'})
		\land 
		\varphi
		\, ) \\
		&\lor(
		\bigvee_{\substack{(s, \,\msgFromToNS{\procA}{\procB}{x \set{\varphi}}, \,s') \in \Delta \\ \procA \notin \blockedset}}
		\hspace{-2ex}
		\varphi[x_1/x]
		\, ) \enspace.
	\end{align*}
\end{definition}

The last disjunct in the definition handles the cases where the message $x_1$ from $\procA$ is immediately available to be received by $\procB$ in symbolic state $(s,\boldsymbol{r})$ and $\procA$ has not been blocked from sending. The first disjunct skips all transitions where the sender is blocked, and adds the receiver to $\blockedset$ in the recursive call to $\avail$. 
The second disjunct skips transitions where sender is not blocked, but the message exchange does not interfere with the message $x_1$ from $\procA$ to $\procB$. 
That is, one of three scenarios is true: the message exchange occurs at a different channel, or the message exchange occurs at the same channel but the receiver is unblocked, or $\netarch$ is \bag (i.e., $x$ and $x_1$ can be reordered). 

The key observation underpinning our generalized $\avail$ predicate is that the conditions for each disjunct determining how each event encountered along a protocol run ought to be handled can be made fully parametric in the communication topology. 
To demonstrate, observe that when we instantiate $\buffermap(\procC,\procE) \neq \buffermap(\procA, \procB)$ with $\procC \neq \procA \lor \procE \neq \procB$ and $\netarch=\bag$ with $\bot$ for a p2p box network, and remove the assumption that $\procB$ is always an element of $\blockedset$, we obtain Definition 5.5 from \cite{Li25OopslaOfficial}. 

The following equivalences show how to instantiate $\avail$ to decide Generalized Receive Coherence for each homogeneous network architecture. Note that $\textsc{RHS}$ takes as implicit arguments $\procA, \procB, \procC, \val, \val'$ and $s_1'$. 
\begin{proposition}
	For $\netarch \in \set{\ptpbox, \bag}$, $\exists~w.~\textsc{RHS}_\netarch(w) \iff \exists~x_1, s_2', \boldsymbol{r}.~\avail_{\netarch,\procA, \procB, \set{\procB}}(x_1, s_2', \boldsymbol{r})$. 
	For $\netarch \in \set{\senderbox, \monobox, \mailbox}$, $\exists~w.~\textsc{RHS}_\netarch(w) \iff \exists~x_1, s_2', \boldsymbol{r}.~\avail_{\netarch,\procA, \procB, \set{\procB,\procC}}(x_1, s_2', \boldsymbol{r})$. 
\end{proposition}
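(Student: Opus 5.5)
We prove the two equivalences by relating the least fixpoint $\avail$ to a characterization of the words $w$ witnessing $\textsc{RHS}$. Fix $\procA,\procB,\procC,\val,\val'$ and $s_1'$. The witness $w$ is a prefix of a run of $\mathcal{S}_{s_1'}$, contains no $\procB$ events, and $\rcv{\procA}{\procB}{\val}$ must be enabled after it. We first normalise $w$. Since $\procB$ performs no event, any protocol transition in which $\procB$ receives is never completed. By the definition of $\equiv_\Procs$ and the protocol semantics, every later event of that transition's sender, and transitively of participants that must first receive from a blocked sender, is absent from $w$. This is exactly the set $\blockedset$ in $\avail$. Along the synchronous run $\rho$ underlying $w$ (obtained because $w \in \pref(\lang(\mathcal{S}_{s_1'}))$ and \ccfact{} sync-matchedness), we then classify each transition $\msgFromToNS{\procC'}{\procE}{x}$.

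If the sender is blocked, it does not appear in $w$, and the receiver becomes blocked; this is the first disjunct. If the sender is unblocked, the send appears in $w$. In that case the transition must not place a message in front of $\val$ in $\procB$'s buffer that can never be removed. That harmful case arises exactly when the message shares the channel $\buffermap(\procA,\procB)$ with the target, its receiver is blocked, and the buffer is FIFO; this is the side condition of the second disjunct. The message $x_1$ becomes available at the first $\procA\to\procB$ transition with $\procA$ unblocked, which is the third disjunct.

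\textbf{Soundness of $\avail$ ($\Leftarrow$).} Assume the fixpoint holds along some unfolding, that is, a finite path from $s_1'$'s successor with a final $\procA\to\procB$ transition. We construct $w$ by emitting, in path order, the send of every unblocked transition, together with its receive whenever the receiver is unblocked. The run starts at $s_1'$ with the $\procC\to\procB$ transition, whose receiver $\procB$ is blocked. This explains the initial sets. For \ptpbox and \bag, $\set{\procB}$ suffices: there the $\procC\to\procB$ message lies on a different channel, or is reorderable, and so is irrelevant. For the shared-channel architectures, $\procC$ must also be added, so that later $\procC$ sends to $\procB$ are not erroneously treated as harmless. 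We then discharge $\textsc{RHS}$ conjunct by conjunct. The prefix property holds by construction. The condition $w\wproj_{\Alphabet_\procB}=\emptystring$ holds because $\procB\in\blockedset$ throughout. The condition $\snd{\procC}{\procB}{\val'}\le w\wproj_{\AlphAsync_\procC}$ holds because that send is emitted first. For channel compliance of $w\cdot\rcv{\procA}{\procB}{\val}$, we use the buffer axioms: matched insert/remove pairs cancel, unmatched inserts on other channels are irrelevant, and for \bag removes left-commute by \ref{ba:left-com-remove-neq}.

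\textbf{Completeness ($\Rightarrow$).} Given a witness $w$, we take a protocol run $\rho$ compliant with $w$ and induct on its length, maintaining the invariant that $\blockedset$ over-approximates the participants with no further events in $w$. At each transition we show that one of the three disjuncts applies. The key case is an unblocked sender to a blocked receiver on the target channel under FIFO. There, $w$ would contain an unreceived message ahead of $\val$ in the same queue, contradicting channel compliance of the final receive. We also use history insensitivity to drop irrelevant earlier events.

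\textbf{Main obstacle.} The hardest part will be the invariant relating $\blockedset$ to the actual channel contents for the shared-channel topologies. In particular, we must justify that seeding with $\procC$ is both necessary and sufficient. I would try to prove a lemma stating that, for FIFO queues, the head of a shared channel after $w$ is the earliest unmatched send into it, and then align that earliest send with the first non-skipped same-channel transition in the fixpoint unfolding.
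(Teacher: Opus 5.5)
Your overall plan, realising an unfolding of the fixpoint as an emission schedule and conversely walking the run that $w$ agrees with while tracking $\blockedset$, is the natural one. It breaks down for the shared-channel architectures, and with $\avail$ as defined (evaluated at the successor $s_2'$ of $s_1'$), the equivalence as written actually fails there.

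Your completeness step runs together two facts: (a) a sender outside $\blockedset$ really performs its send in $w$, and (b) a receiver inside $\blockedset$ really performs no further event in $w$. Your invariant, stated as an over-approximation, gives (a) but not (b).

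For \ptpbox and \bag only (b) is needed. The only harmful send is an earlier one from $\procA$ to $\procB$, and $\procA$'s own order forces it into $w$. Moreover, (b) holds for the seed $\set{\procB}$ and is preserved by the update rules. So for these two architectures your argument goes through once the invariant is turned around. For \ptpbox you must also assume $\procC\neq\procA$; otherwise the unreceived $\val'$ blocks the channel from $\procA$ to $\procB$ and the seed is wrong.

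\textbf{Senderbox (one FIFO queue per sender, as in the discussion of $\mathcal{S}_b$): (b) fails at the seed.} Take the run $\msgFromTo{\procC}{\procB}{\val'}\cdot\msgFromTo{\procA}{\procC}{m}\cdot\msgFromTo{\procA}{\procB}{\val}$ from $s_1'$. The word $\snd{\procC}{\procB}{\val'}\,\snd{\procA}{\procC}{m}\,\rcv{\procA}{\procC}{m}\,\snd{\procA}{\procB}{\val}$ satisfies $\textsc{RHS}$; it even extends to a complete channel-compliant word. Yet $\avail_{\senderbox,\procA,\procB,\set{\procB,\procC}}$ dies at the first transition, because $\buffermap(\procA,\procC)=\buffermap(\procA,\procB)$ and $\procC\in\blockedset$. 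With one queue per receiver instead, \senderbox coincides with \mailbox and the next example applies.

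\textbf{Mailbox and monobox: (a) fails, because an unblocked $\procD\neq\procA$ may simply not send.} Take the run $\msgFromTo{\procC}{\procB}{\val'}\cdot\msgFromTo{\procD}{\procB}{m}\cdot\msgFromTo{\procA}{\procB}{\val}$. The word $\snd{\procA}{\procB}{\val}\,\snd{\procC}{\procB}{\val'}$ is channel-compliant, agrees with the run, and has $\val$ at the head of the shared queue. Meanwhile $\avail$ dies at $\msgFromTo{\procD}{\procB}{m}$. Reading $\pref$ as prefixes of the asynchronous protocol semantics does not help. Under that reading no $w$ satisfies $\textsc{RHS}$ when $\procC\neq\procA$, since $\procB$ must first receive $\val'$ but $\val$ sits ahead of it. Yet $\avail$ holds on $\msgFromTo{\procC}{\procB}{\val'}\cdot\msgFromTo{\procA}{\procB}{\val}$.

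\textbf{The soundness direction has matching defects.} First, emitting $\snd{\procC}{\procB}{\val'}$ first puts $\val'$ at the head of the queue under \mailbox and \monobox, so $w\cdot\rcv{\procA}{\procB}{\val}$ is not compliant. The send must be emitted last, and that is what seeding $\procC$ permits. Your stated reason is not the right one: there, a later unblocked $\procC\to\procB$ send is already rejected by the side condition of the second disjunct. Second, unmatched inserts on other channels are harmless for the final receive but not for intermediate ones. Under \senderbox, take $\msgFromTo{\procC}{\procB}{\val'}\cdot\msgFromTo{\procD}{\procB}{y}\cdot\msgFromTo{\procD}{\procA}{z}\cdot\msgFromTo{\procA}{\procB}{\val}$. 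Here $\avail$ holds, since both $\procD$-transitions pass the second disjunct because $\buffermap(\procD,\cdot)\neq\buffermap(\procA,\procB)$. But $y$ stays forever at the head of $\procD$'s queue, so $\procA$ never receives $z$ and $\textsc{RHS}$ fails.

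\textbf{What is missing.} You need to distinguish participants whose further sends are stuck (under \senderbox: $\procC$, or $\procD$ after a stuck send) from participants that cannot receive; a single set $\blockedset$ cannot express this. For \mailbox and \monobox you additionally need either the option of letting an unblocked sender idle, or a hypothesis such as Prefix Extensibility, which excludes the independent send to $\procB$. A minor point: in your normalisation, a participant sending to $\procB$ is not blocked. It is the receiver of a message from a blocked sender whose later events are absent.
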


Using the equivalence above and our definitions of $\avail$, we can then define Generalized Receive Coherence as a predicate over symbolic protocols. 
We give the predicate for $\netarch = \senderbox$ as an example. 

\begin{definition}[Symbolic Generalized Receive Coherence for $\senderbox$] 
	\label{cond:sym-receive-coherence}
	A symbolic protocol $\SymProt$ satisfies Symbolic Generalized Receive Coherence for $\senderbox$ when for every pair of transitions $s_1 \xrightarrow{\msgFromToNS{\procA}{\procB}{x_1 \set{\varphi_1}}} s_1' \in \Delta
	$ and $
	s_2 \xrightarrow{\msgFromToNS{\procC}{\procB}{x_2 \set{\varphi_2}}} s_2' \in \Delta$ with $\procA \neq \procC$:
	\[
	\prodreach_\procB(s_1, \boldsymbol{r_1}, s_2, \boldsymbol{r_2})~\land~\varphi_1[\boldsymbol{r_1}\boldsymbol{r_1}'/\boldsymbol{r}\boldsymbol{r'}]~\land~\varphi_2[\boldsymbol{r_2}\boldsymbol{r_2}'/\boldsymbol{r}\boldsymbol{r}']%
	~\land~
	\avail_{\senderbox, \procA, \procB, \set{\procB, \procC}}(x_1, s_2', \boldsymbol{r_2'})
	\implies 
	\bot,
      \]
\end{definition}

Finally, we describe how to check Symbolic Prefix Extensibility. 
First, we establish that a stronger form of prefix extensibility holds for \ptpbox, \bag and \senderbox networks. 
\begin{restatable}{lemma}{StrongPrefixExtensibility}
  For $\netarch \in \set{\ptpbox, \senderbox, \bag}$, for every $\rho \in \AlphSyncSubscript^*$, $w \in \AlphAsyncSubscript^*$ such that $w$ is $\netarch$-channel compliant and agrees with $\rho$, there exists $u \in \AlphAsyncSubscript^*$ such that $wu$ is channel compliant, and $wu \equiv_\Procs \SyncToAsync(\rho)$. 
\end{restatable}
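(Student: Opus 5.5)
The plan is to build the completion $u$ greedily, by induction on the number of events still missing from $w$ relative to $\SyncToAsync(\rho)$, and to exploit the fact that for $\ptpbox$, $\senderbox$ and $\bag$ a receive can only be blocked by messages that are necessarily consumed first by the same receiver.

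Write $\rho = m_1\cdots m_n$ with $m_i = \msgFromTo{\procA_i}{\procB_i}{\val_i}$. Since $w$ agrees with $\rho$, for every participant $\procC$ the projection $w\wproj_{\AlphAsync_\procC}$ is a prefix of $\SyncToAsync(\rho)\wproj_{\AlphAsync_\procC}$. Call an event of $\SyncToAsync(\rho)$ \emph{done} if it is covered by $w$ under this per-participant matching, and \emph{pending} otherwise. The induction measure is the number of pending events. If it is zero, then $w \equiv_\Procs \SyncToAsync(\rho)$ and we take $u=\emptystring$.

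Otherwise, let $i$ be the least index such that $m_i$ has a pending event. There are two cases.
\begin{itemize}
\item If the send $\snd{\procA_i}{\procB_i}{\val_i}$ is pending, it is the next local event of $\procA_i$, because all of $\procA_i$'s earlier events belong to $m_j$ with $j<i$ and are therefore done. We append it; channel compliance is preserved by \ref{asm:cc-send}, which holds via \cref{lem:model-entails-ccfacts} and \baxiom{1}.
\item Otherwise only the receive $\rcv{\procA_i}{\procB_i}{\val_i}$ is pending, and by the same minimality argument it is $\procB_i$'s next local event. Here we must show that $\procB_i$ can actually remove the message.
\end{itemize}
In both cases agreement with $\rho$ is preserved and the pending count drops by one, so the induction hypothesis finishes the argument.

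The main obstacle is the receive case. The step we need is that in the channel state reached by $w$, the message $(\procA_i,\procB_i,\val_i)$ is removable from the channel $\buffermap(\procA_i,\procB_i)$. I would argue this per architecture, describing the contents of that channel through the matching of sends and receives.
\begin{itemize}
\item For $\bag$, a matching send occurs in $w$ (the send of $m_i$ is done) and has not yet been received, since $\procB_i$'s receives in $w$ correspond to earlier $m_j$ by agreement. Multiset removal is therefore defined.
\item For $\ptpbox$, the channel $(\procA_i,\procB_i)$ contains exactly the sends of $\procA_i$ to $\procB_i$ in $w$ that have not been received, in $\procA_i$'s local order. The earliest unreceived one is the send of $m_i$, because every earlier $\procA_i\to\procB_i$ message $m_j$ with $j<i$ has had both events done. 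Hence it sits at the head of the queue.
\item For $\senderbox$ (one channel per receiver $\procB_i$), the queue holds every unreceived message addressed to $\procB_i$ in arrival order. Any such message corresponds to some $m_j$ with $\procB_j=\procB_i$ whose receive is still pending, so $j\ge i$ by the minimality of $i$. I then need that the send of $m_i$ was inserted before the send of every such $m_j$ with $j>i$, and this is the delicate point. Arrival order is global, so I would argue via causality: the pending receive of $m_i$ by $\procB_i$ precedes all of $\procB_i$'s later events in $\rho$.
\end{itemize}

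If this causality argument for $\senderbox$ fails in general, I would instead fall back to choosing $i$ as the least index whose pending receive is at the head of its queue. I would then show by a counting and ordering argument that such an index always exists when $w$ is channel compliant for $\senderbox$, mirroring the fact that the statement deliberately excludes $\mailbox$ and $\monobox$, where the example $\msgFromTo{\procA_1}{\procB}{\val}\cdot\msgFromTo{\procA_2}{\procB}{\val}$ shows that sends into a shared queue can be irreversibly misordered.
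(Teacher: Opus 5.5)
There is a genuine gap in the \senderbox{} case. You read \senderbox{} as one FIFO per receiver. That is what the bullet list of the network-architecture example literally says, but the mailB/senderB labels there are swapped relative to the rest of the paper. Section~2 describes mailbox as the architecture collecting all messages to the same recipient in one buffer.

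Under your reading \senderbox{} coincides with \mailbox{}, and the statement is false. The counterexample you quote yourself applies verbatim: take $\rho = \msgFromTo{\procA_1}{\procB}{\val}\cdot\msgFromTo{\procA_2}{\procB}{\val}$ and $w = \snd{\procA_2}{\procB}{\val}$. Then $\procA_2$'s message sits permanently ahead of $\procA_1$'s in $\procB$'s queue, so no completion exists.

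Your ``delicate point'' is therefore unprovable, not merely delicate. Causality orders $\procB_i$'s own later events after its pending receive, but it says nothing about when other participants insert into $\procB_i$'s queue. The fallback fails for the same reason. After appending $\snd{\procA_1}{\procB}{\val}$, the only pending receive at the head of a queue is that of $m_2$, which is not $\procB$'s next local event, so appending it breaks agreement with $\rho$.

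The intended architecture is $\buffermap(\procA,\procB)=\procA$, one FIFO per sender. This matches the prose description (``receivers share the same channel to receive from a single sender''). It is also what makes $\procC$'s message to $\procD$ block $\procC$'s message to $\procB$ in $\mathcal{S}_b$, and the paper's own proof reasons about ``$\procA$'s senderbox''.

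With that reading your greedy construction closes exactly as in the \ptpbox{} case:
\begin{enumerate}
\item The queue of $\procA_i$ receives $\procA_i$'s sends in $\procA_i$'s local order, i.e.\ in $\rho$-order.
\item FIFO removals form a prefix of this insertion sequence.
\item This prefix has at least as many elements as there are messages of $\procA_i$ preceding $m_i$'s, since each of those belongs to a fully done $m_j$ with $j<i$.
\item The prefix cannot reach $m_i$'s message. The number of removals of $(\procA_i,\procB_i,\val_i)$ equals the number of occurrences of that label among the $m_j$ with $j<i$, because $\procB_i$'s done events are exactly its events in those $m_j$.
\end{enumerate}
Hence $m_i$'s message is at the head of the queue.

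Once fixed, your argument takes a different and somewhat more elementary route than the paper's. The paper inducts on the length of $w$, takes the completion $u' = u_1 x u_2$ of $w'$ from the induction hypothesis, and fast-forwards the new event $x$ past $u_1$ using per-architecture commutation claims. Your forward completion along $\rho$ needs only a head-of-queue property of the sender-indexed channel. Your \ptpbox{} and \bag{} cases are fine as written.
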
 
  
The proof is in \cref{app:symbolic-proofs}. Prefix Extensibility constrains the runs of a global protocol, and requires that every channel-compliant word that partially completes a run can be extended to complete some run.
Lemma 7.7 states that for the aforementioned network architectures, every channel-compliant word that partially completes any synchronous word can be extended to complete the same synchronous word. 
Clearly, Lemma 7.7 implies Prefix Extensibility. 
The key fact enabling Lemma 7.7 lies in the ability to fastforward the first event $z$ with active participant $\procA$ ahead of any word $u_1$ while maintaining channel compliance, under the condition that $u_1$ contains no events with $\procA$ as active participant. 
The examples given in \cref{sec:characterization} serve as easy counterexamples to Lemma 7.7 for \mailbox and \monobox channel compliance. 
Thus, for these two networks, we must check prefix extensibility on runs of the protocol under consideration explicitly. 

To motivate our encoding of prefix extensibility, we first observe that any \monobox or \mailbox channel compliant word is also \ptpbox channel-compliant. Thus, given a word \monobox or \mailbox channel compliant word $w$, let $u$ be its completion such that $wu$ is \ptpbox channel-compliant. 
On the contrary, not every \ptpbox channel-compliant word is \monobox or \mailbox channel-compliant. 
We focus our attention on events that \monobox or \mailbox channel compliance can order, but that \ptpbox cannot. 
Both \monobox and \mailbox networks can order independent sends to a \emph{single} receiver, whose total order is fixed by $\rho$. Thus, the construction of $u$ fails when it encounters such independent sends. 
Monobox networks can additionally order independent sends to different receivers. 
In a \ptpbox network, only independent sends to different receivers that are not causally dependent can be reordered.
A key observation here is that introducing a causal dependency between independent sends to different receivers necessarily involves ordering independent sends to a single receiver, and from this we conclude the \monobox and \mailbox prefix extensibility are equivalent.

Thus, to check prefix extensibility for both \mailbox and \monobox, one must check that when a message is sent, there cannot exist a different message sent to the same \emph{receiver} that is available. 
The recursive predicate required is similar in spirit to the $\avail$ predicate for the various network models. 
Formally, it is sufficient to ensure that from any transition $s_1 \xrightarrow{\msgFromToNS{\procA}{\procB}{x\set{\varphi}}} s_2 \in \Delta$, there does not exist a run passing through $s_2$ that subsequently passes through a transition $s_3 \xrightarrow{ \msgFromToNS{\procC}{\procB}{x\set{\varphi}}} s_4$ such that $\procC \neq \procA$'s send does not depend on $\procA$ or $\procB$. We capture this in the predicate $\avail'$. 

\newcommand{\othermbavail}{\mathsf{avail'}}
\begin{definition}[Symbolic Availability]
	\begin{align*}
		\small
		\othermbavail_{\procB, \blockedset}(s, \boldsymbol{r})
		\is_\mu % \\
		&\phantom{\lor j}(
		\bigvee_{\substack{(s, \,\msgFromToNS{\procC}{\procE}{x \set{\varphi}}, \,s') \in \Delta \\ \procC \in \blockedset }}
		\hspace{-2ex}
		\exists x~\boldsymbol{r'}.\, \othermbavail_{\procB, \blockedset \cup \set{\procE}} (s', \boldsymbol{r'})
		\land 
		\varphi
		\, )  \\
		& \lor(
		\bigvee_{\substack{(s, \,\msgFromToNS{\procC}{\procE}{x \set{\varphi}}, \,s') \in \Delta \\ \procC \notin \blockedset \\ \procE \neq \procB \lor \procE \notin \blockedset}}
		\hspace{-2ex}
		\exists x~\boldsymbol{r'}.\, \othermbavail_{\procB, \blockedset}(s', \boldsymbol{r'})
		\land 
		\varphi
		\, )  \;\lor\; (
		\bigvee_{\substack{(s, \,\msgFromToNS{\procC}{\procB}{x \set{\varphi}}, \,s') \in \Delta \\ \procC \notin \blockedset}}
		\hspace{-2ex}
		\exists x~\boldsymbol{r'}.~\varphi
		\, ) \enspace.
	\end{align*}
\end{definition}

We additionally require a predicate capturing all reachable register and variable assignments to a given control state, from \cite{DBLP:conf/cav/LiSWZ25}. 
\begin{definition}[Reachability in symbolic protocol]
	\label{def:reach}
	Let $s \in S$. Then, 
	\vspace{-1ex}
	{ \small 
		\begin{align*}
			& 
			\reach(s', \boldsymbol{r'}) \is_\mu 
			\quad
			(\,
			s' = s_0 \land \boldsymbol{r'} = \rho_0 
			\,)
			% & 
			\; \lor \;
			(
			\bigvee_{\substack{
					(s, \,\msgFromToNS{\procA}{\procB}{x \set{\varphi}}, \,s') \in \Delta }}
			% &
			\hspace{-2ex}\exists x~\boldsymbol{r}.\, \reach(s,r) 
			\land 
			\varphi
			\,)
			\enspace.
		\end{align*}
	}
\end{definition} 

With this at hand, we present Symbolic Prefix Extensibility for $\netarch \in \set{\mailbox, \monobox}$. 

\begin{definition}[Prefix extensibility for mailbox, monobox]
	A symbolic protocol $\SymProt$ satisfies $\netarch$-prefix extensibility for $\netarch \in \{\mailbox,\monobox\}$ when for every transition $s \xrightarrow{\msgFromToNS{\procA}{\procB}{x \set{\varphi}}} s' \in \Delta$:
	\[
	\reach(s', \boldsymbol{r'})
	~\land~
	\othermbavail_{\procB, \set{\procA,\procB}}(s', \boldsymbol{r'})
	\implies 
	\bot
	\enspace. 
	\]
\end{definition}

\subsection{Implementability relationships}
\label{sec:impl-relationships}
\label{sec:separation-examples}
From the network-parametric definition of $\avail$, it is clear that \senderbox-$\avail$ implies \ptpbox-$\avail$ implies \bag-$\avail$. 
Because $\avail$ appears in a negative position in Generalized Receive Coherence, and the other two coherence conditions are network-agnostic, we obtain the following.

\begin{lemma}[Implementability relationships]
	\label{lm:implementability-relationships}
	Any \bag-implementable global protocol is \ptpbox-implementable, and any \ptpbox-implementable global protocol is \senderbox-implementable. 
\end{lemma}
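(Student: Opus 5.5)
We would prove the lemma by combining \cref{thm:preciseness} with \cref{lem:model-entails-ccfacts}. All three architectures \bag, \ptpbox and \senderbox lie in $\netarchs$, so for each of them implementability coincides with the Generalized Coherence Conditions. It therefore suffices to show that each condition transfers along the chain \bag $\Rightarrow$ \ptpbox $\Rightarrow$ \senderbox.

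Send Coherence and No Mixed Choice mention neither the network nor channel compliance, so they transfer trivially. For Prefix Extensibility, the Strong Prefix Extensibility lemma stated above holds for every $\netarch \in \set{\ptpbox,\senderbox,\bag}$ and every $\rho$. Hence Prefix Extensibility is satisfied by every protocol under these three architectures and imposes no constraint. The only real work is Generalized Receive Coherence.

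For Generalized Receive Coherence, the direction we need is: whenever $\textsc{RHS}_{\senderbox}(w)$ is witnessed, a witness of $\textsc{RHS}_{\ptpbox}$ exists; and whenever $\textsc{RHS}_{\ptpbox}$ is witnessed, a witness of $\textsc{RHS}_{\bag}$ exists. Taking contrapositives, GRC under \bag implies GRC under \ptpbox, which implies GRC under \senderbox. Two routes are available.

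\emph{Semantic route (preferred).} The first three conjuncts of $\textsc{RHS}$ are network-independent. It would then suffice to prove the inclusion $\lang(\senderbox) \subseteq \lang(\ptpbox) \subseteq \lang(\bag)$. This fails as stated, because \senderbox can enable receives that \ptpbox blocks and vice versa. So we restrict to words whose per-participant projections are prefixes of $\SyncToAsync$ of protocol runs.

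The \ptpbox $\subseteq$ \bag part is straightforward. The per-channel FIFO contents can be mapped to their multiset, and we would check step by step that every defined FIFO insert or remove remains defined for multisets. After the $\buffermap$ collapse noted in \cref{ex:netarchs}, this gives $\lang(\ptpbox) \subseteq \lang(\bag)$ outright.

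The \senderbox-to-\ptpbox step is the main obstacle. Fix a \senderbox-compliant $w \cdot \rcv{\procA}{\procB}{\val}$ with $w \wproj_{\AlphAsync_\procB} = \emptystring$. In this word, every message from $\procA$ preceding $\val$ in $\procA$'s sender queue has been removed by its receivers. Since $\procB$ is inactive in $w$, none of these messages is destined for $\procB$. Restricting the \senderbox queue of $\procA$ to destination $\procB$ then yields a \ptpbox queue with $\val$ at its head.

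More generally, we would show by induction on $w$ that projecting each sender queue onto each receiver yields exactly the \ptpbox channel state. The point is that \senderbox removal at the head of a queue is also removal at the head of its projection onto that receiver.

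\emph{Symbolic route (backup).} If the semantic route needs invariants beyond the above, we would fall back on the symbolic characterization. We would use the Proposition expressing $\exists w.~\textsc{RHS}_\netarch(w)$ via $\avail$, and prove by fixpoint induction that $\avail_{\senderbox,\procA,\procB,\set{\procB,\procC}}$ implies $\avail_{\ptpbox,\procA,\procB,\set{\procB}}$, which in turn implies $\avail_{\bag,\procA,\procB,\set{\procB}}$. This amounts to checking disjunct by disjunct that the side conditions weaken and that enlarging $\blockedset$ only shrinks the least fixpoint (monotonicity in $\blockedset$). The delicate point here is the \senderbox side condition $\buffermap(\procC,\procE) \neq \buffermap(\procA,\procB)$, which is stronger than the \ptpbox one since fewer pairs share a channel under p2p. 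The \ptpbox side condition is therefore implied, and the second disjunct transfers.
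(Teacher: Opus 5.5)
Your proposal is correct. Your backup symbolic route is essentially the paper's argument. The paper observes that in the network-parametric $\avail$ the side condition of the second disjunct weakens from \senderbox to \ptpbox to \bag, so \senderbox-$\avail$ implies \ptpbox-$\avail$ implies \bag-$\avail$. Since $\avail$ occurs negatively in Generalized Receive Coherence and the other conditions are network-agnostic, the inclusions follow.

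Your preferred semantic route is genuinely different. You prove $\lang(\senderbox) \subseteq \lang(\ptpbox) \subseteq \lang(\bag)$ directly on buffer states, so every witness $w$ of $\textsc{RHS}$ for the smaller architecture is itself a witness for the larger one. This is more elementary and self-contained in three ways:
\begin{itemize}
\item It avoids the Proposition relating $\textsc{RHS}$ to $\avail$, which the paper states without proof.
\item It avoids the mismatch between the blocked sets $\set{\procB,\procC}$ and $\set{\procB}$. In your symbolic route this mismatch forces an extra monotonicity-in-$\blockedset$ argument. That argument needs the observation that $\avail$ is false once $\procA\in\blockedset$, since blocked sets only grow and the base disjunct requires $\procA\notin\blockedset$.
\item You discharge Prefix Extensibility explicitly via Strong Prefix Extensibility, which the paper's one-line proof passes over.
\end{itemize}
The paper's route, for its part, is phrased directly on the symbolic encoding that the tool checks.

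One correction: your remark that the inclusion ``fails as stated, because \senderbox can enable receives that \ptpbox blocks'' is false. The head of a sender queue is also the head of that queue's projection onto each receiver, so every \senderbox-enabled receive is \ptpbox-enabled. This is exactly the invariant your own induction proves. Hence $\lang(\senderbox)\subseteq\lang(\ptpbox)$ holds unconditionally. The restriction to protocol-compliant words, and the special argument using $w\wproj_{\AlphAsync_\procB}=\emptystring$, are unnecessary.

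Similarly, if $w\in\pref(\lang(\mathcal{S}_{s_1'}))$ is read as prefixes of the protocol semantics, the first conjunct of $\textsc{RHS}$ is not network-independent as you claim. It depends on $\netarch$ only through intersection with $\lang(\netarch)$, hence monotonically, so your inclusion transfers it as well.
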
 

The strictness of these inclusions is witnessed by examples $G_a$ and $G_b$ in \cref{fig:bag-neq-p2p} and \cref{fig:p2p-neq-senderbox}. 

As a consequence of the equivalence of \mailbox and \monobox prefix extensibility, we obtain the following: 
\begin{corollary}
	A global protocol is \mailbox-implementable if and only if it is \monobox-implementable. 
\end{corollary}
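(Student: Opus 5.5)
By \cref{thm:preciseness} and \cref{lem:model-entails-ccfacts}, both \mailbox and \monobox lie in $\netarchs$. Hence $\mathcal{S}$ is $\netarch$-implementable for $\netarch \in \set{\mailbox,\monobox}$ if and only if it satisfies the Generalized Coherence Conditions under $\netarch$. The plan is to compare the four conditions one at a time and show each holds under \mailbox exactly when it holds under \monobox.

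Send Coherence and No Mixed Choice mention only simultaneous reachability and transitions of $\mathcal{S}$. They are network-agnostic, so nothing needs to be shown. For Prefix Extensibility, I invoke the equivalence of \mailbox and \monobox prefix extensibility argued just before the symbolic definition. Both are captured by the same predicate $\reach(s',\boldsymbol{r'}) \land \othermbavail_{\procB,\set{\procA,\procB}}(s',\boldsymbol{r'}) \implies \bot$, and $\othermbavail$ does not depend on the topology. The underlying reason is the observation stated there. A \monobox-only reordering of independent sends to different receivers can only be promoted into a violation if it creates a causal dependency. Such a dependency in turn forces an independent reordering of two sends to a single receiver, and that reordering is already a \mailbox violation.

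The remaining condition is Generalized Receive Coherence, which I handle through the earlier proposition. For both architectures, $\exists w.\,\textsc{RHS}_\netarch(w)$ is equivalent to satisfiability of $\avail_{\netarch,\procA,\procB,\set{\procB,\procC}}$. I then compare the two instances of $\avail$. They differ only in the side condition $\buffermap(\procC',\procE) \neq \buffermap(\procA,\procB) \lor \procE\notin\blockedset$ of the second disjunct, and then only for transitions with $\procE \ne \procB$. For such a transition, \mailbox makes the first clause true. Under \monobox the first clause is false, so the transition can be skipped only if $\procE\notin\blockedset$. I have to show the extra blocked paths under \monobox do not change satisfiability, given Prefix Extensibility. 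Suppose a \monobox run is blocked at an exchange $\msgFromTo{\procC'}{\procE}{}$ with $\procC'\notin\blockedset$ and $\procE \in \blockedset$. In the real word, the message to $\procE$ then sits in the single channel ahead of the target message. In the \mailbox setting, that same configuration amounts to an independent send to the blocked receiver $\procE$ being available. I expect Prefix Extensibility, which both architectures share by the previous step, to rule out such a configuration. Concretely, I would show that if $\othermbavail$ is unsatisfiable, then no run of $\avail$ ever meets a send from an unblocked sender to a blocked receiver $\procE\neq\procB$. Under that invariant, the two $\avail$ instances coincide on every reachable state.

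This last step is the main obstacle. It requires an induction over the $\mu$-unfolding of $\avail$ that keeps the blocked set $\blockedset$ in step with the blocked set of $\othermbavail$. It also requires care with the initial seeds $\set{\procB,\procC}$ versus $\set{\procA,\procB}$. Should that alignment fail, the fallback is to argue at the level of words: any \monobox-channel-compliant witness $w\cdot\rcv{\procA}{\procB}{\val}$ for a Generalized Receive Coherence violation can be reordered using the buffer axioms of \cref{def:buffer-axioms}. The reordered word is either \mailbox-compliant or yields a \mailbox Prefix Extensibility violation, and the converse direction is immediate because \monobox-compliance implies \mailbox-compliance. Combining the four conditions then gives the corollary.
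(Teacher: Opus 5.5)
Your reduction through \cref{thm:preciseness}, your observation that Send Coherence and No Mixed Choice are network-agnostic, and your appeal to the equivalence of \mailbox and \monobox Prefix Extensibility all match the paper, whose one-line justification is that equivalence. The gap is in your Generalized Receive Coherence step.

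You are right that the two $\avail$ instances differ, but the invariant you plan to prove is false because of the seed $\procC$. Suppose the only transition after $s_2'$ is $\msgFromTo{\procD}{\procC}{m}$ with $\procD \notin \set{\procB,\procC}$. Then $\mathsf{avail'}_{\procB,\set{\procC,\procB}}(s_2')$ is false, and Prefix Extensibility can hold. Yet the $\avail$ run meets a send from the unblocked $\procD$ to the blocked receiver $\procC \neq \procB$, which is exactly the transition \mailbox skips and \monobox cuts. So the induction you outline cannot be carried out as stated. Your fallback also runs in the wrong direction. A \monobox-compliant witness is already \mailbox-compliant, so nothing needs reordering there. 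The non-trivial case starts from a \mailbox witness.

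The missing idea is that, for these two architectures, Prefix Extensibility already implies Generalized Receive Coherence, so the $\avail$ instances never need to be compared.
\begin{itemize}
\item Let $w$ witness a violation for $s_1 \xrightarrow{\msgFromToNS{\procA}{\procB}{\val}} s_2$ and $s_1' \xrightarrow{\msgFromToNS{\procC}{\procB}{\val'}} s_2'$, agreeing with a run prefix $\rho$ from $s_1'$. By sender-driven choice and $\snd{\procC}{\procB}{\val'} \leq w \wproj_{\AlphAsync_\procC}$, $\rho$ begins with $\msgFromTo{\procC}{\procB}{\val'}$.
\item Let $\pi$ be a path to $s_1'$. Since $\SyncToAsync(\pi)$ leaves all channels empty, $\SyncToAsync(\pi)\,w$ is channel-compliant and agrees with $\pi\rho$.
\item After $\SyncToAsync(\pi)\,w$, the FIFO buffer holding $\procB$'s incoming messages (its mailbox, resp.\ the single global queue) has $(\procA,\procB,\val)$ at its head.
\item Only a receive by $\procB$ can dequeue that head. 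But $\procB$ has no events in $w$, so its next prescribed event is $\rcv{\procC}{\procB}{\val'}$, and $(\procC,\procB,\val') \neq (\procA,\procB,\val)$.
\item Hence no $u$ makes $\SyncToAsync(\pi)\,w\,u$ channel-compliant with $\SyncToAsync(\pi)\,w\,u \equiv_\Procs \SyncToAsync(\pi\rho)$. This contradicts Prefix Extensibility.
\end{itemize}
Symbolically, any derivation of $\avail_{\mailbox,\procA,\procB,\set{\procB,\procC}}(x_1, s_2', \boldsymbol{r})$ is, disjunct by disjunct, a derivation of $\mathsf{avail'}_{\procB,\set{\procC,\procB}}(s_2',\boldsymbol{r})$. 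That is a Prefix Extensibility violation at the transition $s_1' \to s_2'$, which also removes the seed mismatch you worried about. The \monobox instance in turn implies the \mailbox one. So under either architecture the Generalized Coherence Conditions reduce to Send Coherence, No Mixed Choice and Prefix Extensibility, and the corollary follows from the Prefix Extensibility equivalence alone.
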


We further conjecture that the \mailbox/\monobox class is contained within the \ptpbox class. 

We note that the relationship between the \senderbox, \ptpbox and \bag network architectures induced by \cref{lm:implementability-relationships} coincide with the relationship between their semantics, defined as sets of MSCs or sets of executable traces~\cite{DBLP:journals/fac/ChevrouHQ16,DBLP:journals/pacmpl/GiustoFLL23}. 
We revisit this point in further detail in \S\ref{sec:related}. 

\subsection{Complexity of finite fragments}
\label{sec:complexity}
We generalize the co-NP-completeness complexity result from \cite{Li25OopslaOfficial} to the other network architectures, and include a detailed discussion in the appendix~\cref{app:complexity}. 
We additionally show that the full generality of GCLTS is not required for the co-NP-hardness lower bound; in particular, the reduction holds even for global types with directed choice~\cite{DBLP:conf/popl/HondaYC08}. 

\begin{theorem}
	For $\netarch \in \set{\senderbox, \mailbox, \monobox, \bag}$, implementability of finite protocols is co-NP complete. 
\end{theorem}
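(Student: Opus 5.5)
We prove the two directions separately, following the standard pattern for co-NP-completeness of implementability for \ptpbox in \cite{Li25OopslaOfficial}.

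\textbf{Membership in co-NP.} Fix $\netarch \in \set{\senderbox, \mailbox, \monobox, \bag}$ and a finite protocol $\mathcal{S}$. By \cref{lem:model-entails-ccfacts} and \cref{thm:preciseness}, $\mathcal{S}$ is implementable iff it satisfies the Generalized Coherence Conditions. We show that the complement, a violation of one of the four conditions, has a polynomial-size certificate checkable in polynomial time.

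\begin{itemize}
\item \emph{Send Coherence and No Mixed Choice.} A violation is a pair of simultaneously reachable states for some $\procA$ with offending outgoing transitions. Simultaneous reachability is decidable in polynomial time by reachability in the product of $\mathcal{S}$ with itself synchronized on $\AlphSync_\procA$-labels. So the certificate is just the pair of states and transitions.
\item \emph{Generalized Receive Coherence.} A violation is such a pair together with a word $w$. We replace $w$ by a witness of the least-fixpoint predicate $\avail_{\netarch,\procA,\procB,\blockedset}$, instantiated with $\blockedset = \set{\procB}$ for \bag and $\set{\procB,\procC}$ otherwise, via the equivalence proposition above. Since $\blockedset$ only grows, a witness is a path in $\mathcal{S}$ with at most $\card{\Procs}$ increments of $\blockedset$. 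Between increments we may assume the path is simple, since repeated states with the same $\blockedset$ can be cut. Hence the path has polynomial length, and checking it against the side conditions of $\avail$ is polynomial.
\item \emph{Prefix Extensibility.} For \senderbox and \bag this holds automatically by the strong prefix extensibility lemma. For \mailbox and \monobox, a violation is witnessed by a transition $s\to s'$, a path to $s$, and an $\othermbavail$-witness path from $s'$. Both have polynomial length by the same pumping argument.
\end{itemize}

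The main technical point is to confirm that the finite-state instantiation of each fixpoint predicate is exactly equivalent to the semantic condition. For the $\avail$ predicate this is the proposition already stated. For $\othermbavail$ it is the informal argument above: \mailbox and \monobox violations arise exactly from an independent send to the same receiver. That argument must be turned into a precise equivalence, and this is the step I expect to take the most care.

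\textbf{co-NP-hardness.} We reduce from 3-SAT complement, following the reduction of \cite{Li25OopslaOfficial} but using only directed-choice global types. Given a formula $\varphi$, we build a protocol in which a sender $\procA$ chooses a truth assignment through a sequence of binary choices. These choices are forwarded along a chain to participants, so that, for each clause, a participant $\procB$ faces two simultaneously reachable branches. The two branches violate Receive Coherence (for instance, $\procB$ receives $\val$ from two different senders with no causal ordering) exactly when the chosen assignment satisfies the clause literals along that branch. All other communications are arranged to be causally chained. This keeps Send Coherence, No Mixed Choice and Prefix Extensibility trivially true under every $\netarch$; for \mailbox and \monobox, every send to a common receiver is made causally ordered. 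The chain of forwarded messages makes the availability condition network-insensitive, so the same construction works uniformly for all four architectures. Then $\varphi$ is satisfiable iff the protocol is not implementable.

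The hardest part here is checking network independence: by \cref{lm:implementability-relationships}, \senderbox is the most permissive class, so the violation gadget must still fire under \senderbox, where a blocking message on a shared sender channel could hide the conflicting message. I would handle this by having the two conflicting senders $\procA$ and $\procC$ send nothing else after the conflicting messages, so no blocking occurs.
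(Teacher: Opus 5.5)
Your co-NP membership argument is correct and matches the paper's. The paper only asserts that witnesses for $\avail$ (for \senderbox and \bag) and for $\othermbavail$ (for \mailbox and \monobox) are polynomially checkable. Your observation that $\blockedset$ grows monotonically along a witness path, so the path can be cut to polynomial length, supplies that detail.

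The hardness half, however, has a genuine gap. The construction is never pinned down, and the structure you describe encodes the wrong Boolean function. Suppose each clause gets its own pair of simultaneously reachable branches whose Receive Coherence fails when the assignment satisfies that clause. Then a violation exists as soon as some assignment satisfies some clause, which is a disjunction over clauses, not a conjunction. The missing idea is in the reduction of \cite{Li25OopslaOfficial} that the paper reuses:
\begin{itemize}
\item there is a single conflicting pair, $q_4$ versus $q_2$ for $\procB$;
\item truth values are encoded by causal blocking: in $\mathcal{S}_X$, one of $x_i,\overline{x_i}$ receives from the blocked participant $\procB$;
\item the conjunction comes from sequencing all clause gadgets before the one final send $\msgFromTo{\procA}{\procB}{\val}$.
\end{itemize}
That final message is then available without $\procB$ acting only if every selected literal participant is unblocked.

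Your claim that the construction works uniformly also fails concretely, architecture by architecture.

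For \mailbox and \monobox the plan defeats itself. A Receive Coherence violation between $\procA$ and $\procC$ needs a word from $s_1'$ that contains $\snd{\procC}{\procB}{\val'}$ and has $\procA$'s message at the head of $\procB$'s queue while $\procB$ is inactive. So $\procA$'s send must precede, and hence be causally independent of, $\procC$'s send to the same receiver in the same run. That is exactly a Prefix Extensibility violation, and the same-sender case is vacuous for FIFO queues. So if every pair of sends to a common receiver is causally ordered, as you propose, the intended violation never fires and the protocol is implementable for every $\varphi$. For these architectures the dependence on $\varphi$ has to sit in Prefix Extensibility itself, which is why the paper switches to $\othermbavail$ there.

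For \bag, you only list Send Coherence, No Mixed Choice and Prefix Extensibility as the conditions to keep true. But the network-sensitive burden is Generalized Receive Coherence at all other pairs of simultaneously reachable states. Under this least permissive architecture those pairs include same-sender pairs and overtaking, and the paper has to change the payloads $\procC$ sends to $\procA$ to avoid a spurious violation.

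For \senderbox your worry is well founded. With blocked set $\set{\procB,\procC}$, anything that must be received from $\procC$ after its send to $\procB$ is stuck. In the figure the paper reuses, $\procC$ keeps driving $\mathcal{S}_X$ and $\mathcal{S}_C$, while the appendix only inspects messages from $\procA$. Silencing $\procC$ afterwards is the right kind of repair. For $\procA$, though, the relevant condition is different: no earlier send of $\procA$ may sit unreceived ahead of the conflicting message, i.e., $\procA$ must never send to an already blocked participant. ``Sending nothing after'' does not ensure this, and it matters here because your $\procA$ is also the assignment chooser.
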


\begin{corollary}
  For $\netarch \in \set{\ptpbox, \senderbox, \mailbox, \monobox, \bag}$, implementability of global multiparty session types with directed choice is co-NP complete. 
\end{corollary}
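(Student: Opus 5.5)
The claim has two halves: membership in co-NP and co-NP-hardness. I would prove them separately, and in both cases reduce the work to results already stated.

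\emph{Upper bound.} A global multiparty session type with directed choice unfolds into a finite LTS over $\AlphSyncSubscript$. This LTS satisfies the GCLTS assumptions:
\begin{itemize}
\item sink finality holds because $\mathsf{end}$ has no continuation;
\item sender-driven choice holds because directed choice fixes a single sender, and in fact a single sender–receiver pair, with distinct labels;
\item deadlock freedom holds by construction.
\end{itemize}
Hence the fragment is a special case of finite protocols. Membership in co-NP then follows from the preceding theorem for $\senderbox,\mailbox,\monobox,\bag$, and from the corresponding result of \cite{Li25OopslaOfficial} for $\ptpbox$. In each case, a certificate of non-implementability is a polynomial-size witness violating one of the Generalized Coherence Conditions.

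\emph{Lower bound.} I would give a single reduction from 3-SAT to non-implementability that yields a directed-choice global type and works uniformly for all five architectures. Given a formula over $x_1,\dots,x_n$ with clauses $C_1,\dots,C_m$, the type starts with a directed choice by a participant $\procA$ to $\procB$ between two branches, tagged $\msgO$ and $\msgB$. In each branch, $\procA$ then guesses an assignment by a chain of directed choices $\procA \to \procB$ of $\mathsf{t}_i/\mathsf{f}_i$. For each clause, a relay participant checks that the clause is satisfied by a literal and forwards a token. All of these sends are causally chained: every send is triggered by the receipt of the previous message. Only after all $m$ clause checks succeed do the two branches end differently, with two messages to a common receiver $\procD$ from distinct senders $\procC \neq \procE$. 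This shape mirrors $\mathcal{S}_b$ from \cref{fig:p2p-neq-senderbox}. In one branch the relevant send is causally independent of $\procD$ and of the other sender; in the other it is not. Paths whose guessed assignment falsifies a clause are routed to a common tail, where all sends to every receiver remain totally causally ordered.

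Two facts then need to be verified.
\begin{enumerate}
\item If the formula is satisfiable, the satisfying path makes $\procD$'s two states simultaneously reachable. It also yields a $w$ with $w\wproj_{\AlphAsync_\procD}=\emptystring$ in which the message from the other sender is receivable. This violates Generalized Receive Coherence. Because the two senders differ, the messages sit in different channels under $\ptpbox$, $\senderbox$ and $\bag$. Under $\mailbox$ and $\monobox$ the independent sends to $\procD$ instead violate Prefix Extensibility. Either way, \cref{thm:preciseness} together with \cref{lem:model-entails-ccfacts} gives non-implementability.
\item If the formula is unsatisfiable, no run reaches the concurrent tail. Send Coherence and No Mixed Choice are network-agnostic and hold by the directed, deterministic shape of the type. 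Generalized Receive Coherence holds for $\senderbox$, the most permissive network for $\avail$. By \cref{lm:implementability-relationships} it then also holds for $\ptpbox$ and $\bag$; more precisely, I would show directly that the $\bag$-$\avail$ predicate is unsatisfiable, since every inspected send is causally blocked. Prefix Extensibility holds trivially for $\ptpbox,\senderbox,\bag$ by the strong prefix extensibility lemma. For $\mailbox/\monobox$ it holds because no two causally independent sends ever target the same receiver, so $\othermbavail$ is unsatisfiable.
\end{enumerate}

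\emph{Main obstacle.} The hard step is making the gadget network-robust. The reduction of \cite{Li25OopslaOfficial} is tuned to $\ptpbox$. I must ensure that on unsatisfiable instances the stricter condition for $\bag$ (reorderings within a channel) and the extra Prefix Extensibility requirement for $\mailbox/\monobox$ do not introduce spurious violations. I would handle this by enforcing full causal chaining of all sends outside the final gadget and checking each disjunct of $\avail$ and $\othermbavail$ against this invariant. The satisfiable direction is comparatively routine: it amounts to exhibiting the witness word.
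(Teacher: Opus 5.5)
Your membership argument is fine and matches the paper: a directed-choice global type unfolds into a polynomial-size GCLTS, so the finite-protocol upper bounds apply. The hardness half, however, has a genuine gap. Your reduction puts the satisfiability test into the \emph{control flow} of the type. Assignments falsifying a clause are ``routed to a common tail'', and the concurrent tail is reached iff the formula is satisfiable. But reachability of a state in a finite global type is plain graph reachability, decidable in time polynomial in the size of the type. If your chain of $\mathsf{t}_i/\mathsf{f}_i$ choices re-merges after each variable, the type no longer remembers the assignment when clauses are examined, so it cannot route falsifying paths anywhere. If it does not re-merge, it has $2^n$ branches and the reduction is exponential. A polynomial-time construction with your claimed property would decide SAT in polynomial time. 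For the same reason, enforcing full causal chaining of all sends outside the final gadget does not make the construction robust. It removes the very concurrency from which hardness has to come.

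The missing idea is that the assignment must be recorded in the causal structure of asynchronous executions, not in the protocol's control state. The paper reuses the construction of \cite{Li25OopslaOfficial} (\cref{fig:3-sat-reduction}), in which branches re-merge after every gadget and the protocol checks nothing. Instead, $\mathcal{S}_X$ makes exactly one of $x_i,\overline{x_i}$ causally dependent on the blocked receiver $\procB$, so that participant enters the set $\blockedset$ tracked by $\avail$. Then $\mathcal{S}_C$ drags $\procA$ into $\blockedset$ whenever it waits on a blocked literal participant. Hence the highlighted $\msgFromTo{\procA}{\procB}{\val}$ is available at $q_2$ iff $\varphi$ is satisfiable, and the exponentially many possible blocked sets are where the hardness lives.

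The paper's proof of the corollary then only patches this construction:
\begin{itemize}
\item modified payloads on $\procC \to \procA$, to rule out spurious Receive Coherence violations for $\procA$ under \bag;
\item an extra exchange $\msgFromTo{\procC}{x_a}{\val}$, so that $\procA$'s mailbox cannot reorder independent sends under \mailbox and \monobox;
\item an extra exchange $\msgFromTo{\redprocC}{\redprocD}{\cdot}$ at each branching point of the gadgets, so that every choice is directed.
\end{itemize}

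Two smaller errors remain. First, you apply \cref{lm:implementability-relationships} backwards. Since \senderbox-$\avail$ implies \ptpbox-$\avail$ implies \bag-$\avail$, Generalized Receive Coherence for \senderbox transfers to neither \ptpbox nor \bag. Your fallback of refuting \bag-$\avail$ directly, together with the same-sender case, is what is actually needed. Second, a tail shaped like $\mathcal{S}_b$ is precisely one whose violation disappears under \senderbox, so it cannot give the uniform violation you claim.
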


%%% Local Variables:
%%% mode: latex
%%% TeX-master: "main"
%%% End:

\section{Implementation and evaluation}
\label{sec:evaluation}

We extend the \toolname~\cite{DBLP:conf/cav/LiSWZ25} protocol implementability checker and verification tool so that it is parametric in the network architecture.
The original \toolname takes a symbolic global protocol $\SymProt$ as introduced in \cref{sec:symbolic} as input. It then checks whether $\SymProt$ represents a GCLTS and if so, whether it is implementable. The implementability check uses an encoding of the Coherence Conditions of~\cite{Li25OopslaOfficial} to \muCLP instances. \muval is used as the backend solver for the generated muCLP instances. \toolname also provides rudimentary support for verifying protocol-specific properties.

Our extension of \toolname replaces the encoding of \Characterization from~\cite{Li25OopslaOfficial} with an encoding of our new network-parametric Generalized \Characterization, following the discussion in \cref{sec:symbolic}. We instantiate the encoding for the five concrete network architectures $\netarch \in \{\ptpbox,\senderbox,\mailbox,\monobox,\bag\}$. The new tool, \newtoolname, allows users to select from one of these architectures when invoking the tool.

Absent any implementation bugs unbeknownst to us, \newtoolname is sound and complete relative to the soundness and completeness of \muval. Since implementability of symbolic protocols is undecidable, \muval may diverge on some of the generated \muCLP instances. 

Our evaluation of \newtoolname aims to answer the following questions:
\begin{enumerate}[label=Q\arabic*]
\item Does \newtoolname correctly decide implementability? \label{q:correctness} 
\item How does \newtoolname's performance (i.e., verification times / termination behavior) change across different network architectures? \label{q:performance} 
\end{enumerate}

\paragraph{Benchmarks}
To help answer both questions, we start from the benchmark suite used in the evaluation of \toolname~\cite{DBLP:conf/cav/LiSWZ25}. These benchmarks subsume various benchmark suites in the multiparty session type literature but also includes new ones for further diversification, in particular, to increase the number of interesting non-implementable benchmarks. Examples include idealized specifications of various web services and distributed protocols. To further help us answer \ref{q:correctness} we added select mini benchmarks that separate each pair of considered network architectures with respect to implementability. These benchmarks are discussed in~\cref{sec:separation-examples}.

\newcommand{\questionsymbol}{\textcolor{gray}{?}}
\newcommand{\questionsymbolB}{\textcolor{black}{?}}
\newcommand{\implYes}{\yessymbol}
\newcommand{\implNo}{\nosymbol}
\newcommand{\implB}{\yessymbol}%{impl.}
\newcommand{\nonimplB}{\nosymbol}%{non-impl.}
\newcommand{\inconclB}{\questionsymbol}%{inconcl.}
\newcommand{\projB}{\yessymbol}%{proj.}
\newcommand{\nonprojB}{\nonimplB}%{non-proj.}
\newcommand{\timeoutB}{T/O} % (30s)}

% Defining a center-aligned column with fixed width 
\newcolumntype{C}[1]{>{\centering\arraybackslash}p{#1}}

\begin{table}[t]\centering
\scriptsize
\begin{tabular}{|C{0.5cm}|l|c|C{0.4cm}r|C{0.4cm}r|C{0.4cm}r|C{0.4cm}r|C{0.4cm}r|}
	\hline
	Source & Example & $\card{\Procs}$ & \ptpbox & Time & \senderbox & Time & \mailbox & Time & \monobox & Time & \bag & Time \\
	\hline \hline
	
	\multirow{10}{*}{\cite{DBLP:journals/pacmpl/00020HNY20}} & 
	Calculator & 2 & \implB & 0.6s & \implB & 0.7s & \implB & 0.7s & \implB & 0.8s & \nonimplB & 16.5s \\
	& Fibonacci & 2 & \implB & 0.5s & \implB & 0.5s & \implB & 0.5s & \implB & 0.5s & \nonimplB & 3.3s \\
	& HigherLower & 3 & \implB & 15.8s & \implB & 13.6s & \nonimplB & 17.0s & \nonimplB & 18.5s & \nonimplB & 45.6s \\
	& HTTP & 2 & \implB & 0.5s & \implB & 0.4s & \implB & 0.4s & \implB & 0.5s & \nonimplB & 24.4s \\
	& Negotiation & 2 & \implB & 1.0s & \implB & 1.0s & \implB & 1.0s & \implB & 1.0s & \nonimplB & 26.1s \\
	& OnlineWallet & 3 & \implB & 17.5s & \implB & 16.6s & \implB & 16.5s & \implB & 20.6s & \nonimplB & 100.0s \\
	& SH & 3 & \implB & 237.1s & \implB & 244.0s & \implB & 256.1s & \implB & 257.2s & \inconclB & \timeoutB \\
	& Ticket & 2 & \implB & 0.6s & \implB & 0.6s & \implB & 0.6s & \implB & 0.6s & \nonimplB & 3.8s \\
	& TravelAgency & 2 & \implB & 9.6s & \implB & 9.7s & \nonimplB & 12.2s & \nonimplB & 11.6s & \nonimplB & 18.3s \\
	& TwoBuyer & 3 & \implB & 4.2s & \implB & 4.0s & \nonimplB & 6.3s & \nonimplB & 6.3s & \implB & 9.3s \\
	\cline{1-1}
	
	\multirow{6}{*}{\cite{DBLP:conf/ecoop/VassorY24}} &
	DoubleBuffering & 3 & \implB & 1.5s & \implB & 1.5s & \nonimplB & 2.4s & \nonimplB & 2.3s & \implB & 2.5s \\
	& OAuth & 3 & \implB & 6.5s & \implB & 6.6s & \implB & 10.5s & \implB & 10.3s & \nonimplB & 18.1s \\
	& PlusMinus & 3 & \implB & 5.5s & \implB & 5.3s & \nonimplB & 7.7s & \nonimplB & 8.6s & \implB & 13.2s \\
	& RingMax & 7 & \implB & 3.7s & \implB & 3.6s & \implB & 5.3s & \implB & 6.8s & \implB & 16.5s \\
	& SimpleAuth & 2 & \implB & 0.5s & \implB & 0.5s & \implB & 0.5s & \implB & 0.5s & \nonimplB & 4.7s \\
	& TravelAgency2 & 2 & \implB & 0.5s & \implB & 0.5s & \implB & 0.5s & \implB & 0.5s & \implB & 3.7s \\
	\cline{1-1}
	
	\multirow{4}{*}{\cite{DBLP:conf/cav/LiSWZ23}} &
	send-validity-yes & 4 & \implB & 1.9s & \implB & 2.7s & \implB & 2.7s & \implB & 2.7s & \implB & 3.6s \\
	& send-validity-no & 4 & \nonimplB & 1.9s & \nonimplB & 1.9s & \nonimplB & 2.7s & \nonimplB & 2.7s & \nonimplB & 3.6s \\
	& receive-validity-yes & 3 & \implB & 5.3s & \implB & 5.3s & \implB & 6.5s & \implB & 5.9s & \implB & 7.8s \\
	& receive-validity-no & 3 & \nonimplB & 3.7s & \nonimplB & 3.8s & \nonimplB & 4.6s & \nonimplB & 4.0s & \nonimplB & 5.4s \\
	\cline{1-1}
	
	\multirow{8}{*}{\cite{Li25OopslaOfficial}} &
	symbolic-two-bidder-yes & 3 & \implB & 24.0s & \implB & 21.6s & \implB & 25.2s & \implB & 22.0s & \implB & 31.7s \\
	& symbolic-two-bidder-no1 & 3 & \nonimplB & 23.9s & \nonimplB & 20.3s & \nonimplB & 22.9s & \nonimplB & 18.3s & \nonimplB & 29.5s \\
	& figure12-yes & 3 & \implB & 2.0s & \implB & 2.0s & \implB & 8.7s & \implB & 8.5s & \implB & 12.8s \\
	& figure12-no & 3 & \nonimplB & 3.0s & \nonimplB & 3.0s & \nonimplB & 9.7s & \nonimplB & 10.7s & \nonimplB & 13.8s \\
	& symbolic-send-validity-yes & 4 & \implB & 6.6s & \implB & 6.5s & \implB & 10.0s & \implB & 10.5s & \implB & 16.0s \\
	& symbolic-send-validity-no & 4 & \nonimplB & 5.6s & \nonimplB & 5.4s & \nonimplB & 8.3s & \nonimplB & 8.8s & \nonimplB & 13.2s \\
	& symbolic-receive-validity-yes & 3 & \implB & 6.4s & \implB & 6.3s & \nonimplB & 8.7s & \nonimplB & 8.4s & \implB & 11.8s \\
	& symbolic-receive-validity-no & 3 & \nonimplB & 8.0s & \nonimplB & 7.8s & \nonimplB & 12.3s & \nonimplB & 12.3s & \nonimplB & 14.5s \\
	\cline{1-1}
	
	\multirow{2}{*}{\cite{DBLP:conf/ictac/Cruz-FilipeGLMP22}} &
	fwd-auth-yes & 3 & \implB & 10.5s & \implB & 10.5s & \implB & 15.6s & \implB & 16.1s & \nonimplB & 27.3s \\
	& fwd-auth-no & 3 & \inconclB & \timeoutB & \inconclB & \timeoutB & \nonimplB & 157.4s & \nonimplB & 163.0s & \nonimplB & 176.1s \\
	\cline{1-1}
	
	\multirow{7}{*}{\cite{DBLP:conf/cav/LiSWZ25}} &
	symbolic-two-bidder-no2 & 3 & \nonimplB & 23.8s & \nonimplB & 29.5s & \nonimplB & 31.1s & \nonimplB & 28.5s & \nonimplB & 40.1s \\
	& higher-lower-ultimate & 3 & \implB & 13.3s & \implB & 13.1s & \nonimplB & 20.5s & \nonimplB & 25.3s & \nonimplB & 30.8s \\
	& higher-lower-winning & 3 & \inconclB & \timeoutB & \projB & 36.3s & \nonimplB & 37.5s & \nonimplB & 45.1s & \implB & 62.2s \\
	& higher-lower-no & 3 & \nonimplB & 13.4s & \nonimplB & 12.9s & \nonimplB & 20.9s & \nonimplB & 29.9s & \nonimplB & 45.3s \\
	& higher-lower-encrypt-yes & 4 & \implB & 10.0s & \implB & 9.8s & \nonimplB & 12.9s & \nonimplB & 14.3s & \implB & 25.0s \\
	& higher-lower-encrypt-no & 4 & \nonimplB & 77.3s & \nonimplB & 88.8s & \nonimplB & 100.0s & \nonimplB & 109.2s & \nonimplB & 126.1s \\
	& higher-lower-mixed & 3 & \nonimplB & 33.6s & \nonimplB & 33.8s & \nonimplB & 36.4s & \nonimplB & 36.2s & \nonimplB & 41.3s \\
	\cline{1-1}
	
	\multirow{5}{*}{new} 
	& bag-no-p2p-yes & 2 & \implB & 0.4s & \implB & 0.4s & \implB & 3.4s & \implB & 0.6s & \nonimplB & 1.8s \\ 
	& mb-no-p2p-yes & 3 & \implB & 0.9s & \implB & 0.9s & \nonimplB & 1.3s & \nonimplB & 1.3s & \implB & 1.3s \\ 
	& p2p-no-sb-yes & 4 & \nonimplB & 4.0s & \implB & 4.1s & \nonimplB & 5.3s & \nonimplB & 5.3s & \nonimplB & 6.7s \\
	& mb-no-monob-no & 3 & \nonimplB & 3.2s & \implB & 3.2s & \nonimplB & 4.1s & \nonimplB & 4.1s & \nonimplB& 5.4s \\ 
	& monob-no-mb-no & 4 & \implB & 1.3s & \implB & 1.3s & \nonimplB & 1.9s & \nonimplB & 1.9s & \implB & 1.9s \\
	\hline \hline 
	 & Average & & & 6.7s & & 5.8s & & 7.6s &  & 7.1s &  & 12.6s \\
	\hline 
\end{tabular}
\caption{\scriptsize Run times and verification results for all five network architectures.  
	For each example, we report the number of participants ($\card{\Procs}$), and for each architecture, the result: 
	\yessymbol\xspace for implementable, 
	\nosymbol\xspace for non-implementable, 
	and \questionsymbol\xspace for inconclusive due to timeout, and run time, with a 15s timeout per \muCLP instance (T/O). The average run times across all benchmark listed in the last row are computed only over those benchmarks for which none of the configurations timed out.
	\label{tab:comparison_final}}
      \vspace*{-4ex}
\end{table}

\paragraph{Experiment and results}
To answer our two questions, we run \newtoolname on the full benchmark suite and all considered network architectures. The experiment is run on a 2024 MacBook Air with an Apple M3 chip and 16GB of RAM, with run times averaged over 10 runs. The results are summarized in \cref{tab:comparison_final}. Run times reported are the sum of GCLTS checking time and implementability checking time, with timeouts for individual \muCLP instances specified separately.

\paragraph{Correctness (\ref{q:correctness})}
We first observe that the implementability results for peer-to-peer box are consistent with those reported in~\cite{DBLP:conf/cav/LiSWZ25} for the common benchmarks. Next, note that the results are also consistent with the implementability relationships between network architectures identified in \cref{lm:implementability-relationships}. For example, the subset of benchmarks identified as implementable under \senderbox is a strict superset of those for \ptpbox. Finally, we did a manual inspection of individual results for our separation mini benchmarks and a random selection of other results. In all cases, the produced results were determined to be correct.

\paragraph{Performance (\ref{q:performance})}
\newtoolname shows similar running times and termination behavior for the new network architectures compared to the previously supported \ptpbox, with \bag behaving the slowest. The slower behavior of \bag is to be expected since the relaxed ordering on message buffers increases the state space to be explored by the message availability check used to implement the GRC condition. In general, the relationships of the average run times appear to track the theoretical relationships between network architectures identified in \cref{sec:impl-relationships}.

%%% Local Variables:
%%% mode: latex
%%% TeX-master: "main"
%%% End:

\section{Related Work} 
\label{sec:related}

\paragraph{Global protocol specifications} 
Global protocol specifications in the form of message sequence charts found early industry adoption by the ITU standard~\cite{ITU-T-Z120-2011} in 1993, was subsequently incorporated into UML~\cite{umlwebsite} in 2005, and is part of the Web Service Choreography Description Language~\cite{w3c-ws-cdl-2005}. 
Global specifications are widely studied in academia in the form of multiparty session types and choreographic programming. 
Multiparty session types (MSTs) have enjoyed widespread implementation in a variety of programming languages 
including Python \cite{DBLP:journals/fmsd/DemangeonHHNY15, DBLP:journals/corr/NeykovaY16, DBLP:journals/fac/NeykovaBY17}, Java \cite{DBLP:conf/fase/HuY16, DBLP:conf/fase/HuY17}, C \cite{DBLP:conf/tools/NgYH12}, Go \cite{DBLP:conf/icse/LangeNTY18, DBLP:journals/pacmpl/CastroHJNY19}, Scala \cite{DBLP:conf/ecoop/Castro-PerezY23}, Rust \cite{DBLP:conf/ppopp/CutnerYV22,DBLP:journals/darts/LagaillardieNY22}, OCaml \cite{DBLP:conf/ecoop/ImaiNYY19} and F\# \cite{DBLP:conf/cc/NeykovaHYA18}. 
Application domains for MSTs include operating systems~\cite{DBLP:conf/eurosys/FahndrichAHHHLL06},
high performance computing~\cite{DBLP:conf/pvm/HondaMMNVY12, DBLP:conf/fpl/NiuNYWYL16,demuijnckhughes_et_al:LIPIcs.ECOOP.2019.6},
cyber-physical systems~\cite{DBLP:conf/ecoop/MajumdarPYZ19, DBLP:journals/pacmpl/MajumdarYZ20}, and
web services~\cite{DBLP:conf/tgc/YoshidaHNN13}.
Choreographic programming frameworks have been implemented in Java~\cite{DBLP:journals/toplas/GiallorenzoMP24}, Haskell~\cite{ DBLP:journals/corr/abs-2303-00924}, Rust~\cite{chorus, kashiwa2023portableefficientpracticallibrarylevel} and applied to distributed architecture~\cite{depalma2024functionasaservicechoreographicprogramminglanguage}, cryptographic security protocols~\cite{DBLP:conf/sp/GancherGSDP23}, and cyber-physical systems~\cite{DBLP:conf/forte/Cruz-FilipeM16}. 
We refer the reader to  \cite{DBLP:books/sp/24/Yoshida24} and \cite{M23} for a comprehensive survey of MST and choreography applicability respectively.

\paragraph{Network-parametric results from concurrency theory} 
\citet{DBLP:conf/concur/BolligGFLLS21} and \citet{DBLP:journals/pacmpl/GiustoFLL23} study the synchronizability problem, which asks whether a communicating finite state machine can be precisely approximated in terms of its ``synchronous'' executions. 
\citet{DBLP:conf/concur/BolligGFLLS21} presents a framework based on MSO logic and special treewidth, and uses it to establish decidability results for existential and universal $k$-boundedness~\cite{DBLP:journals/iandc/GenestKM06}, weak synchrony and weak $k$-synchrony~\cite{DBLP:conf/cav/BouajjaniEJQ18} of \ptpbox and \mailbox communication. 

\citet{DBLP:journals/pacmpl/GiustoFLL23} generalize \citet{DBLP:conf/concur/BolligGFLLS21} to five additional communication models, defined as sets of message sequence charts (MSCs): \senderbox, \monobox, \bag, causally ordered and realizable with synchronous communication. MSCs are partial order graphs on asynchronous events that specify a total order per participant. 
MSCs can thus specify unmatched sends and out-of-order receptions, but not branching behavior or recursion. 
\citet{DBLP:journals/pacmpl/GiustoFLL23} further generalize synchronizability to any notion expressible in terms of a class of MSCs with bounded special treewidth. 
The generalization thus rests on an encoding of communication models and synchronizability properties as MSO formulae, and Courcelle's Theorem, which applies to any graph with bounded special treewidth. 

Non-FIFO (equivalent to \bag) communication is considered alongside FIFO communication in \cite{DBLP:journals/tcs/Lohrey03}, which studies the safe realizability problem of HMSCs with respect to communicating finite state machines. HMSCs generalizee GCLTS along two dimensions: firstly, while specifications are required to satisfy FIFO restrictions, they are not required to be 1-synchronous;
secondly, branching choice is unrestricted, relaxing the sender-driven choice assumption of GCLTS. 
Unlike our setting, HMSCs are not given infinite word semantics. 
\citet{DBLP:journals/tcs/Lohrey03} shows that all (un)decidability results generalize to \bag communication, by exploiting a reduction between FIFO and non-FIFO communication to transfer upper bounds of safe realizability. 

Recently, the realizability of global types for both p2p and rendezvous synchronous communication was studied in \cite{DBLP:journals/corr/abs-2507-17354}. 
Similar to our notion of channel compliance, the work defines communication models as sets of linearizations. 
Rendezvous synchrony is essentially asynchronous communication with a global channel bound of size 1, \textit{\`a la} \texttt{rsc} from \cite{DBLP:journals/fac/ChevrouHQ16} and \cite{DBLP:journals/pacmpl/GiustoFLL23}, and corresponds to  $\SyncToAsync(\rho)$ asynchronous words in our setting.  \cite{DBLP:journals/corr/abs-2507-17354} lifts all structural restrictions on the graph structure, notably including sender-driven choice. 
No decision procedure for realizability of finite global types is given, either in a network-specific or network-agnostic manner. The relationship between p2p FIFO-realizable and synchronously realizable global types is instead established through the definition of realizability directly. 

\citet{DBLP:journals/fac/ChevrouHQ16} study the same seven communication models considered by \cite{DBLP:journals/pacmpl/GiustoFLL23}, and establish a hierarchy that differs from \cite{DBLP:journals/pacmpl/GiustoFLL23} in the placement of \senderbox and \mailbox communication: \cite{DBLP:journals/fac/ChevrouHQ16} judges the two incomparable, whereas \cite{DBLP:journals/pacmpl/GiustoFLL23} judges \senderbox to subsume \mailbox. The discrepancy is due to the difference between a universal and existential quantifier: Chevrou et al. define communication models as sets of linearizations, all of which respect the communication model, whereas Giusto et al. define communication models as sets of MSCs, of which one exhibits a linearization that respects the communication model. The focus is on establishing a taxonomy, and not on specific decision problems concerning one or more communication models. Their results are mechanized in TLA+. 

Our notion of network architecture identifies each channel with a single message buffer, as is common in the context of global protocol specification frameworks. \citet{DBLP:journals/scp/EngelsMR02} consider a more fine-grained model where each channel is associated with an input and and output buffer. This results in a third asynchronous event per message exchange that corresponds to the transmission of a message from the output to the input buffer. They then study, among others, the implementability problem for MSCs under different communication topologies for FIFO channels. However, their notion of implementability differs from ours in that they only check whether all linearizations of the MSC are consistent with the network semantics, but not whether this set can be exactly realized by a CLTS where participants only make local observations. Their results also do not easily generalize to HMCSs and, thus, rule out features like recursion and loops in specifications.

Our axiomatic network model allows for network architectures that exhibit channels with different semantics such as FIFO and bag within a single network. \citet{DBLP:conf/concur/ClementeHS14} study the decidability of the reachability problem for finite-state CLTS with such hybrid networks. Specifically, they provide a complete characterization of the communication topologies for which decidability can be retained. 

\paragraph{Tools}
To our knowledge \newtoolname is the first tool that can check implementability of global protocols for a range of different network architectures. In this work, we build on \toolname~\cite{DBLP:conf/cav/LiSWZ25}, which implements the symbolic algorithm for checking implementability developed in~\cite{Li25OopslaOfficial} for the \ptpbox case. \toolname's closest competitors are \sessionstar~\cite{DBLP:journals/pacmpl/00020HNY20} and Rumpsteak with refinements~\cite{DBLP:conf/ecoop/VassorY24}. We refer to~\cite{DBLP:conf/cav/LiSWZ25} for a detailed comparison.

\paragraph{Synchrony}
We have excluded synchronous communication models from our investigation in this paper. 
Top-down verification for synchronous communication models have been widely studied in their own right, prominently in the form of synchronous multiparty session types~\cite{DBLP:journals/pacmpl/UdomsrirungruangY25, DBLP:conf/lics/PetersY24, DBLP:conf/ppdp/ChenDY24, DBLP:journals/jlp/GhilezanJPSY19}. Zielonka automata~\cite{DBLP:journals/ita/Zielonka87} and their ``realistic'' variant~\cite{DBLP:conf/fsttcs/AkshayDGS13} are a natural starting point for developing a unifying theory of protocols with synchronous or asynchronous semantics. Notably, the Send Coherence condition employed in \cite{Li25OopslaOfficial} and this work can be viewed as a special instance of the third semantical condition, \emph{causally closed}~\cite[Definition 9, (LC3)]{DBLP:conf/fsttcs/AkshayDGS13}, characterizing synchronous realizability of global specifications. This connection points towards the possibility of an implementability characterization that simultaneously handles synchronous and asynchronous communication.

%%% Local Variables:
%%% mode: latex
%%% TeX-master: "main"
%%% End:

\section{Conclusion} 
\label{sec:discussion} 

We presented a network-parametric, precise solution to global protocol implementability and a simple axiomatic network model that characterizes the network architectures to which this solution applies. Equipped with this solution, we derived decidability and complexity results for finite state fragments of global protocols and commonly considered network architectures as well symbolic algorithms for global protocols with potentially infinite state spaces and data domains. We implemented the latter in an existing tool and showed that the resulting new tool, \newtoolname, provides increased applicability across architectures without sacrificing performance. 

One implication of our work
is that the chosen network architecture for the implementability problem does not affect synthesis: for all considered network architectures, the global protocol is implementable if and only if the canonical implementation obtainable by local projection implements it. 
This modularity makes us hopeful that our results can be directly applied to broaden the scope of top-down methodologies. 

%%% Local Variables:
%%% mode: latex
%%% TeX-master: "main"
%%% End:

\subsubsection*{Acknowledgements.} 
This work is supported by %
the National Science Foundation under the grant agreement 2304758. 
	
\clearpage

\bibliography{merged_biblio}

@inproceedings{DBLP:conf/icse/LangeNTY18,
  author       = {Julien Lange and
                  Nicholas Ng and
                  Bernardo Toninho and
                  Nobuko Yoshida},
  editor       = {Michel Chaudron and
                  Ivica Crnkovic and
                  Marsha Chechik and
                  Mark Harman},
  title        = {A static verification framework for message passing in Go using behavioural
                  types},
  booktitle    = {Proceedings of the 40th International Conference on Software Engineering,
                  {ICSE} 2018, Gothenburg, Sweden, May 27 - June 03, 2018},
  pages        = {1137--1148},
  publisher    = {{ACM}},
  year         = {2018},
  url          = {https://doi.org/10.1145/3180155.3180157},
  doi          = {10.1145/3180155.3180157},
  bibsource    = {dblp computer science bibliography, https://dblp.org}
}

@inproceedings{DBLP:conf/cc/NeykovaHYA18,
  author       = {Rumyana Neykova and
                  Raymond Hu and
                  Nobuko Yoshida and
                  Fahd Abdeljallal},
  editor       = {Christophe Dubach and
                  Jingling Xue},
  title        = {A session type provider: compile-time {API} generation of distributed
                  protocols with refinements in F{\#}},
  booktitle    = {Proceedings of the 27th International Conference on Compiler Construction,
                  {CC} 2018, February 24-25, 2018, Vienna, Austria},
  pages        = {128--138},
  publisher    = {{ACM}},
  year         = {2018},
  url          = {https://doi.org/10.1145/3178372.3179495},
  doi          = {10.1145/3178372.3179495},
  bibsource    = {dblp computer science bibliography, https://dblp.org}
}

@inproceedings{DBLP:conf/eurosys/FahndrichAHHHLL06,
  author       = {Manuel F{\"{a}}hndrich and
                  Mark Aiken and
                  Chris Hawblitzel and
                  Orion Hodson and
                  Galen C. Hunt and
                  James R. Larus and
                  Steven Levi},
  editor       = {Yolande Berbers and
                  Willy Zwaenepoel},
  title        = {Language support for fast and reliable message-based communication
                  in singularity {OS}},
  booktitle    = {Proceedings of the 2006 EuroSys Conference, Leuven, Belgium, April
                  18-21, 2006},
  pages        = {177--190},
  publisher    = {{ACM}},
  year         = {2006},
  url          = {https://doi.org/10.1145/1217935.1217953},
  doi          = {10.1145/1217935.1217953},
  bibsource    = {dblp computer science bibliography, https://dblp.org}
}

@inproceedings{DBLP:conf/pvm/HondaMMNVY12,
  author       = {Kohei Honda and
                  Eduardo R. B. Marques and
                  Francisco Martins and
                  Nicholas Ng and
                  Vasco Thudichum Vasconcelos and
                  Nobuko Yoshida},
  editor       = {Jesper Larsson Tr{\"{a}}ff and
                  Siegfried Benkner and
                  Jack J. Dongarra},
  title        = {Verification of {MPI} Programs Using Session Types},
  booktitle    = {Recent Advances in the Message Passing Interface - 19th European {MPI}
                  Users' Group Meeting, EuroMPI 2012, Vienna, Austria, September 23-26,
                  2012. Proceedings},
  series       = {Lecture Notes in Computer Science},
  volume       = {7490},
  pages        = {291--293},
  publisher    = {Springer},
  year         = {2012},
  url          = {https://doi.org/10.1007/978-3-642-33518-1\_37},
  doi          = {10.1007/978-3-642-33518-1\_37},
  bibsource    = {dblp computer science bibliography, https://dblp.org}
}

@inproceedings{DBLP:conf/ecoop/MajumdarPYZ19,
  author       = {Rupak Majumdar and
                  Marcus Pirron and
                  Nobuko Yoshida and
                  Damien Zufferey},
  editor       = {Alastair F. Donaldson},
  title        = {Motion Session Types for Robotic Interactions (Brave New Idea Paper)},
  booktitle    = {33rd European Conference on Object-Oriented Programming, {ECOOP} 2019,
                  July 15-19, 2019, London, United Kingdom},
  series       = {LIPIcs},
  volume       = {134},
  pages        = {28:1--28:27},
  publisher    = {Schloss Dagstuhl - Leibniz-Zentrum f{\"{u}}r Informatik},
  year         = {2019},
  url          = {https://doi.org/10.4230/LIPIcs.ECOOP.2019.28},
  doi          = {10.4230/LIPICS.ECOOP.2019.28},
  bibsource    = {dblp computer science bibliography, https://dblp.org}
}

@inproceedings{DBLP:conf/tgc/YoshidaHNN13,
  author       = {Nobuko Yoshida and
                  Raymond Hu and
                  Rumyana Neykova and
                  Nicholas Ng},
  editor       = {Mart{\'{\i}}n Abadi and
                  Alberto Lluch{-}Lafuente},
  title        = {The Scribble Protocol Language},
  booktitle    = {Trustworthy Global Computing - 8th International Symposium, {TGC}
                  2013, Buenos Aires, Argentina, August 30-31, 2013, Revised Selected
                  Papers},
  series       = {Lecture Notes in Computer Science},
  volume       = {8358},
  pages        = {22--41},
  publisher    = {Springer},
  year         = {2013},
  url          = {https://doi.org/10.1007/978-3-319-05119-2\_3},
  doi          = {10.1007/978-3-319-05119-2\_3},
  bibsource    = {dblp computer science bibliography, https://dblp.org}
}

@inproceedings{DBLP:conf/sdl/MauwR97,
  author       = {Sjouke Mauw and
                  Michel A. Reniers},
  editor       = {Ana R. Cavalli and
                  Amardeo Sarma},
  title        = {High-level message sequence charts},
  booktitle    = {{SDL} '97 Time for Testing, SDL, {MSC} and Trends - 8th International
                  {SDL} Forum, Evry, France, 23-29 September 1997, Proceedings},
  pages        = {291--306},
  publisher    = {Elsevier},
  year         = {1997},
  bibsource    = {dblp computer science bibliography, https://dblp.org}
}

@inproceedings{DBLP:conf/ac/GenestMP03,
  author       = {Blaise Genest and
                  Anca Muscholl and
                  Doron A. Peled},
  editor       = {J{\"{o}}rg Desel and
                  Wolfgang Reisig and
                  Grzegorz Rozenberg},
  title        = {Message Sequence Charts},
  booktitle    = {Lectures on Concurrency and Petri Nets, Advances in Petri Nets [This
                  tutorial volume originates from the 4th Advanced Course on Petri Nets,
                  {ACPN} 2003, held in Eichst{\"{a}}tt, Germany in September 2003. In
                  addition to lectures given at {ACPN} 2003, additional chapters have
                  been commissioned]},
  series       = {Lecture Notes in Computer Science},
  volume       = {3098},
  pages        = {537--558},
  publisher    = {Springer},
  year         = {2003},
  url          = {https://doi.org/10.1007/978-3-540-27755-2\_15},
  doi          = {10.1007/978-3-540-27755-2\_15},
  bibsource    = {dblp computer science bibliography, https://dblp.org}
}

@inproceedings{DBLP:conf/acsd/GenestM05,
  author       = {Blaise Genest and
                  Anca Muscholl},
  title        = {Message Sequence Charts: {A} Survey},
  booktitle    = {Fifth International Conference on Application of Concurrency to System
                  Design {(ACSD} 2005), 6-9 June 2005, St. Malo, France},
  pages        = {2--4},
  publisher    = {{IEEE} Computer Society},
  year         = {2005},
  url          = {https://doi.org/10.1109/ACSD.2005.25},
  doi          = {10.1109/ACSD.2005.25},
  bibsource    = {dblp computer science bibliography, https://dblp.org}
}

@inproceedings{DBLP:conf/concur/GazagnaireGHTY07,
  author       = {Thomas Gazagnaire and
                  Blaise Genest and
                  Lo{\"{\i}}c H{\'{e}}lou{\"{e}}t and
                  P. S. Thiagarajan and
                  Shaofa Yang},
  editor       = {Lu{\'{\i}}s Caires and
                  Vasco Thudichum Vasconcelos},
  title        = {Causal Message Sequence Charts},
  booktitle    = {{CONCUR} 2007 - Concurrency Theory, 18th International Conference,
                  {CONCUR} 2007, Lisbon, Portugal, September 3-8, 2007, Proceedings},
  series       = {Lecture Notes in Computer Science},
  volume       = {4703},
  pages        = {166--180},
  publisher    = {Springer},
  year         = {2007},
  url          = {https://doi.org/10.1007/978-3-540-74407-8\_12},
  doi          = {10.1007/978-3-540-74407-8\_12},
  bibsource    = {dblp computer science bibliography, https://dblp.org}
}

@article{DBLP:journals/tosem/RoychoudhuryGS12,
  author       = {Abhik Roychoudhury and
                  Ankit Goel and
                  Bikram Sengupta},
  title        = {Symbolic Message Sequence Charts},
  journal      = {{ACM} Trans. Softw. Eng. Methodol.},
  volume       = {21},
  number       = {2},
  pages        = {12:1--12:44},
  year         = {2012},
  url          = {https://doi.org/10.1145/2089116.2089122},
  doi          = {10.1145/2089116.2089122},
  bibsource    = {dblp computer science bibliography, https://dblp.org}
}

@article{DBLP:journals/tse/AlurEY03,
  author       = {Rajeev Alur and
                  Kousha Etessami and
                  Mihalis Yannakakis},
  title        = {Inference of Message Sequence Charts},
  journal      = {{IEEE} Trans. Software Eng.},
  volume       = {29},
  number       = {7},
  pages        = {623--633},
  year         = {2003},
  url          = {https://doi.org/10.1109/TSE.2003.1214326},
  doi          = {10.1109/TSE.2003.1214326},
  bibsource    = {dblp computer science bibliography, https://dblp.org}
}

@article{DBLP:journals/tcs/Lohrey03,
  author       = {Markus Lohrey},
  title        = {Realizability of high-level message sequence charts: closing the gaps},
  journal      = {Theor. Comput. Sci.},
  volume       = {309},
  number       = {1-3},
  pages        = {529--554},
  year         = {2003},
  url          = {https://doi.org/10.1016/j.tcs.2003.08.002},
  doi          = {10.1016/J.TCS.2003.08.002},
  bibsource    = {dblp computer science bibliography, https://dblp.org}
}

@inproceedings{DBLP:conf/concur/AlurY99,
  author       = {Rajeev Alur and
                  Mihalis Yannakakis},
  editor       = {Jos C. M. Baeten and
                  Sjouke Mauw},
  title        = {Model Checking of Message Sequence Charts},
  booktitle    = {{CONCUR} '99: Concurrency Theory, 10th International Conference, Eindhoven,
                  The Netherlands, August 24-27, 1999, Proceedings},
  series       = {Lecture Notes in Computer Science},
  volume       = {1664},
  pages        = {114--129},
  publisher    = {Springer},
  year         = {1999},
  url          = {https://doi.org/10.1007/3-540-48320-9\_10},
  doi          = {10.1007/3-540-48320-9\_10},
  bibsource    = {dblp computer science bibliography, https://dblp.org}
}

@inproceedings{DBLP:conf/mfcs/MuschollP99,
  author       = {Anca Muscholl and
                  Doron A. Peled},
  editor       = {Miroslaw Kutylowski and
                  Leszek Pacholski and
                  Tomasz Wierzbicki},
  title        = {Message Sequence Graphs and Decision Problems on Mazurkiewicz Traces},
  booktitle    = {Mathematical Foundations of Computer Science 1999, 24th International
                  Symposium, MFCS'99, Szklarska Poreba, Poland, September 6-10, 1999,
                  Proceedings},
  series       = {Lecture Notes in Computer Science},
  volume       = {1672},
  pages        = {81--91},
  publisher    = {Springer},
  year         = {1999},
  url          = {https://doi.org/10.1007/3-540-48340-3\_8},
  doi          = {10.1007/3-540-48340-3\_8},
  bibsource    = {dblp computer science bibliography, https://dblp.org}
}

@inproceedings{DBLP:conf/stacs/Morin02,
  author       = {R{\'{e}}mi Morin},
  editor       = {Helmut Alt and
                  Afonso Ferreira},
  title        = {Recognizable Sets of Message Sequence Charts},
  booktitle    = {{STACS} 2002, 19th Annual Symposium on Theoretical Aspects of Computer
                  Science, Antibes - Juan les Pins, France, March 14-16, 2002, Proceedings},
  series       = {Lecture Notes in Computer Science},
  volume       = {2285},
  pages        = {523--534},
  publisher    = {Springer},
  year         = {2002},
  url          = {https://doi.org/10.1007/3-540-45841-7\_43},
  doi          = {10.1007/3-540-45841-7\_43},
  bibsource    = {dblp computer science bibliography, https://dblp.org}
}

@article{DBLP:journals/jcss/GenestMSZ06,
  author       = {Blaise Genest and
                  Anca Muscholl and
                  Helmut Seidl and
                  Marc Zeitoun},
  title        = {Infinite-state high-level MSCs: Model-checking and realizability},
  journal      = {J. Comput. Syst. Sci.},
  volume       = {72},
  number       = {4},
  pages        = {617--647},
  year         = {2006},
  url          = {https://doi.org/10.1016/j.jcss.2005.09.007},
  doi          = {10.1016/J.JCSS.2005.09.007},
  bibsource    = {dblp computer science bibliography, https://dblp.org}
}

@inproceedings{DBLP:conf/popl/HondaYC08,
  author       = {Kohei Honda and
                  Nobuko Yoshida and
                  Marco Carbone},
  editor       = {George C. Necula and
                  Philip Wadler},
  title        = {Multiparty asynchronous session types},
  booktitle    = {Proceedings of the 35th {ACM} {SIGPLAN-SIGACT} Symposium on Principles
                  of Programming Languages, {POPL} 2008, San Francisco, California,
                  USA, January 7-12, 2008},
  pages        = {273--284},
  publisher    = {{ACM}},
  year         = {2008},
  url          = {https://doi.org/10.1145/1328438.1328472},
  doi          = {10.1145/1328438.1328472},
  bibsource    = {dblp computer science bibliography, https://dblp.org}
}

@inproceedings{DBLP:conf/concur/BocchiHTY10,
  author       = {Laura Bocchi and
                  Kohei Honda and
                  Emilio Tuosto and
                  Nobuko Yoshida},
  editor       = {Paul Gastin and
                  Fran{\c{c}}ois Laroussinie},
  title        = {A Theory of Design-by-Contract for Distributed Multiparty Interactions},
  booktitle    = {{CONCUR} 2010 - Concurrency Theory, 21th International Conference,
                  {CONCUR} 2010, Paris, France, August 31-September 3, 2010. Proceedings},
  series       = {Lecture Notes in Computer Science},
  volume       = {6269},
  pages        = {162--176},
  publisher    = {Springer},
  year         = {2010},
  url          = {https://doi.org/10.1007/978-3-642-15375-4\_12},
  doi          = {10.1007/978-3-642-15375-4\_12},
  bibsource    = {dblp computer science bibliography, https://dblp.org}
}

@inproceedings{DBLP:conf/tgc/BocchiDY12,
  author       = {Laura Bocchi and
                  Romain Demangeon and
                  Nobuko Yoshida},
  editor       = {Catuscia Palamidessi and
                  Mark Dermot Ryan},
  title        = {A Multiparty Multi-session Logic},
  booktitle    = {Trustworthy Global Computing - 7th International Symposium, {TGC}
                  2012, Newcastle upon Tyne, UK, September 7-8, 2012, Revised Selected
                  Papers},
  series       = {Lecture Notes in Computer Science},
  volume       = {8191},
  pages        = {97--111},
  publisher    = {Springer},
  year         = {2012},
  url          = {https://doi.org/10.1007/978-3-642-41157-1\_7},
  doi          = {10.1007/978-3-642-41157-1\_7},
  bibsource    = {dblp computer science bibliography, https://dblp.org}
}

@article{DBLP:journals/jlp/ToninhoY17,
  author       = {Bernardo Toninho and
                  Nobuko Yoshida},
  title        = {Certifying data in multiparty session types},
  journal      = {J. Log. Algebraic Methods Program.},
  volume       = {90},
  pages        = {61--83},
  year         = {2017},
  url          = {https://doi.org/10.1016/j.jlamp.2016.11.005},
  doi          = {10.1016/J.JLAMP.2016.11.005},
  bibsource    = {dblp computer science bibliography, https://dblp.org}
}

@article{DBLP:journals/pacmpl/00020HNY20,
  author       = {Fangyi Zhou and
                  Francisco Ferreira and
                  Raymond Hu and
                  Rumyana Neykova and
                  Nobuko Yoshida},
  title        = {Statically verified refinements for multiparty protocols},
  journal      = {Proc. {ACM} Program. Lang.},
  volume       = {4},
  number       = {{OOPSLA}},
  pages        = {148:1--148:30},
  year         = {2020},
  url          = {https://doi.org/10.1145/3428216},
  doi          = {10.1145/3428216},
  bibsource    = {dblp computer science bibliography, https://dblp.org}
}

@inproceedings{DBLP:conf/cav/LiSWZ23,
  author       = {Elaine Li and
                  Felix Stutz and
                  Thomas Wies and
                  Damien Zufferey},
  editor       = {Constantin Enea and
                  Akash Lal},
  title        = {Complete Multiparty Session Type Projection with Automata},
  booktitle    = {Computer Aided Verification - 35th International Conference, {CAV}
                  2023, Paris, France, July 17-22, 2023, Proceedings, Part {III}},
  series       = {Lecture Notes in Computer Science},
  volume       = {13966},
  pages        = {350--373},
  publisher    = {Springer},
  year         = {2023},
  url          = {https://doi.org/10.1007/978-3-031-37709-9\_17},
  doi          = {10.1007/978-3-031-37709-9\_17},
  bibsource    = {dblp computer science bibliography, https://dblp.org}
}

@article{DBLP:journals/pacmpl/UnnoTGK23,
  author       = {Hiroshi Unno and
                  Tachio Terauchi and
                  Yu Gu and
                  Eric Koskinen},
  title        = {Modular Primal-Dual Fixpoint Logic Solving for Temporal Verification},
  journal      = {Proc. {ACM} Program. Lang.},
  volume       = {7},
  number       = {{POPL}},
  pages        = {2111--2140},
  year         = {2023},
  url          = {https://doi.org/10.1145/3571265},
  doi          = {10.1145/3571265},
  bibsource    = {dblp computer science bibliography, https://dblp.org}
}

@article{DBLP:journals/jacm/BrandZ83,
  author       = {Daniel Brand and
                  Pitro Zafiropulo},
  title        = {On Communicating Finite-State Machines},
  journal      = {J. {ACM}},
  volume       = {30},
  number       = {2},
  pages        = {323--342},
  year         = {1983},
  url          = {https://doi.org/10.1145/322374.322380},
  doi          = {10.1145/322374.322380},
  bibsource    = {dblp computer science bibliography, https://dblp.org}
}

@article{DBLP:journals/corr/abs-2303-00924,
  author       = {Gan Shen and
                  Shun Kashiwa and
                  Lindsey Kuper},
  title        = {HasChor: Functional Choreographic Programming for All (Functional
                  Pearl)},
  journal      = {CoRR},
  volume       = {abs/2303.00924},
  year         = {2023},
  url          = {https://doi.org/10.48550/arXiv.2303.00924},
  doi          = {10.48550/ARXIV.2303.00924},
  eprinttype    = {arXiv},
  eprint       = {2303.00924},
  bibsource    = {dblp computer science bibliography, https://dblp.org}
}

@article{DBLP:journals/lmcs/FinkelL23,
  author       = {Alain Finkel and
                  {\'{E}}tienne Lozes},
  title        = {Synchronizability of Communicating Finite State Machines is not Decidable},
  journal      = {Log. Methods Comput. Sci.},
  volume       = {19},
  number       = {4},
  year         = {2023},
  url          = {https://doi.org/10.46298/lmcs-19(4:33)2023},
  doi          = {10.46298/LMCS-19(4:33)2023},
  bibsource    = {dblp computer science bibliography, https://dblp.org}
}

@article{DBLP:journals/tcs/Cruz-FilipeM20,
  author       = {Lu{\'{\i}}s Cruz{-}Filipe and
                  Fabrizio Montesi},
  title        = {A core model for choreographic programming},
  journal      = {Theor. Comput. Sci.},
  volume       = {802},
  pages        = {38--66},
  year         = {2020},
  url          = {https://doi.org/10.1016/j.tcs.2019.07.005},
  doi          = {10.1016/J.TCS.2019.07.005},
  bibsource    = {dblp computer science bibliography, https://dblp.org}
}

@inproceedings{DBLP:conf/ecoop/GiallorenzoMPRS21,
  author       = {Saverio Giallorenzo and
                  Fabrizio Montesi and
                  Marco Peressotti and
                  David Richter and
                  Guido Salvaneschi and
                  Pascal Weisenburger},
  editor       = {Anders M{\o}ller and
                  Manu Sridharan},
  title        = {Multiparty Languages: The Choreographic and Multitier Cases (Pearl)},
  booktitle    = {35th European Conference on Object-Oriented Programming, {ECOOP} 2021,
                  July 11-17, 2021, Aarhus, Denmark (Virtual Conference)},
  series       = {LIPIcs},
  volume       = {194},
  pages        = {22:1--22:27},
  publisher    = {Schloss Dagstuhl - Leibniz-Zentrum f{\"{u}}r Informatik},
  year         = {2021},
  url          = {https://doi.org/10.4230/LIPIcs.ECOOP.2021.22},
  doi          = {10.4230/LIPICS.ECOOP.2021.22},
  bibsource    = {dblp computer science bibliography, https://dblp.org}
}

@article{DBLP:journals/pacmpl/HirschG22,
  author       = {Andrew K. Hirsch and
                  Deepak Garg},
  title        = {Pirouette: higher-order typed functional choreographies},
  journal      = {Proc. {ACM} Program. Lang.},
  volume       = {6},
  number       = {{POPL}},
  pages        = {1--27},
  year         = {2022},
  url          = {https://doi.org/10.1145/3498684},
  doi          = {10.1145/3498684},
  bibsource    = {dblp computer science bibliography, https://dblp.org}
}

@article{DBLP:journals/corr/abs-2111-03484,
  author       = {Andrew K. Hirsch and
                  Deepak Garg},
  title        = {Pirouette: Higher-Order Typed Functional Choreographies},
  journal      = {CoRR},
  volume       = {abs/2111.03484},
  year         = {2021},
  url          = {https://arxiv.org/abs/2111.03484},
  eprinttype    = {arXiv},
  eprint       = {2111.03484},
  bibsource    = {dblp computer science bibliography, https://dblp.org}
}

@article{DBLP:journals/pacmpl/HinrichsenBK20,
  author       = {Jonas Kastberg Hinrichsen and
                  Jesper Bengtson and
                  Robbert Krebbers},
  title        = {Actris: session-type based reasoning in separation logic},
  journal      = {Proc. {ACM} Program. Lang.},
  volume       = {4},
  number       = {{POPL}},
  pages        = {6:1--6:30},
  year         = {2020},
  url          = {https://doi.org/10.1145/3371074},
  doi          = {10.1145/3371074},
  bibsource    = {dblp computer science bibliography, https://dblp.org}
}

@article{DBLP:journals/lmcs/HinrichsenBK22,
  author       = {Jonas Kastberg Hinrichsen and
                  Jesper Bengtson and
                  Robbert Krebbers},
  title        = {Actris 2.0: Asynchronous Session-Type Based Reasoning in Separation
                  Logic},
  journal      = {Log. Methods Comput. Sci.},
  volume       = {18},
  number       = {2},
  year         = {2022},
  url          = {https://doi.org/10.46298/lmcs-18(2:16)2022},
  doi          = {10.46298/LMCS-18(2:16)2022},
  bibsource    = {dblp computer science bibliography, https://dblp.org}
}

@article{DBLP:journals/corr/abs-2507-17354,
	author       = {Cinzia Di Giusto and
	{\'{E}}tienne Lozes and
	Pascal Urso},
	title        = {Realisability and Complementability of Multiparty Session Types},
	journal      = {CoRR},
	volume       = {abs/2507.17354},
	year         = {2025},
	url          = {https://doi.org/10.48550/arXiv.2507.17354},
	doi          = {10.48550/ARXIV.2507.17354},
	eprinttype    = {arXiv},
	eprint       = {2507.17354},
	bibsource    = {dblp computer science bibliography, https://dblp.org}
}

@misc{Li25OopslaArxiv,
	title={Characterizing Implementability of Global Protocols with Infinite States and Data}, 
	author={Elaine Li and Felix Stutz and Thomas Wies and Damien Zufferey},
	year={2025},
	eprint={2411.05722},
	archivePrefix={arXiv},
	primaryClass={cs.PL},
	url={https://arxiv.org/abs/2411.05722}, 
}

@article{DBLP:journals/scp/EngelsMR02,
  author       = {Andr{\'{e}} Engels and
                  Sjouke Mauw and
                  Michel A. Reniers},
  title        = {A hierarchy of communication models for Message Sequence Charts},
  journal      = {Sci. Comput. Program.},
  volume       = {44},
  number       = {3},
  pages        = {253--292},
  year         = {2002},
  url          = {https://doi.org/10.1016/S0167-6423(02)00022-9},
  doi          = {10.1016/S0167-6423(02)00022-9},
  bibsource    = {dblp computer science bibliography, https://dblp.org}
}

@inproceedings{DBLP:conf/concur/ClementeHS14,
  author       = {Lorenzo Clemente and
                  Fr{\'{e}}d{\'{e}}ric Herbreteau and
                  Gr{\'{e}}goire Sutre},
  title        = {Decidable Topologies for Communicating Automata with {FIFO} and Bag
                  Channels},
  booktitle    = {{CONCUR} 2014 - Concurrency Theory - 25th International Conference,
                  {CONCUR} 2014, Rome, Italy, September 2-5, 2014. Proceedings},
  pages        = {281--296},
  volume       = {8704},
  publisher    = {Springer},
  year         = {2014},
  url          = {https://doi.org/10.1007/978-3-662-44584-6\_20},
  doi          = {10.1007/978-3-662-44584-6\_20},
  bibsource    = {dblp computer science bibliography, https://dblp.org}
}

@inproceedings{DBLP:conf/cav/LiSWZ25,
	author       = {Elaine Li and
	Felix Stutz and
	Thomas Wies and
	Damien Zufferey},
	editor       = {Ruzica Piskac and
	Zvonimir Rakamaric},
	title        = {Sprout: {A} Verifier for Symbolic Multiparty Protocols},
	booktitle    = {Computer Aided Verification - 37th International Conference, {CAV}
	2025, Zagreb, Croatia, July 23-25, 2025, Proceedings, Part {III}},
	series       = {Lecture Notes in Computer Science},
	volume       = {15933},
	pages        = {304--317},
	publisher    = {Springer},
	year         = {2025},
	url          = {https://doi.org/10.1007/978-3-031-98682-6_16},
	doi          = {10.1007/978-3-031-98682-6_16},
	bibsource    = {dblp computer science bibliography, https://dblp.org}
}

@inproceedings{DBLP:conf/itp/LiW25,
	author       = {Elaine Li and
	Thomas Wies},
	editor       = {Yannick Forster and
	Chantal Keller},
	title        = {Certified Implementability of Global Multiparty Protocols},
	booktitle    = {16th International Conference on Interactive Theorem Proving, {ITP}
	2025, September 28 to October 1, 2025, Reykjavik, Iceland},
	series       = {LIPIcs},
	volume       = {352},
	pages        = {15:1--15:20},
	publisher    = {Schloss Dagstuhl - Leibniz-Zentrum f{\"{u}}r Informatik},
	year         = {2025},
	url          = {https://doi.org/10.4230/LIPIcs.ITP.2025.15},
	doi          = {10.4230/LIPICS.ITP.2025.15},
	bibsource    = {dblp computer science bibliography, https://dblp.org}
}

@article{DBLP:journals/toplas/GallagerHS83,
  author       = {Robert G. Gallager and
                  Pierre A. Humblet and
                  Philip M. Spira},
  title        = {A Distributed Algorithm for Minimum-Weight Spanning Trees},
  journal      = {{ACM} Trans. Program. Lang. Syst.},
  volume       = {5},
  number       = {1},
  pages        = {66--77},
  year         = {1983},
  url          = {https://doi.org/10.1145/357195.357200},
  doi          = {10.1145/357195.357200},
  bibsource    = {dblp computer science bibliography, https://dblp.org}
}

@incollection{DBLP:books/acm/19/Lamport19b,
  author       = {Leslie Lamport},
  title        = {Time, clocks, and the ordering of events in a distributed system},
  booktitle    = {Concurrency: the Works of Leslie Lamport},
  pages        = {179--196},
  year         = {2019},
  crossref     = {DBLP:books/acm/19/2019M},
  url          = {https://doi.org/10.1145/3335772.3335934},
  doi          = {10.1145/3335772.3335934},
  bibsource    = {dblp computer science bibliography, https://dblp.org}
}

@book{DBLP:books/acm/19/2019M,
  editor       = {Dahlia Malkhi},
  title        = {Concurrency: the Works of Leslie Lamport},
  publisher    = {{ACM}},
  year         = {2019},
  url          = {https://doi.org/10.1145/3335772},
  doi          = {10.1145/3335772},
  isbn         = {978-1-4503-7270-1},
  bibsource    = {dblp computer science bibliography, https://dblp.org}
}

@article{DBLP:journals/pacmpl/JacobsHK23,
  author       = {Jules Jacobs and
                  Jonas Kastberg Hinrichsen and
                  Robbert Krebbers},
  title        = {Dependent Session Protocols in Separation Logic from First Principles
                  (Functional Pearl)},
  journal      = {Proc. {ACM} Program. Lang.},
  volume       = {7},
  number       = {{ICFP}},
  pages        = {768--795},
  year         = {2023},
  url          = {https://doi.org/10.1145/3607856},
  doi          = {10.1145/3607856},
  bibsource    = {dblp computer science bibliography, https://dblp.org}
}

@article{DBLP:journals/jlp/GhilezanJPSY19,
	author       = {Silvia Ghilezan and
	Svetlana Jaksic and
	Jovanka Pantovic and
	Alceste Scalas and
	Nobuko Yoshida},
	title        = {Precise subtyping for synchronous multiparty sessions},
	journal      = {J. Log. Algebraic Methods Program.},
	volume       = {104},
	pages        = {127--173},
	year         = {2019},
	url          = {https://doi.org/10.1016/j.jlamp.2018.12.002},
	doi          = {10.1016/J.JLAMP.2018.12.002},
	bibsource    = {dblp computer science bibliography, https://dblp.org}
}

@inproceedings{DBLP:conf/ppdp/ChenDY24,
	author       = {Tzu{-}Chun Chen and
	Mariangiola Dezani{-}Ciancaglini and
	Nobuko Yoshida},
	editor       = {Alessandro Bruni and
	Alberto Momigliano and
	Matteo Pradella and
	Matteo Rossi and
	James Cheney},
	title        = {On the Preciseness of Subtyping in Session Types: 10 Years Later},
	booktitle    = {Proceedings of the 26th International Symposium on Principles and
	Practice of Declarative Programming, {PPDP} 2024, Milano, Italy, September
	9-11, 2024},
	pages        = {2:1--2:3},
	publisher    = {{ACM}},
	year         = {2024},
	url          = {https://doi.org/10.1145/3678232.3678258},
	doi          = {10.1145/3678232.3678258},
	bibsource    = {dblp computer science bibliography, https://dblp.org}
}

@inproceedings{DBLP:conf/lics/PetersY24,
	author       = {Kirstin Peters and
	Nobuko Yoshida},
	editor       = {Pawel Sobocinski and
	Ugo Dal Lago and
	Javier Esparza},
	title        = {Separation and Encodability in Mixed Choice Multiparty Sessions},
	booktitle    = {Proceedings of the 39th Annual {ACM/IEEE} Symposium on Logic in Computer
	Science, {LICS} 2024, Tallinn, Estonia, July 8-11, 2024},
	pages        = {62:1--62:15},
	publisher    = {{ACM}},
	year         = {2024},
	url          = {https://doi.org/10.1145/3661814.3662085},
	doi          = {10.1145/3661814.3662085},
	bibsource    = {dblp computer science bibliography, https://dblp.org}
}

@article{DBLP:journals/pacmpl/UdomsrirungruangY25,
	author       = {Thien Udomsrirungruang and
	Nobuko Yoshida},
	title        = {Top-Down or Bottom-Up? Complexity Analyses of Synchronous Multiparty
	Session Types},
	journal      = {Proc. {ACM} Program. Lang.},
	volume       = {9},
	number       = {{POPL}},
	pages        = {1040--1071},
	year         = {2025},
	url          = {https://doi.org/10.1145/3704872},
	doi          = {10.1145/3704872},
	bibsource    = {dblp computer science bibliography, https://dblp.org}
}

@inproceedings{DBLP:conf/forte/Cruz-FilipeM16,
	author       = {Lu{\'{\i}}s Cruz{-}Filipe and
	Fabrizio Montesi},
	editor       = {Elvira Albert and
	Ivan Lanese},
	title        = {Choreographies in Practice},
	booktitle    = {Formal Techniques for Distributed Objects, Components, and Systems
	- 36th {IFIP} {WG} 6.1 International Conference, {FORTE} 2016, Held
	as Part of the 11th International Federated Conference on Distributed
	Computing Techniques, DisCoTec 2016, Heraklion, Crete, Greece, June
	6-9, 2016, Proceedings},
	series       = {Lecture Notes in Computer Science},
	volume       = {9688},
	pages        = {114--123},
	publisher    = {Springer},
	year         = {2016},
	url          = {https://doi.org/10.1007/978-3-319-39570-8\_8},
	doi          = {10.1007/978-3-319-39570-8\_8},
	bibsource    = {dblp computer science bibliography, https://dblp.org}
}

@misc{depalma2024functionasaservicechoreographicprogramminglanguage,
	title={Towards a Function-as-a-Service Choreographic Programming Language: Examples and Applications}, 
	author={Giuseppe De Palma and Saverio Giallorenzo and Jacopo Mauro and Matteo Trentin and Gianluigi Zavattaro},
	year={2024},
	eprint={2406.09099},
	archivePrefix={arXiv},
	primaryClass={cs.PL},
	url={https://arxiv.org/abs/2406.09099}, 
}

@misc{kashiwa2023portableefficientpracticallibrarylevel,
	title={Portable, Efficient, and Practical Library-Level Choreographic Programming}, 
	author={Shun Kashiwa and Gan Shen and Soroush Zare and Lindsey Kuper},
	year={2023},
	eprint={2311.11472},
	archivePrefix={arXiv},
	primaryClass={cs.PL},
	url={https://arxiv.org/abs/2311.11472}, 
}

@misc{chorus,
	title = {ChoRus: Choreographic Programming in Rust},
	author = {{Languages, Systems, and Data Lab, UC Santa Cruz}},
	howpublished = {\url{https://lsd-ucsc.github.io/ChoRus/introduction.html}},
	note = {Accessed: 2025-07-10}
}

@article{DBLP:journals/toplas/GiallorenzoMP24,
	author       = {Saverio Giallorenzo and
	Fabrizio Montesi and
	Marco Peressotti},
	title        = {Choral: Object-oriented Choreographic Programming},
	journal      = {{ACM} Trans. Program. Lang. Syst.},
	volume       = {46},
	number       = {1},
	pages        = {1:1--1:59},
	year         = {2024},
	url          = {https://doi.org/10.1145/3632398},
	doi          = {10.1145/3632398},
	bibsource    = {dblp computer science bibliography, https://dblp.org}
}

@inproceedings{DBLP:conf/sp/GancherGSDP23,
	author       = {Joshua Gancher and
	Sydney Gibson and
	Pratap Singh and
	Samvid Dharanikota and
	Bryan Parno},
	title        = {Owl: Compositional Verification of Security Protocols via an Information-Flow
	Type System},
	booktitle    = {44th {IEEE} Symposium on Security and Privacy, {SP} 2023, San Francisco,
	CA, USA, May 21-25, 2023},
	pages        = {1130--1147},
	publisher    = {{IEEE}},
	year         = {2023},
	url          = {https://doi.org/10.1109/SP46215.2023.10179477},
	doi          = {10.1109/SP46215.2023.10179477},
	bibsource    = {dblp computer science bibliography, https://dblp.org}
}

@techreport{w3c-ws-cdl-2005,
	title        = {Web Services Choreography Description Language Version 1.0},
	institution  = {World Wide Web Consortium (W3C)},
	author       = {{Web Services Choreography Working Group}},
	year         = {2005},
	month        = {November},
	type         = {W3C Candidate Recommendation},
	note         = {Available at http://www.w3.org/TR/2005/CR-ws-cdl-10-20051109/},
}

@misc{umlwebsite,
	author       = "{Object Management Group}",
	title        = "{Unified Modeling Language (UML) Website}",
	howpublished = "\url{https://www.uml.org/}",
	note         = "Accessed: 2025-07-10"
}

@techreport{ITU-T-Z120-2011,
	author      = "{International Telecommunication Union}",
	title       = "{ITU-T Recommendation Z.120: Message Sequence Chart (MSC)}",
	institution = "{International Telecommunication Union}",
	address     = "Geneva",
	year        = "2011",
	month       = "February",
	type        = "{ITU-T Recommendation}",
	number      = "Z.120",
	url         = "https://www.itu.int/rec/T-REC-Z.120-201102-I/en"
}

@article{DBLP:journals/iandc/GenestKM06,
	author       = {Blaise Genest and
	Dietrich Kuske and
	Anca Muscholl},
	title        = {A Kleene theorem and model checking algorithms for existentially bounded
	communicating automata},
	journal      = {Inf. Comput.},
	volume       = {204},
	number       = {6},
	pages        = {920--956},
	year         = {2006},
	url          = {https://doi.org/10.1016/j.ic.2006.01.005},
	doi          = {10.1016/J.IC.2006.01.005},
	bibsource    = {dblp computer science bibliography, https://dblp.org}
}

@inproceedings{DBLP:conf/cav/BouajjaniEJQ18,
	author       = {Ahmed Bouajjani and
	Constantin Enea and
	Kailiang Ji and
	Shaz Qadeer},
	editor       = {Hana Chockler and
	Georg Weissenbacher},
	title        = {On the Completeness of Verifying Message Passing Programs Under Bounded
	Asynchrony},
	booktitle    = {Computer Aided Verification - 30th International Conference, {CAV}
	2018, Held as Part of the Federated Logic Conference, FloC 2018, Oxford,
	UK, July 14-17, 2018, Proceedings, Part {II}},
	series       = {Lecture Notes in Computer Science},
	volume       = {10982},
	pages        = {372--391},
	publisher    = {Springer},
	year         = {2018},
	url          = {https://doi.org/10.1007/978-3-319-96142-2\_23},
	doi          = {10.1007/978-3-319-96142-2\_23},
	bibsource    = {dblp computer science bibliography, https://dblp.org}
}

@inproceedings{DBLP:conf/concur/BolligGFLLS21,
	author       = {Benedikt Bollig and
	Cinzia Di Giusto and
	Alain Finkel and
	Laetitia Laversa and
	{\'{E}}tienne Lozes and
	Amrita Suresh},
	editor       = {Serge Haddad and
	Daniele Varacca},
	title        = {A Unifying Framework for Deciding Synchronizability},
	booktitle    = {32nd International Conference on Concurrency Theory, {CONCUR} 2021,
	August 24-27, 2021, Virtual Conference},
	series       = {LIPIcs},
	volume       = {203},
	pages        = {14:1--14:18},
	publisher    = {Schloss Dagstuhl - Leibniz-Zentrum f{\"{u}}r Informatik},
	year         = {2021},
	url          = {https://doi.org/10.4230/LIPIcs.CONCUR.2021.14},
	doi          = {10.4230/LIPICS.CONCUR.2021.14},
	bibsource    = {dblp computer science bibliography, https://dblp.org}
}

@inproceedings{DBLP:conf/fsttcs/AkshayDGS13,
	author       = {S. Akshay and
	Ionut Dinca and
	Blaise Genest and
	Alin Stefanescu},
	editor       = {Anil Seth and
	Nisheeth K. Vishnoi},
	title        = {Implementing Realistic Asynchronous Automata},
	booktitle    = {{IARCS} Annual Conference on Foundations of Software Technology and
	Theoretical Computer Science, {FSTTCS} 2013, December 12-14, 2013,
	Guwahati, India},
	series       = {LIPIcs},
	volume       = {24},
	pages        = {213--224},
	publisher    = {Schloss Dagstuhl - Leibniz-Zentrum f{\"{u}}r Informatik},
	year         = {2013},
	url          = {https://doi.org/10.4230/LIPIcs.FSTTCS.2013.213},
	doi          = {10.4230/LIPICS.FSTTCS.2013.213},
	bibsource    = {dblp computer science bibliography, https://dblp.org}
}

@article{Li25OopslaOfficial,
	author    = {Elaine Li and Felix Stutz and Thomas Wies and Damien Zufferey},
	title     = {Characterizing Implementability of Global Protocols with Infinite States and Data},
	year      = {2025},
	publisher = {ACM},
	journal = {PACMPL},
	number = {Object-oriented Programming, Systems, Languages, and Applications (OOPSLA)},
	volume    = {9},
}

@article{DBLP:journals/pacmpl/GiustoFLL23,
	author       = {Cinzia Di Giusto and
	Davide Ferr{\'{e}} and
	Laetitia Laversa and
	{\'{E}}tienne Lozes},
	title        = {A Partial Order View of Message-Passing Communication Models},
	journal      = {Proc. {ACM} Program. Lang.},
	volume       = {7},
	number       = {{POPL}},
	pages        = {1601--1627},
	year         = {2023},
	url          = {https://doi.org/10.1145/3571248},
	doi          = {10.1145/3571248},
	bibsource    = {dblp computer science bibliography, https://dblp.org}
}

@inproceedings{DBLP:conf/concur/DelpyMS24,
	author       = {Romain Delpy and
	Anca Muscholl and
	Gr{\'{e}}goire Sutre},
	editor       = {Rupak Majumdar and
	Alexandra Silva},
	title        = {An Automata-Based Approach for Synchronizable Mailbox Communication},
	booktitle    = {35th International Conference on Concurrency Theory, {CONCUR} 2024,
	September 9-13, 2024, Calgary, Canada},
	series       = {LIPIcs},
	volume       = {311},
	pages        = {22:1--22:19},
	publisher    = {Schloss Dagstuhl - Leibniz-Zentrum f{\"{u}}r Informatik},
	year         = {2024},
	url          = {https://doi.org/10.4230/LIPIcs.CONCUR.2024.22},
	doi          = {10.4230/LIPICS.CONCUR.2024.22},
	bibsource    = {dblp computer science bibliography, https://dblp.org}
}

@inproceedings{DBLP:conf/fm/ChevrouH0Q19,
	author       = {Florent Chevrou and
	Aur{\'{e}}lie Hurault and
	Shin Nakajima and
	Philippe Qu{\'{e}}innec},
	editor       = {Emil Sekerinski and
	Nelma Moreira and
	Jos{\'{e}} N. Oliveira and
	Daniel Ratiu and
	Riccardo Guidotti and
	Marie Farrell and
	Matt Luckcuck and
	Diego Marmsoler and
	Jos{\'{e}} Creissac Campos and
	Troy Astarte and
	Laure Gonnord and
	Antonio Cerone and
	Luis Couto and
	Brijesh Dongol and
	Martin Kutrib and
	Pedro Monteiro and
	David Delmas},
	title        = {A Map of Asynchronous Communication Models},
	booktitle    = {Formal Methods. {FM} 2019 International Workshops - Porto, Portugal,
	October 7-11, 2019, Revised Selected Papers, Part {II}},
	series       = {Lecture Notes in Computer Science},
	volume       = {12233},
	pages        = {307--322},
	publisher    = {Springer},
	year         = {2019},
	url          = {https://doi.org/10.1007/978-3-030-54997-8\_20},
	doi          = {10.1007/978-3-030-54997-8\_20},
	bibsource    = {dblp computer science bibliography, https://dblp.org}
}

@article{DBLP:journals/fac/ChevrouHQ16,
	author       = {Florent Chevrou and
	Aur{\'{e}}lie Hurault and
	Philippe Qu{\'{e}}innec},
	title        = {On the diversity of asynchronous communication},
	journal      = {Formal Aspects Comput.},
	volume       = {28},
	number       = {5},
	pages        = {847--879},
	year         = {2016},
	url          = {https://doi.org/10.1007/s00165-016-0379-x},
	doi          = {10.1007/S00165-016-0379-X},
	bibsource    = {dblp computer science bibliography, https://dblp.org}
}

@article{DBLP:journals/fac/NeykovaBY17,
	author       = {Rumyana Neykova and
	Laura Bocchi and
	Nobuko Yoshida},
	title        = {Timed runtime monitoring for multiparty conversations},
	journal      = {Formal Aspects Comput.},
	volume       = {29},
	number       = {5},
	pages        = {877--910},
	year         = {2017},
	url          = {https://doi.org/10.1007/s00165-017-0420-8},
	doi          = {10.1007/S00165-017-0420-8},
	bibsource    = {dblp computer science bibliography, https://dblp.org}
}

@article{DBLP:journals/corr/NeykovaY16,
	author       = {Rumyana Neykova and
	Nobuko Yoshida},
	title        = {Multiparty Session Actors},
	journal      = {Log. Methods Comput. Sci.},
	volume       = {13},
	number       = {1},
	year         = {2017},
	url          = {https://doi.org/10.23638/LMCS-13(1:17)2017},
	doi          = {10.23638/LMCS-13(1:17)2017},
	bibsource    = {dblp computer science bibliography, https://dblp.org}
}

@InProceedings{demuijnckhughes_et_al:LIPIcs.ECOOP.2019.6,
	author =	{de Muijnck-Hughes, Jan and Vanderbauwhede, Wim},
	title =	{{A Typing Discipline for Hardware Interfaces}},
	booktitle =	{33rd European Conference on Object-Oriented Programming (ECOOP 2019)},
	pages =	{6:1--6:27},
	series =	{Leibniz International Proceedings in Informatics (LIPIcs)},
	ISBN =	{978-3-95977-111-5},
	ISSN =	{1868-8969},
	year =	{2019},
	volume =	{134},
	editor =	{Donaldson, Alastair F.},
	publisher =	{Schloss Dagstuhl -- Leibniz-Zentrum f{\"u}r Informatik},
	address =	{Dagstuhl, Germany},
	URL =		{https://drops.dagstuhl.de/entities/document/10.4230/LIPIcs.ECOOP.2019.6},
	URN =		{urn:nbn:de:0030-drops-107983},
	doi =		{10.4230/LIPIcs.ECOOP.2019.6}
}

@inproceedings{DBLP:conf/fpl/NiuNYWYL16,
	author       = {Xinyu Niu and
	Nicholas Ng and
	Tomofumi Yuki and
	Shaojun Wang and
	Nobuko Yoshida and
	Wayne Luk},
	editor       = {Paolo Ienne and
	Walid A. Najjar and
	Jason Helge Anderson and
	Philip Brisk and
	Walter Stechele},
	title        = {{EURECA} compilation: Automatic optimisation of cycle-reconfigurable
	circuits},
	booktitle    = {26th International Conference on Field Programmable Logic and Applications,
	{FPL} 2016, Lausanne, Switzerland, August 29 - September 2, 2016},
	pages        = {1--4},
	publisher    = {{IEEE}},
	year         = {2016},
	url          = {https://doi.org/10.1109/FPL.2016.7577359},
	doi          = {10.1109/FPL.2016.7577359},
	bibsource    = {dblp computer science bibliography, https://dblp.org}
}

@article{DBLP:journals/pacmpl/MajumdarYZ20,
	author       = {Rupak Majumdar and
	Nobuko Yoshida and
	Damien Zufferey},
	title        = {Multiparty motion coordination: from choreographies to robotics programs},
	journal      = {Proc. {ACM} Program. Lang.},
	volume       = {4},
	number       = {{OOPSLA}},
	pages        = {134:1--134:30},
	year         = {2020},
	url          = {https://doi.org/10.1145/3428202},
	doi          = {10.1145/3428202},
	bibsource    = {dblp computer science bibliography, https://dblp.org}
}

@inproceedings{DBLP:conf/ecoop/Castro-PerezY23,
	author       = {David Castro{-}Perez and
	Nobuko Yoshida},
	editor       = {Karim Ali and
	Guido Salvaneschi},
	title        = {Dynamically Updatable Multiparty Session Protocols: Generating Concurrent
	Go Code from Unbounded Protocols},
	booktitle    = {37th European Conference on Object-Oriented Programming, {ECOOP} 2023,
	July 17-21, 2023, Seattle, Washington, United States},
	series       = {LIPIcs},
	volume       = {263},
	pages        = {6:1--6:30},
	publisher    = {Schloss Dagstuhl - Leibniz-Zentrum f{\"{u}}r Informatik},
	year         = {2023},
	url          = {https://doi.org/10.4230/LIPIcs.ECOOP.2023.6},
	doi          = {10.4230/LIPICS.ECOOP.2023.6},
	bibsource    = {dblp computer science bibliography, https://dblp.org}
}

@article{DBLP:journals/pacmpl/CastroHJNY19,
	author       = {David Castro{-}Perez and
	Raymond Hu and
	Sung{-}Shik Jongmans and
	Nicholas Ng and
	Nobuko Yoshida},
	title        = {Distributed programming using role-parametric session types in go:
	statically-typed endpoint APIs for dynamically-instantiated communication
	structures},
	journal      = {Proc. {ACM} Program. Lang.},
	volume       = {3},
	number       = {{POPL}},
	pages        = {29:1--29:30},
	year         = {2019},
	url          = {https://doi.org/10.1145/3290342},
	doi          = {10.1145/3290342},
	bibsource    = {dblp computer science bibliography, https://dblp.org}
}

@inproceedings{DBLP:conf/ecoop/ImaiNYY19,
	author       = {Keigo Imai and
	Rumyana Neykova and
	Nobuko Yoshida and
	Shoji Yuen},
	editor       = {Robert Hirschfeld and
	Tobias Pape},
	title        = {Multiparty Session Programming With Global Protocol Combinators},
	booktitle    = {34th European Conference on Object-Oriented Programming, {ECOOP} 2020,
	November 15-17, 2020, Berlin, Germany (Virtual Conference)},
	series       = {LIPIcs},
	volume       = {166},
	pages        = {9:1--9:30},
	publisher    = {Schloss Dagstuhl - Leibniz-Zentrum f{\"{u}}r Informatik},
	year         = {2020},
	url          = {https://doi.org/10.4230/LIPIcs.ECOOP.2020.9},
	doi          = {10.4230/LIPICS.ECOOP.2020.9},
	bibsource    = {dblp computer science bibliography, https://dblp.org}
}

@article{DBLP:journals/darts/LagaillardieNY22,
	author       = {Nicolas Lagaillardie and
	Rumyana Neykova and
	Nobuko Yoshida},
	title        = {Stay Safe Under Panic: Affine Rust Programming with Multiparty Session
	Types (Artifact)},
	journal      = {Dagstuhl Artifacts Ser.},
	volume       = {8},
	number       = {2},
	pages        = {09:1--09:16},
	year         = {2022},
	url          = {https://doi.org/10.4230/DARTS.8.2.9},
	doi          = {10.4230/DARTS.8.2.9},
	bibsource    = {dblp computer science bibliography, https://dblp.org}
}

@inproceedings{DBLP:conf/tools/NgYH12,
	author       = {Nicholas Ng and
	Nobuko Yoshida and
	Kohei Honda},
	editor       = {Carlo A. Furia and
	Sebastian Nanz},
	title        = {Multiparty Session {C:} Safe Parallel Programming with Message Optimisation},
	booktitle    = {Objects, Models, Components, Patterns - 50th International Conference,
	{TOOLS} 2012, Prague, Czech Republic, May 29-31, 2012. Proceedings},
	series       = {Lecture Notes in Computer Science},
	volume       = {7304},
	pages        = {202--218},
	publisher    = {Springer},
	year         = {2012},
	url          = {https://doi.org/10.1007/978-3-642-30561-0\_15},
	doi          = {10.1007/978-3-642-30561-0\_15},
	bibsource    = {dblp computer science bibliography, https://dblp.org}
}

@inproceedings{DBLP:conf/fase/HuY17,
	author       = {Raymond Hu and
	Nobuko Yoshida},
	editor       = {Marieke Huisman and
	Julia Rubin},
	title        = {Explicit Connection Actions in Multiparty Session Types},
	booktitle    = {Fundamental Approaches to Software Engineering - 20th International
	Conference, {FASE} 2017, Held as Part of the European Joint Conferences
	on Theory and Practice of Software, {ETAPS} 2017, Uppsala, Sweden,
	April 22-29, 2017, Proceedings},
	series       = {Lecture Notes in Computer Science},
	volume       = {10202},
	pages        = {116--133},
	publisher    = {Springer},
	year         = {2017},
	url          = {https://doi.org/10.1007/978-3-662-54494-5\_7},
	doi          = {10.1007/978-3-662-54494-5\_7},
	bibsource    = {dblp computer science bibliography, https://dblp.org}
}

@inproceedings{DBLP:conf/fase/HuY16,
	author       = {Raymond Hu and
	Nobuko Yoshida},
	editor       = {Perdita Stevens and
	Andrzej Wasowski},
	title        = {Hybrid Session Verification Through Endpoint {API} Generation},
	booktitle    = {Fundamental Approaches to Software Engineering - 19th International
	Conference, {FASE} 2016, Held as Part of the European Joint Conferences
	on Theory and Practice of Software, {ETAPS} 2016, Eindhoven, The Netherlands,
	April 2-8, 2016, Proceedings},
	series       = {Lecture Notes in Computer Science},
	volume       = {9633},
	pages        = {401--418},
	publisher    = {Springer},
	year         = {2016},
	url          = {https://doi.org/10.1007/978-3-662-49665-7\_24},
	doi          = {10.1007/978-3-662-49665-7\_24},
	bibsource    = {dblp computer science bibliography, https://dblp.org}
}

@book{M23,
	author={Montesi, Fabrizio},
	title={Introduction to Choreographies},
	place={Cambridge},
	doi={10.1017/9781108981491},
	publisher={Cambridge University Press},
	year={2023}
}

@incollection{DBLP:books/sp/24/Yoshida24,
	author       = {Nobuko Yoshida},
	editor       = {Frank S. de Boer and
	Ferruccio Damiani and
	Reiner H{\"{a}}hnle and
	Einar Broch Johnsen and
	Eduard Kamburjan},
	title        = {Programming Language Implementations with Multiparty Session Types},
	booktitle    = {Active Object Languages: Current Research Trends},
	series       = {Lecture Notes in Computer Science},
	volume       = {14360},
	pages        = {147--165},
	publisher    = {Springer},
	year         = {2024},
	url          = {https://doi.org/10.1007/978-3-031-51060-1\_6},
	doi          = {10.1007/978-3-031-51060-1\_6},
	bibsource    = {dblp computer science bibliography, https://dblp.org}
}

@inproceedings{DBLP:conf/ppopp/CutnerYV22,
	author       = {Zak Cutner and
	Nobuko Yoshida and
	Martin Vassor},
	editor       = {Jaejin Lee and
	Kunal Agrawal and
	Michael F. Spear},
	title        = {Deadlock-free asynchronous message reordering in rust with multiparty
	session types},
	booktitle    = {PPoPP '22: 27th {ACM} {SIGPLAN} Symposium on Principles and Practice
	of Parallel Programming, Seoul, Republic of Korea, April 2 - 6, 2022},
	pages        = {246--261},
	publisher    = {{ACM}},
	year         = {2022},
	url          = {https://doi.org/10.1145/3503221.3508404},
	doi          = {10.1145/3503221.3508404},
	bibsource    = {dblp computer science bibliography, https://dblp.org}
}

@inproceedings{DBLP:conf/ecoop/VassorY24,
	author       = {Martin Vassor and
	Nobuko Yoshida},
	editor       = {Jonathan Aldrich and
	Guido Salvaneschi},
	title        = {Refinements for Multiparty Message-Passing Protocols: Specification-Agnostic
	Theory and Implementation},
	booktitle    = {38th European Conference on Object-Oriented Programming, {ECOOP} 2024,
	September 16-20, 2024, Vienna, Austria},
	series       = {LIPIcs},
	volume       = {313},
	pages        = {41:1--41:29},
	publisher    = {Schloss Dagstuhl - Leibniz-Zentrum f{\"{u}}r Informatik},
	year         = {2024},
	url          = {https://doi.org/10.4230/LIPIcs.ECOOP.2024.41},
	doi          = {10.4230/LIPICS.ECOOP.2024.41},
	bibsource    = {dblp computer science bibliography, https://dblp.org}
}

@article{DBLP:journals/fmsd/DemangeonHHNY15,
	author       = {Romain Demangeon and
	Kohei Honda and
	Raymond Hu and
	Rumyana Neykova and
	Nobuko Yoshida},
	title        = {Practical interruptible conversations: distributed dynamic verification
	with multiparty session types and Python},
	journal      = {Formal Methods Syst. Des.},
	volume       = {46},
	number       = {3},
	pages        = {197--225},
	year         = {2015},
	url          = {https://doi.org/10.1007/s10703-014-0218-8},
	doi          = {10.1007/S10703-014-0218-8},
	bibsource    = {dblp computer science bibliography, https://dblp.org}
}

@article{DBLP:journals/ita/Zielonka87,
	author       = {Wieslaw Zielonka},
	title        = {Notes on Finite Asynchronous Automata},
	journal      = {{RAIRO} Theor. Informatics Appl.},
	volume       = {21},
	number       = {2},
	pages        = {99--135},
	year         = {1987},
	url          = {https://doi.org/10.1051/ita/1987210200991},
	doi          = {10.1051/ITA/1987210200991},
	bibsource    = {dblp computer science bibliography, https://dblp.org}
}

@unpublished{Multris,
  author       = {Jonas Kastberg Hinrichsen and
                  Jules Jacobs and
                  Robbert Krebbers},
  title        = {Multris: Functional Verification of Multiparty Message
                  Passing in Separation Logic},
  url          = {https://jihgfee.github.io/papers/multris_manuscript.pdf},
  year   = 2024
}

@phdthesis{DBLP:phd/dnb/Stutz24,
  author       = {Felix Stutz},
  title        = {Implementability of Asynchronous Communication Protocols - The Power
                  of Choice},
  school       = {Kaiserslautern University of Technology, Germany},
  year         = {2024},
  url          = {https://kluedo.ub.rptu.de/frontdoor/index/index/docId/8077},
  urn          = {urn:nbn:de:hbz:386-kluedo-80778},
  bibsource    = {dblp computer science bibliography, https://dblp.org}
}

\clearpage

\appendix
\section{Appendix}
\label{appendix}

\subsection{Proofs for \cref{sec:channel-compliance}}
\label{sec:lem:model-entails-ccfacts-proof}

In this section, we prove \cref{lem:model-entails-ccfacts}. Let $\netarch \in \netarchs$ be a network architecture that satisfies our axiomatic network model. We need to prove that $\netarch$ satisfies \ccfact{1} through \ccfact{6}. For convenience, we state these facts again here in one place:

\begin{definition}[Channel compliance facts]
	\label{def:cc-facts} 
	Let $w \in \AlphAsyncSubscript^*$ be channel-compliant. 
	\begin{enumerate}[\ccfact{\arabic*}]
		\item For all $x \in \AlphAsync_!$, $wx$ is channel-compliant. \label{ca:send-enabled}
		\item For all $\procA \neq \procB \in \Procs$ and $\val \in \MsgVals$, $\card {w \wproj_{\snd{\procA}{\procB}{\val}}} \geq \card {w \wproj_{\rcv{\procA}{\procB}{\val}}}$. \label{ca:send-before-receive}
		\item For all $\rho \in \AlphSyncSubscript^*$, $\SyncToAsync(\rho)$ is channel-compliant. \label{ca:sync}
                \item For all $\rho \in \AlphSyncSubscript^*$, if 
                  $w \equiv_\Procs \SyncToAsync(\rho)$, then $w$ is channel-matched. \label{ca:sync-matched}
		\item For all $x \in \AlphAsync_!$, $y \in \AlphAsync_?$, if $wy$ is channel-compliant then $wxy$ is channel-compliant. \label{ca:intro-send}
		\item Let $\alpha, \beta \in \AlphSyncSubscript^*, \procA \neq \procB \in \Procs$ and $\val \in \MsgVals$ such that for all $\procA \in \Procs$, $w \wproj_{\AlphAsync_{\procA}} \leq \SyncToAsync(\alpha\beta) \wproj_{\AlphAsync_{\procA}}$, and $w \wproj_{\AlphAsync_{\procB}} = \SyncToAsync(\alpha)  \wproj_{\AlphAsync_{\procB}}$, and $w \cdot \rcv{\procA}{\procB}{\val}$ is channel-compliant. Then, there exists $w'$ such that $w'$ is compliant with $\beta$, $w' \wproj_{\AlphAsync_{\procB}} = \emptystring$ and $w' \cdot \rcv{\procA}{\procB}{\val}$ is channel-compliant. \label{ca:history-insensitive}
	\end{enumerate}
\end{definition}

Channel states are maps from channels to buffer contents, and thus a word is channel-compliant if and only if its projection onto each channel is defined for the respective buffer. In the proofs below, we thus only need to reason about buffer insert and remove operations, and buffer states. 
	
\begin{lemma}[\ref{ca:send-before-receive}]
	For all $\procA \neq \procB \in \Procs$ and $\val \in \MsgVals$, $\card {w \wproj_{\snd{\procA}{\procB}{\val}}} \geq \card {w \wproj_{\rcv{\procA}{\procB}{\val}}}$. 
\end{lemma}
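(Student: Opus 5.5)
We prove the statement by reducing it to a claim about a single channel and then inducting on the length of the operation sequence. Since $w$ is channel-compliant, the sequence of operations it induces on the channel $c = \buffermap(\procA,\procB)$, starting from $\empbuffer$, is defined. Events on other channels do not affect $c$. So it suffices to show: for every operation sequence $\opseq \in \ops^*$ such that $\revapp{\empbuffer}{\opseq}$ is defined, and every message $\msg = (\procA,\procB,\val)$, the number of occurrences of $\bufremove(\msg)$ in $\opseq$ is at most the number of occurrences of $\bufinsert(\msg)$. Occurrences of $\snd{\procA}{\procB}{\val}$ and $\rcv{\procA}{\procB}{\val}$ in $w$ correspond exactly to occurrences of $\bufinsert(\msg)$ and $\bufremove(\msg)$ in the operation sequence on $c$, so the lemma follows.

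We prove this strengthened claim by strong induction on the length of $\opseq$, generalizing over $\msg$. If $\opseq$ contains no $\bufinsert(\msg)$, we show it contains no $\bufremove(\msg)$. Suppose it did, and write $\opseq = \opseq_1\,\bufremove(\msg)\,\opseq_2$ with the first such remove. Then $\revapp{\revapp{\empbuffer}{\opseq_1}}{\bufremove(\msg)}$ is defined and $\opseq_1$ contains no operation on $\msg$. We then eliminate every insert in $\opseq_1$, one at a time from the right, and conclude that $\revapp{\empbuffer}{\bufremove(\msg)}$ is defined, contradicting \ref{ba:emp-remove}.

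\textbf{Eliminating an insert.} Let $\bufinsert(\msgb)$ be the last insert in $\opseq_1$, so that what follows it up to and including $\bufremove(\msg)$ consists only of removes. There are two cases.
\begin{itemize}
\item If $\bufremove(\msgb)$ does not occur in that suffix, \ref{ba:elim-insert} deletes $\bufinsert(\msgb)$ while preserving definedness.
\item If it does occur, we first use \ref{ba:left-com-remove-neq} to commute the removes of messages other than $\msgb$ to the left of $\bufinsert(\msgb)$; each step is an equality, so definedness is preserved. We then cancel $\bufinsert(\msgb)$ against the first matching $\bufremove(\msgb)$. If that remove was defined before the insert, we use \ref{ba:left-com-remove-eq}; otherwise we use \ref{ba:cancel-insert-remove}.
\end{itemize}
Both cases strictly shorten the prefix, so we can iterate.

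\textbf{General case.} Here $\opseq$ contains some $\bufinsert(\msg)$. Take the first $\bufremove(\msg)$ that is preceded by a $\bufinsert(\msg)$, and use \ref{ba:cancel-insert-remove} or \ref{ba:left-com-remove-eq} to cancel it against an earlier insert. This yields a shorter defined sequence whose insert and remove counts for $\msg$ each drop by one, and the induction hypothesis applies. If some $\bufremove(\msg)$ occurs before any insert of $\msg$, the prefix argument above rules it out.

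\textbf{Main obstacle.} The hardest step is the bookkeeping in the cancellation: it must preserve definedness of the whole remaining sequence, not just the prefix. \ref{ba:cancel-insert-remove} gives equality of the resulting buffer states, so the suffix stays defined, but we must check its side condition that $\bufremove(\msg)$ does not occur in the intermediate segment. We ensure this by always pairing an insert with the nearest subsequent matching remove.
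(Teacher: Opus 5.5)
Your overall plan is sound: project onto the single channel $\buffermap(\procA,\procB)$, show that no $\bufremove(\msg)$ can occur before the first $\bufinsert(\msg)$, then cancel an insert/remove pair of $\msg$ and apply the induction hypothesis. One step fails as written, however: \ref{ba:left-com-remove-eq} is a commutation, not a cancellation. It turns $\bufinsert(\msg)\,\bufremove(\msg)$ into $\bufremove(\msg)\,\bufinsert(\msg)$, which has the same length and the same counts for $\msg$. Simply deleting such a pair need not preserve definedness of what follows. For FIFO queues and $\buffer = \msg\,\msgb$ with $\msg \neq \msgb$, the sequence $\bufinsert(\msg)\,\bufremove(\msg)\,\bufremove(\msgb)$ is defined at $\buffer$, but $\revapp{\buffer}{\bufremove(\msgb)}$ is not. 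So in your general case, whenever $\bufremove(\msg)$ is already enabled before the chosen earlier insert, you do not get a shorter sequence with both counts reduced, and the induction hypothesis cannot be invoked. An example is pairing the remove in $\bufinsert(\msg)\,\bufinsert(\msg)\,\bufremove(\msg)$ with the second insert. The repair is already in your proposal: pair the first $\bufremove(\msg)$ with the \emph{first} $\bufinsert(\msg)$ of $\opseq$. The part of $\opseq$ before that insert contains no operation on $\msg$. Your prefix argument therefore shows that $\bufremove(\msg)$ is undefined there, which is exactly the first hypothesis of \ref{ba:cancel-insert-remove}. No $\bufremove(\msg)$ lies between the two, since this remove is the first one. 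Thus \ref{ba:cancel-insert-remove} always applies and \ref{ba:left-com-remove-eq} is never needed.

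The same misuse breaks termination of your elimination procedure in the prefix argument. In the \ref{ba:left-com-remove-eq} branch the insert survives and only moves one position to the right, so ``both cases strictly shorten the prefix'' is false. A lexicographic measure would fix this: first the number of inserts, then the number of operations after the last insert. It is simpler, though, to eliminate from the left. The first operation is some $\bufinsert(\msgb)$; if it were a remove, \ref{ba:emp-remove} would already give the contradiction. Since $\revapp{\empbuffer}{\bufremove(\msgb)}$ is undefined, this insert is either deleted by \ref{ba:elim-insert} or cancelled against its first matching remove by \ref{ba:cancel-insert-remove}, and no commutation is needed. Applied to the whole sequence, this left-to-right step proves the count inequality by a single strong induction, using only \ref{ba:emp-remove}, \ref{ba:elim-insert} and \ref{ba:cancel-insert-remove}.

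Once repaired, your argument is a cleaner organization of the paper's proof. The paper inducts on $\card{w}$ by peeling off the final receive and case-splitting on the preceding event. It likewise combines \ref{ba:left-com-remove-neq} commutations with a \ref{ba:cancel-insert-remove} cancellation of the first send against its first matching receive.
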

\begin{proof}
	By strong induction on $\card{w}$. Let $w = w'x$. 
	We case split on whether $x$ is a send or receive event. 
	If $x \in \AlphAsync_!$, the claim immediately follows. If $x \in \AlphAsync_?$, let $x = \rcv{\procC}{\procD}{\val}$. 
	We want to show that $\card {w'x \wproj_{\snd{\procC}{\procD}{\val}}} \geq \card {w'x \wproj_{\rcv{\procC}{\procD}{\val}}}$. 
	From the induction hypothesis, $\card {w' \wproj_{\snd{\procC}{\procD}{\val}}} \geq \card {w' \wproj_{\rcv{\procC}{\procD}{\val}}}$. 
	It suffices to analyze the case when $\card {w' \wproj_{\snd{\procC}{\procD}{\val}}} = \card {w' \wproj_{\rcv{\procC}{\procD}{\val}}}$, i.e. every message $\val$ sent in $w'$ is received. 
	Let $w' = w''y$. 
	Next, we case split on whether $y$ is a send or receive event. 
	If $y \in \AlphAsync_!$, and moreover $y$'s message is not equal to $\val$, then it follows from (\ref{ba:left-com-remove-neq}) that 
	$w' \cdot \rcv{\procC}{\procD}{\val} \cdot y$ 
	Let $\val_y$ be the message value of $y$. 
	If $\val_y \neq \val$, then $w = w' \cdot \snd{\procA}{\procB}{\val} \cdot \rcv{\procC}{\procD}{\val}$, and by \ref{ba:left-com-remove-neq} lifted to words, we obtain that $w'. \rcv{\procC}{\procD}{\val} \cdot \snd{\procA}{\procB}{\val}$ reaches the same channel state as $w$, and we use the induction hypothesis on $w' \cdot \rcv{\procC}{\procD}{\val}$.  
	If $\val_y = \val$, note that because messages are tagged with sender and receiver information, it follows that $y = \snd{\procC}{\procD}{\val}$. We use the induction hypothesis on $w'$ directly. 
	If $y \in \AlphAsync_?$, we first use the induction hypothesis on $w'y$ to exhibit at least one occurrence of $\snd{\procC}{\procD}{\val}$ in $w'$. We find the first occurrence of $\snd{\procC}{\procD}{\val}$ in $w'$: let $w' = u_1 \cdot \snd{\procC}{\procD}{\val} \cdot u_2$ such that $u_1 \wproj_{\snd{\procC}{\procD}{\val}} = \emptystring$. 
	If $u_1 \cdot \rcv{\procC}{\procD}{\val}$ is buffer compliant, we use the induction hypothesis on $u_1$. 
	If not, we find the first occurrence of $\rcv{\procC}{\procD}{\val}$ in $u_2$. 
	If $\rcv{\procC}{\procD}{\val}$ does not occur in $u_2$, we instantiate the induction hypothesis with $w'$. 
	Otherwise, we further split up $u_2 = v_1 \cdot \rcv{\procC}{\procD}{\val} \cdot v_2$ such that $v_1 \wproj_{\rcv{\procC}{\procD}{\val}} = \emptystring$. 
	We then use (\ref{ba:cancel-insert-remove}) on $w = u_1 \cdot \snd{\procC}{\procD}{\val} \cdot v_1 \cdot \rcv{\procC}{\procD}{\val} \cdot v_2$ to conclude that $u_1 v_1 v_2$ reaches the same state as $w$. We use the induction hypothesis on $u_1 v_1 v_2 \cdot \rcv{\procA}{\procB}{\val}$ to prove the claim. 
\end{proof}

\begin{lemma}[\ref{ca:send-enabled}]
	For all $x \in \AlphAsync_!$, $wx$ is channel compliant. 
\end{lemma}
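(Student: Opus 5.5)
The proof reduces the statement to a single buffer-level fact. As noted above, $w$ is channel-compliant iff, for every channel $c \in \channels$, the sequence of operations obtained by projecting $w$ onto the events routed through $c$ is defined on $\empbuffer$. Here $\snd{\procA}{\procB}{\val}$ contributes $\bufinsert(\procA,\procB,\val)$ to channel $\buffermap(\procA,\procB)$, and $\rcv{\procA}{\procB}{\val}$ contributes $\bufremove(\procA,\procB,\val)$ to the same channel. Equivalently, the universal CLTS $\mathcal{U}_\netarch$ can execute $w$, since its local components accept every word over $\AlphAsync_\procA$ and impose no constraint.

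Let $w$ be channel-compliant, and let $(\vec{s},\chstate)$ be the configuration of $\mathcal{U}_\netarch$ reached after $w$. Let $x = \snd{\procA}{\procB}{\val}$. Then $x$ is enabled from $(\vec{s},\chstate)$ if two conditions hold.

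\begin{enumerate}
\item The local LTS $U_\procA$ has an $x$-transition from $\vec{s}_\procA$. This holds because $\lang(U_\procA) = \AlphAsync_\procA^*$. Concretely, we take $U_\procA$ to be the one-state (or prefix-closed) deterministic LTS accepting everything, so that every finite word over $\AlphAsync_\procA$ is a trace of it.
\item $\textit{insert}(\chstate,\procA,\procB,\val)$ is defined. By the lifting of $\bufinsert$ to channel states, this reduces to $\revapp{\chstate(\buffermap(\procA,\procB))}{\bufinsert(\procA,\procB,\val)}$ being defined. That follows immediately from \ref{ba:insert-total}, which says insertion is total on every buffer content.
\end{enumerate}

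All other channels are unchanged, so the new channel state is well defined. Hence $wx$ is a trace of $\mathcal{U}_\netarch$, that is, it is channel-compliant.

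There is essentially no obstacle here; the argument is a direct unfolding of definitions plus \ref{ba:insert-total}. The only point needing care is to justify that the universal local LTSs really admit every local extension from any reached state, not merely that their maximal language is $\AlphAsync_\procA^*$. I would settle this by fixing the canonical choice of $U_\procA$: a single final state with a self-loop on every letter of $\AlphAsync_\procA$.
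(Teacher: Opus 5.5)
Your proof is correct and is the same argument as the paper's, which simply invokes the totality of insertion (\ref{ba:insert-total}); the rest of what you write is the definitional unfolding the paper leaves implicit. One small remark: under the paper's maximal-trace convention, your one-state all-accepting $U_\procA$ has language $\AlphAsync_\procA^\infty$ rather than $\AlphAsync_\procA^*$. This does not affect the argument, since channel compliance only concerns finite traces of $\mathcal{U}_\netarch$.
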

\begin{proof}
	Follows immediately from (\ref{ba:insert-total}) stating the totality of insert. 
\end{proof} 

\begin{lemma}[\ref{ca:intro-send}]
For all $x \in \AlphAsync_!$, $y \in \AlphAsync_?$, if $wy$ is channel compliant then $wxy$ is channel compliant.
\end{lemma}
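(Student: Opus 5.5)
We reduce the statement to a per-channel fact about buffers, as in the preceding lemmas: $wxy$ is channel-compliant iff its projection onto every channel $c \in \channels$ is defined when applied to $\empbuffer$. Write $x = \snd{\procA}{\procB}{\val}$ with channel $c_x = \buffermap(\procA,\procB)$, and $y = \rcv{\procC}{\procD}{\val'}$ with channel $c_y = \buffermap(\procC,\procD)$. Let $\buffer$ denote the contents of $c_x$ (resp.\ $c_y$) after executing $w$ from the initial channel state; since $w$ is channel-compliant, this is defined.

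On every channel other than $c_x$ and $c_y$, the projections of $wxy$ and $w$ coincide, so they remain defined. If $c_x \neq c_y$, then on $c_x$ the projection of $wxy$ is that of $w$ followed by $\bufinsert(\procA,\procB,\val)$, which is defined by \ref{ba:insert-total}. On $c_y$ the projection of $wxy$ equals that of $wy$, which is defined by assumption. This settles the case $c_x \neq c_y$.

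The remaining case is $c_x = c_y$. Here we must show $\revapp{\revapp{\buffer}{\bufinsert(\procA,\procB,\val)}}{\bufremove(\procC,\procD,\val')}$ is defined, knowing that $\revapp{\buffer}{\bufremove(\procC,\procD,\val')}$ is defined (from compliance of $wy$). This is precisely axiom \ref{ba:intro-insert}, instantiated with $\msg = (\procC,\procD,\val')$ and $\msgb = (\procA,\procB,\val)$. We therefore conclude that every channel projection of $wxy$ is defined.

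The only delicate point is the bookkeeping lemma that channel compliance decomposes channelwise, i.e.\ that lifted $\textit{insert}/\textit{remove}$ only touch $\buffermap$ of the event's endpoints. This follows directly from the definition of the lifted operations and is already used implicitly in the proof of \ref{ca:send-enabled}, so no real obstacle is expected.
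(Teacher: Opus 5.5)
Your proof is correct and follows the paper's argument, which simply cites \ref{ba:intro-insert}. Your channelwise case split, using \ref{ba:insert-total} when $x$ and $y$ act on different channels, only makes explicit the decomposition the paper leaves implicit.
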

\begin{proof}
  Follows immediately from (\ref{ba:intro-insert}). 
\end{proof} 

\begin{lemma}[\ref{ca:sync}] 
	For all $\rho \in \AlphSyncSubscript^*$, $\SyncToAsync(\rho)$ is channel compliant. 
\end{lemma}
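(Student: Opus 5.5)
We prove the stronger statement that $\SyncToAsync(\rho)$ is channel-matched, i.e., executing it from the initial channel state $\chstate_0$ is defined and returns to $\chstate_0$. Channel-compliance of $\SyncToAsync(\rho)$ then follows immediately. The proof is by induction on the length of $\rho$.

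\emph{Base case.} For $\rho = \emptystring$, $\SyncToAsync(\rho)$ is the empty word. It is trivially defined and leaves the channel state at $\chstate_0$.

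\emph{Inductive step.} Let $\rho = \rho' \cdot \msgFromTo{\procA}{\procB}{\val}$. Then
\[
\SyncToAsync(\rho) = \SyncToAsync(\rho') \cdot \snd{\procA}{\procB}{\val} \cdot \rcv{\procA}{\procB}{\val}.
\]
By the induction hypothesis, $\SyncToAsync(\rho')$ is defined and reaches $\chstate_0$, so every channel holds $\empbuffer$. The two new events touch only the single channel $\buffermap(\procA,\procB)$; all other channels stay at $\empbuffer$. Write $\msg = (\procA,\procB,\val)$ and work on that channel.

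\begin{enumerate}
\item By \ref{ba:emp-insert-remove}, $\revapp{\revapp{\empbuffer}{\bufinsert(\msg)}}{\bufremove(\msg)}$ is defined. Hence the extended word is channel-compliant.
\item To show the channel returns to $\empbuffer$, apply \ref{ba:cancel-insert-remove} with $\buffer = \empbuffer$ and $\opseq = \emptystring$. Its side conditions hold:
\begin{itemize}
\item $\revapp{\empbuffer}{\bufremove(\msg)}$ is undefined, by \ref{ba:emp-remove};
\item $\bufremove(\msg)$ trivially does not occur in the empty sequence;
\item the composite $\revapp{\revapp{\empbuffer}{\bufinsert(\msg)}}{\bufremove(\msg)}$ is defined, by step 1.
\end{itemize}
The axiom then yields $\revapp{\revapp{\empbuffer}{\bufinsert(\msg)}}{\bufremove(\msg)} = \revapp{\empbuffer}{\emptystring} = \empbuffer$.
\end{enumerate}
Thus all channels are again empty, so $\SyncToAsync(\rho)$ is channel-matched.

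The main point needing care is strengthening the induction hypothesis to channel-matchedness. Without it we would not know that the buffer at $\buffermap(\procA,\procB)$ is empty before the new insert/remove pair. For a nonempty buffer, \ref{ba:emp-insert-remove} is unavailable, and \ref{ba:intro-insert} only helps if a removal was already enabled, which need not be the case.

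We also rely on the standard fact that a word's behavior decomposes per channel. This holds because $\textit{insert}$ and $\textit{remove}$ on channel states update only the channel $\buffermap(\procA,\procB)$. With that fact, the remaining steps are immediate.
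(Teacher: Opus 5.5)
Your proposal is correct and follows essentially the same route as the paper: strengthen the claim to channel-matchedness, induct on the length of $\rho$, use \ref{ba:emp-insert-remove} to show that the new insert/remove pair is defined on the empty buffer, and use \ref{ba:cancel-insert-remove} with the empty operation sequence to show that the channel returns to $\empbuffer$. You are slightly more careful than the paper, since you explicitly discharge the side condition of \ref{ba:cancel-insert-remove} via \ref{ba:emp-remove}, a step the paper leaves implicit.
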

\begin{proof}
	We prove a stronger property, which is that for all $w \in \AlphAsyncSubscript^*$, $\SyncToAsync(w)$ reaches the empty channel state $\chstate_0$. We prove this by induction on $\card{w}$. The base case is trivial. 
	In the induction step, let $w = w' \cdot \msgFromTo{\procA}{\procB}{\val}$. 
	From the induction hypothesis, $\SyncToAsync(w')$ reaches the empty channel state $\chstate_0$. 
	Let $c = \buffermap(\procA,\procB)$. Then, $\chstate_0(c) = \empbuffer$. 
	By (\ref{ba:emp-insert-remove}), $\revapp{\revapp{\empbuffer}{\bufinsert(\msg)}}{\bufremove(\msg)}$ is defined. 
	By (\ref{ba:cancel-insert-remove}), $\revapp{\revapp{\empbuffer}{\bufinsert(\msg)}}{\bufremove(\msg)} = \empbuffer$. 
	Thus, $\SyncToAsync(w' \cdot \msgFromTo{\procA}{\procB}{\val})$ reaches the empty channel state, and is thus channel-compliant. 
\end{proof}

\begin{lemma}[\ref{ca:sync-matched}]
	Let $\mathcal{T}_{\netarch}$ be a CLTS, $w \in \AlphAsyncSubscript^*$ and $\rho \in \AlphSyncSubscript^*$ such that 
	$w \equiv_\Procs \SyncToAsync(\rho)$, and $w$ is a trace of $\mathcal{T}_{\netarch}$. 
	Then, in the configuration reached on $w$, all channels are empty. 
\end{lemma}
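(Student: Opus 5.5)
Since channels are independent, I will argue channel by channel. The configuration reached on $w$ assigns each channel $c$ the buffer $\chstate_0(c)$ acted on by the operation sequence $\opseq_c$, where $\opseq_c$ is the subsequence of inserts and removes in $w$ whose messages $(\procA,\procB,\val)$ satisfy $\buffermap(\procA,\procB)=c$. So it suffices to show that for every $c$, applying $\opseq_c$ to $\empbuffer$ yields $\empbuffer$. The key observation is a counting one. Since $w \equiv_\Procs \SyncToAsync(\rho)$, for each message $\msg=(\procA,\procB,\val)$ the number of $\snd{\procA}{\procB}{\val}$ events in $w$ equals the number in $w\wproj_{\AlphAsync_\procA}=\SyncToAsync(\rho)\wproj_{\AlphAsync_\procA}$. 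Likewise, the number of $\rcv{\procA}{\procB}{\val}$ events in $w$ equals the number in $\SyncToAsync(\rho)\wproj_{\AlphAsync_\procB}$. Both counts equal the number of occurrences of $\msgFromTo{\procA}{\procB}{\val}$ in $\rho$. Hence in $\opseq_c$ every message is inserted exactly as often as it is removed.

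I then prove, by strong induction on the length of $\opseq$, the following claim: if $\opseq$ is defined on $\empbuffer$ and contains, for each $\msg$, equally many $\bufinsert(\msg)$ and $\bufremove(\msg)$, then $\opseq$ maps $\empbuffer$ to $\empbuffer$. The empty case is trivial. Otherwise, the first operation cannot be a remove, by \ref{ba:emp-remove}, so $\opseq=\bufinsert(\msg)\cdot\opseq'$. Since $\opseq'$ contains some $\bufremove(\msg)$, I split $\opseq'=\opseq_1\cdot\bufremove(\msg)\cdot\opseq_2$ at the first such remove, so that $\bufremove(\msg)$ does not occur in $\opseq_1$. By \ref{ba:emp-remove}, $\bufremove(\msg)$ is undefined on $\empbuffer$. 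The left-hand side of \ref{ba:cancel-insert-remove}, namely $\bufinsert(\msg)\cdot\opseq_1\cdot\bufremove(\msg)$ applied to $\empbuffer$, is defined because it is a prefix of the defined $\opseq$. Therefore \ref{ba:cancel-insert-remove} gives that $\bufinsert(\msg)\cdot\opseq_1\cdot\bufremove(\msg)$ on $\empbuffer$ equals $\opseq_1$ on $\empbuffer$. It follows that $\opseq$ on $\empbuffer$ equals $\opseq_1\opseq_2$ on $\empbuffer$, and in particular the latter is defined. The sequence $\opseq_1\opseq_2$ is shorter and still balanced, having lost one insert and one remove of $\msg$, so the induction hypothesis finishes the argument.

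The main obstacle is the definedness bookkeeping in the cancellation step, namely checking that \ref{ba:cancel-insert-remove} applies with buffer $\empbuffer$ and that definedness transfers to $\opseq_1\opseq_2$. Both follow from the equation itself, since the right-hand side equals a defined value. Apart from that, the projection-counting argument is routine. Combining the two gives that every channel is $\empbuffer$ in the reached configuration, so $w$ is channel-matched, as required for \ref{ca:sync-matched}.
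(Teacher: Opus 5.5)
Your argument is correct. Its core step matches the paper's: \ref{ba:emp-remove} rules out a leading remove, and \ref{ba:cancel-insert-remove} cancels the first insert against the first matching remove. The difference lies in the bookkeeping.

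The paper inducts on $\card{\rho}$ over whole words. It strips the leading $\snd{\procA}{\procB}{\val}$ and the first $\rcv{\procA}{\procB}{\val}$ from $w$, splits $\rho = \rho_1 \cdot \msgFromTo{\procA}{\procB}{\val} \cdot \rho_2$, and must re-establish $u_1u_2 \equiv_\Procs \SyncToAsync(\rho_1\rho_2)$ to reapply the hypothesis. The paper asserts this step with a bare ``thus''. Justifying it requires noting two things: $\rho_1$ contains no events of $\procA$, and the first occurrences of the receive in $w \wproj_{\AlphAsync_\procB}$ and $\SyncToAsync(\rho)\wproj_{\AlphAsync_\procB}$ coincide. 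The paper also applies \ref{ba:cancel-insert-remove} ``lifted to channel states''.

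You instead extract from $w \equiv_\Procs \SyncToAsync(\rho)$, once, the only consequence actually needed: each message is inserted as often as it is removed. You then induct on a single channel's operation sequence. There, balance is trivially preserved by cancellation, and the axiom applies verbatim at $\empbuffer$.

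Your route buys a simpler invariant and a more general statement: every channel-compliant word in which each message is sent exactly as often as it is received is channel-matched, with no reference to $\rho$ or to 1-synchronous structure. The paper's formulation, in turn, keeps the inductive hypothesis identical to the lemma itself. It also parallels the matched-pair elimination in the paper's proof of \ref{ca:history-insensitive}.
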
 
\begin{proof}
	We prove that every channel-compliant word $\bar w$ that is per-role equal to $\SyncToAsync(\rho)$ for some $\rho \in \AlphSyncSubscript^*$ reaches the empty channel, by induction on $\card{\rho}$. 
	In the base case, $\rho = \emptystring$ and $w = \emptystring$, and the claim holds trivially. 
	In the induction step, let $w = xw'$. 
	We case split on whether $x$ is a send or receive event. 
	If $x \in \AlphAsync_?$, by assumption $xw'$ is channel-compliant, and by prefix closure of channel compliance, $x$ is channel-compliant. 
	By (\ref{ba:emp-remove}), removing from an empty channel is undefined, and we reach a contradiction. 
	Let $x \in \AlphAsync_!$ be $\snd{\procA}{\procB}{\val}$. 
	Because $xw'$ is per-role equal to $\SyncToAsync(\rho)$, there is at least one occurrence of $\rcv{\procA}{\procB}{\val}$ in $w'$. 
	We find the first occurrence, \ie let $w = \snd{\procA}{\procB}{\val} \cdot u_1 \cdot \rcv{\procA}{\procB}{\val} \cdot u_2$ such that $\rcv{\procA}{\procB}{\val}$ does not occur in $u_1$. 
	Furthermore, let $\rho = \rho_1 \cdot \msgFromTo{\procA}{\procB}{\val} \cdot \rho_2$ such that $\msgFromTo{\procA}{\procB}{\val}$ does not occur in $\rho_1$. 
	Thus, $u_1 u_2$ is per-role identical to $\rho_1 \rho_2$, and by the induction hypothesis, $u_1 u_2$ reaches the empty channel. 
	By (\ref{ba:cancel-insert-remove}) lifted to channel states, the channel state reached on $u_1 u_2$ is the same as that reached on $xw'$. We conclude that the channel state reached on $w = xw'$ is the empty channel. 
\end{proof}

\begin{lemma}[\ref{ca:history-insensitive}]
	Let $w \in \AlphAsyncSubscript^*$ be channel-compliant. Let $\alpha, \beta \in \AlphSyncSubscript^*, \procA \neq \procB \in \Procs$ and $\val \in \MsgVals$ such that for all $\procA \in \Procs$, $w \wproj_{\AlphAsync_{\procA}} \leq \SyncToAsync(\alpha\beta) \wproj_{\AlphAsync_{\procA}}$, and $w \wproj_{\AlphAsync_{\procB}} = \SyncToAsync(\alpha)  \wproj_{\AlphAsync_{\procB}}$, and $w \cdot \rcv{\procA}{\procB}{\val}$ is channel-compliant. Then, there exists $w'$ such that $w'$ is compliant with $\beta$, $w \wproj_{\AlphAsync_{\procB}} = \emptystring$ and $w' \cdot \rcv{\procA}{\procB}{\val}$ is channel-compliant.  
\end{lemma}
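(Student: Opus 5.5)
We construct $w'$ by deleting from $w$ the events that belong to the synchronous prefix $\alpha$, and we prove that this deletion preserves both channel compliance and the availability of $\rcv{\procA}{\procB}{\val}$. The proof is by induction on $\card{\alpha}$, peeling off the first synchronous event $\alpha = \msgFromTo{\procC}{\procD}{\val_0}\cdot\alpha'$ in each step. The inductive hypothesis is applied to $\alpha'$ and the same $\beta$. The base case $\alpha=\emptystring$ is immediate: take $w'=w$. Then $w \wproj_{\AlphAsync_{\procB}} = \emptystring$, and $w$ is compliant with $\beta$ by the first hypothesis.

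For the inductive step, note that $w \wproj_{\AlphAsync_{\procB}} = \SyncToAsync(\alpha)\wproj_{\AlphAsync_{\procB}}$ and every participant's projection is a prefix of $\SyncToAsync(\alpha\beta)$. Hence the event $\snd{\procC}{\procD}{\val_0}$ is the first $\procC$-event of $w$, provided $\procC$ has any events in $w$. We then split into cases.
\begin{itemize}
\item \emph{$\procD=\procB$.} By the prefix hypothesis, $\rcv{\procC}{\procB}{\val_0}$ is the first $\procB$-event of $w$. By \ref{ca:send-before-receive}, a matching send precedes it, and that send is the first $\procC$-event. Write $w = u_1\cdot \snd{\procC}{\procB}{\val_0}\cdot u_2 \cdot \rcv{\procC}{\procB}{\val_0}\cdot u_3$, choosing the send and receive as first occurrences of these letters. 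Applying \ref{ba:cancel-insert-remove} on the channel $\buffermap(\procC,\procB)$ shows that $u_1u_2u_3$ reaches the same channel state as $w$. The precondition of \ref{ba:cancel-insert-remove} holds here: since this is the first occurrence of the send and, by \ref{ca:send-before-receive}, there can be no earlier matching receive, $\bufremove$ of this message is undefined after $u_1$. All other channels are unaffected. Therefore $u_1u_2u_3\cdot\rcv{\procA}{\procB}{\val}$ remains compliant, and the projection hypotheses now hold relative to $\alpha'\beta$ and $\alpha'$.
\item \emph{$\procD\neq\procB$.} Both events may be present, only the send may be present, or neither. If both are present, we delete them exactly as in the previous case, again using \ref{ba:cancel-insert-remove}. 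If only the send is present, it is an unmatched insert. We delete it using \ref{ba:elim-insert} applied to the operation sequence of its channel, extended by the final $\bufremove(\procA,\procB,\val)$ when that channel is $\buffermap(\procA,\procB)$. Here we need that $\bufremove(\procC,\procD,\val_0)$ does not occur afterwards, which holds because $\procD$ has no such receive in $w$. If neither event is present, there is nothing to delete, and we pass to $\alpha'$ after checking that the prefix hypotheses still hold with the events relabelled to $\alpha'\beta$.
\end{itemize}
In every case the resulting word, together with $\alpha'$, satisfies the hypotheses, and the inductive hypothesis yields $w'$.

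The delicate step is the bookkeeping that keeps the per-participant prefix hypotheses valid after each deletion. A participant $\procE\notin\{\procC,\procD\}$ is unaffected by the deletion. However, $\procC$ or $\procD$ may have events from $\beta$ interleaved in $w$ even though their $\alpha$-events are incomplete; for example, $\procD\neq\procB$ may not yet have received the message. The case where $\procC$ has sent but $\procD$ has not received is the main obstacle, since we must argue that removing the unmatched insert does not disable the later removal of $\val$ on the shared channel. Axiom \ref{ba:elim-insert} is exactly what handles this. For the matched case, we must also check that, after cancellation, the remaining occurrences of $(\procC,\procD,\val_0)$ are still correctly paired. Choosing first occurrences, as in the proof of \ref{ca:sync-matched}, resolves this pairing issue.
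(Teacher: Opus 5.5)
Your proof is correct and follows the same strategy as the paper: erase the $\alpha$-events from $w$, cancelling matched send/receive pairs with \ref{ba:cancel-insert-remove} and dropping unmatched sends with \ref{ba:elim-insert}, with \ref{ca:send-before-receive} supplying the ordering and undefinedness facts. The difference is only organizational. The paper makes two passes: it removes all matched pairs, then removes unmatched sends rightmost-first, then discards the $\alpha$-events absent from the word. Your single front-to-back induction on $\alpha$ instead guarantees that each deleted letter is the first occurrence of that letter in the current word, so the side conditions of the two axioms become immediate: no removable copy of the message beforehand, no intervening or later removal of it, and distinctness from the final message in the send-only case.
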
 
\begin{proof}
	We construct $w'$ in two steps. In the first step, we remove matched pairs of events from $w$ and $\alpha$. 
	\paragraph{Claim 1.} There exists $w_1$ and $\alpha_1$ such that $w_1$ is compliant with $\alpha_1\beta$, $w_1$ contains no matched pairs of events from $\alpha$, and moreover reaches the same channel state as $w$. 
	We prove the claim by induction on the number of matched pairs in $w$ from $\alpha$. 
	In the base case, let $w_1 = w$ and $\alpha_1 = \alpha$. 
	In the induction step, let $\alpha = \gamma_1 \cdot \msgFromTo{\procC}{\procD}{\val} \cdot \gamma_2$ such that 
	$\SyncToAsync(\gamma_1 \cdot \msgFromTo{\procC}{\procD}{\val}) \wproj_{\AlphAsync_\procC} \leq w \wproj_{\AlphAsync_\procC}$, and 
	$\SyncToAsync(\gamma_1 \cdot \msgFromTo{\procC}{\procD}{\val}) \wproj_{\AlphAsync_\procD} \leq w \wproj_{\AlphAsync_\procD}$.
	Let $w_\procC$ be the maximal prefix of $w$ with respect to $\gamma_1$ for $\procC$, such that $w = w_\procC \cdot \snd{\procC}{\procD}{\val} \cdot u_\procC$. 
	Let $w_\procD$ be the maximal prefix of $w$ with respect to $\gamma_1$ for $\procD$, such that $w = w_\procD \cdot \rcv{\procC}{\procD}{\val} \cdot u_\procD$. 
	It follows from (\ref{ca:send-before-receive}) that $w_\procC \cdot \snd{\procC}{\procD}{\val} \leq w_\procD$, and we can write 
	$w = w_\procC \cdot \snd{\procC}{\procD}{\val} \cdot u \cdot \rcv{\procC}{\procD}{\val} \cdot v$ such that $\rcv{\procC}{\procD}{\val}$ does not occur in $u$.
	Thus, $w_\procD uv$ is compliant with $\gamma_1\gamma_2$, contains one fewer matched event from $\alpha$, and by (\ref{ba:cancel-insert-remove}), reaches the same channel state as $w$. 
	The claim thus follows from the induction hypothesis. 
	
	Because $w \cdot \rcv{\procA}{\procB}{\val}$ is channel-compliant, and $w_1$ reaches the same channel state as $w$, it follows that $w_1 \cdot \rcv{\procA}{\procB}{\val}$ is channel-compliant. 
	
	In the second step, we remove unmatched send events from $w_1$ and $\alpha_1$. 
	\paragraph{Claim 2.} There exists $w_2$ and $\alpha_2$ such that $w_2$ is compliant with $\alpha_2\beta$, $w_2$ contains no matched pairs of events from $\alpha$ and no unmatched sends from $\alpha$, and moreover $w_2 \cdot \rcv{\procA}{\procB}{\val}$ is channel-compliant. 
	We prove the claim by induction on the number of unmatched sends in $w_1$ from $\alpha_1$. 
	In the base case, let $w_2 = w_1$ and $\alpha_2 = \alpha_1$. 
	In the induction step, let $\alpha_1 = \gamma_3 \cdot \msgFromTo{\procC}{\procD}{\val} \cdot \gamma_4$ such that $\SyncToAsync(\gamma_3 \cdot \msgFromTo{\procC}{\procD}{\val}) \wproj_{\AlphAsync_\procC} \leq w \wproj_{\AlphAsync_\procC}$, and the maximal prefix of $w$ with respect to $\gamma_3$ for $\procC$ is the longest relative to all over unmatched send events in $\alpha_1$. 
	In other words, we find the rightmost occurrence of an unmatched send event from $\alpha_1$ in $w_1$.
	First, we establish that $\msgFromTo{\procC}{\procD}{\val} \neq \msgFromTo{\procA}{\procB}{\val}$. 
	Because $w \wproj_{\AlphAsync_{\procB}} = \SyncToAsync(\alpha)  \wproj_{\AlphAsync_{\procB}}$, and $w$ satisfies (\ref{ca:send-before-receive}), it follows that $\SyncToAsync(\alpha)  \wproj_{\AlphAsync_{\procA}} \leq w \wproj_{\AlphAsync_{\procA}}$. 
	Thus, all send events from $\procA$ to $\procB$ prescribed by $\alpha$ are not unmatched. 
	We can split $w_1 = u_1 \cdot \snd{\procC}{\procD}{\val} \cdot u_2$ such that there are no unmatched sends from $\alpha_1$ in $u_2$. 
	It follows from (\ref{ba:elim-insert}) that $u_1 \cdot u_2 \cdot \rcv{\procA}{\procB}{\val}$ is channel-compliant, contains one fewer unmatched send from $\alpha$, and is compliant with $\gamma_3\gamma_4$. 
	The claim thus follows from the induction hypothesis.  
	
	Finally, let $\alpha_3$ be $\alpha_2$ with all events that do not occur in $w_2$ removed. 
	From the two steps above, it is clear that $\alpha_3 = \emptystring$, and thus $w_2$ is compliant with $\beta$, $w_2 \cdot \rcv{\procA}{\procB}{\val}$ is channel-compliant, and contains no events of $\procB$. 
	This concludes our proof that $w_2$ serves as a witness for $w'$. 
\end{proof}

\subsection{Proofs for \cref{sec:symbolic}}
\label{app:symbolic-proofs}

\StrongPrefixExtensibility*

\begin{proof}
	We prove the claim by induction on the length of $w$. 
	In the base case, let $u = \SyncToAsync(\rho)$. By $(\ccfact{3})$, $u$ is channel-compliant, and by construction, $u \equiv_\Procs \SyncToAsync(\rho)$. 
	In the induction step, let $w = w'x$. 
	From the induction hypothesis, we are given $u'$ such that $w'u'$ is channel-compliant and $w'u' \equiv_\Procs \SyncToAsync(\rho)$. 
	Let $\procA$ be the active participant in $x$, and let $u_1$ be the maximal prefix of $u'$ with respect to $\emptystring$ for $\procA$. 
	By definition, $u_1 \wproj_{\AlphAsync_{\procA}} = \emptystring$, and the next symbol in $u'$ following $u_1$ has $\procA$ as its active participant. 
	Moreover, because both $w'x \wproj_{\AlphAsync_{\procA}}$ and $w'u' \wproj_{\AlphAsync_{\procA}}$ are prefixes of $\rho \wproj_{\AlphAsync_{\procA}}$, the next symbol in $u'$ following $u_1$ equals $x$. 
	Thus, we can write $u' = u_1 y u_2$. 
	Let $u = u_1 u_2$. Clearly, $u$ is per-role equal to $\SyncToAsync(\rho)$. 
	To show that $w'xu_1u_2$ is channel-compliant we require the following facts for the various considered network architectures. Let $v, u_1, u_2 \in \AlphAsyncSubscript^*$ and $z \in \AlphAsyncSubscript$. 
	\paragraph{Claim 1.} If $v u_1 z u_2$ is \ptpbox, \senderbox, or \bag channel-compliant, $z = \snd{\procA}{\procB}{\val}$ and $u_1 \wproj_{\AlphAsync_{\procA}} = \emptystring$, then $v z u_1 u_2$ is respectively \ptpbox, \senderbox or \bag channel-compliant. 
	\paragraph{Claim 2.}  If $v u_1 z u_2$ is \ptpbox or \bag channel-compliant, $z = \rcv{\procA}{\procB}{\val}$ and $u_1 \wproj_{\AlphAsync_{\procB}} = \emptystring$, then $v z u_1 u_2$ is respectively \ptpbox and \bag channel-compliant. 
	
	For \senderbox channel compliance when $z$ is a receive event, we require slightly stronger assumptions on $u_1$. 
	
	\paragraph{Claim 3.} 
	 If $v u_1 z u_2$ is \senderbox channel-compliant, $z = \rcv{\procA}{\procB}{\val}$, $u_1 \wproj_{\AlphAsync_{\procA}} = \emptystring$, $u_1 \wproj_{\snd{\procA}{\hole}{\hole}} = \emptystring$, and $v \wproj_{\rcv{\procA}{\hole}{\hole}} \cdot z \leq v \wproj_{\snd{\procA}{\hole}{\hole}}$, then $v z u_1 u_2$ is \senderbox channel-compliant. 
	 
	 When $z = \rcv{\procA}{\procB}{\val}$, we reason from the fact that $w' \cdot \rcv{\procA}{\procB}{\val}$ is \senderbox channel-compliant, $w' u_1 \cdot \rcv{\procA}{\procB}{\val}$ is \senderbox channel-compliant, and moreover $\procB$ has no events in $u_1$, that $\procB$ does not receive the message from $\procA$ in $u_1$, and thus no other participants receives messages from $\procA$'s \senderbox in $u_1$. Thus, the assumptions required to reorder $z$ ahead of $u_1$ are satisfied.
\end{proof}

\subsection{Complexity} 
\label{app:complexity} 
\Cref{thm:preciseness} immediately yields a decision procedure for implementability of finite protocols for $\netarch \in \set{\ptpbox, \senderbox, \mailbox, \monobox, \bag}$. 
Li et al.~\cite{Li25OopslaOfficial} showed that the implementability problem for finite global protocols on a \ptpbox network is co-NP-complete. 
We first show that this result extends to all homogeneous network architectures under consideration, by examining the complexity of problem in light of relationships between the $\avail$ predicates for each network. 
Given our discussion of how to decide the generalized Coherence Conditions in \cref{sec:instantiations}, it suffices to argue that (a) $\avail$ for \senderbox and \bag are co-NP-complete, and (b) $\avail'$ for \mailbox and \monobox are co-NP-complete. 
It is easy to see that witnesses for the above are verifiable in polynomial time. 

The proof of the co-NP lower bound by Li et al.~\cite{Li25OopslaOfficial} works by a reduction from 3-SAT to implementability. The proof assumes a 3-SAT instance $\varphi = C_1 \land \ldots \land C_k$ with variables $x_1,\dots,x_n$ and literals $L_{ij}$, denoting the $j$th literal of clause~$C_i$, with $1 \leq i \leq k$ and $1 \leq j \leq 3$. From this, it constructs a global protocol $\mathcal{S}_\varphi$ such that $\phi$ is unsatisfiable iff $\mathcal{S}_\varphi$ is implementable. We summarize the construction pictorially in \cref{fig:3-sat-reduction}.
We show that the same lower bound construction with a small modification works for the remaining network architectures.

The construction relies on two gadgets: $\mathcal{S}_X$, a gadget that encodes a variable assignment to variables $x_1, \ldots,x_n$ (\cref{fig:3-sat-assignment-gadget}), and $\mathcal{S}_C$, a gadget that encodes literal selection for clauses $C_1,\ldots,C_k$ (\cref{fig:3-sat-clause-selection-gadget}). 
The highlighted message in \cref{fig:3-sat-reduction} is available for participant $\procB$ in $q_2$ if and only if $\varphi$ is satisfiable. 
Consequently, protocol $\mathcal{S}_\varphi$ is non-implementable if and only if the highlighted message is available for participant $\procB$ in $q_2$, if and only if $\varphi$ is satisfiable. 
In order to show that the construction carries over to bag and senderbox networks, one is only required to analyze the unique message receptions that appear in each gadget. Thus, \cref{fig:3-sat-assignment-gadget} and \cref{fig:3-sat-clause-selection-gadget} each depict the smallest portion of each gadget necessary to establish our new complexity results. 
We refer the reader to \cite{Li25OopslaOfficial} for the full details of the construction. 

\newcommand{\redprocC}{\textcolor{red}{\mathtt{r}}}
\newcommand{\redprocD}{\textcolor{red}{\mathtt{s}}}

\begin{figure}[b]
	\begin{tikzpicture}[sem,
	circ/.style={circle, draw, minimum size=8mm},
	square/.style={rectangle, draw, minimum size=10mm},
	node distance=2cm
	]
	\node[state, initial left=, initial text =](q1){$q_1$};
	% Upper branch
	\node[state, above right=0.8cm and 1.5cm of q1](q4){$q_4$};
	\node[state, accepting, right=1.5cm of q4](q5){$q_5$};
	\path(q1) edge node[sloped, pos=0.5, above] {$\msgFromTo{\procC}{\procA}{\val_2}$} (q4);
	\path(q4) edge node[sloped, pos=0.5, above] {$\msgFromTo{\procA}{\procB}{\val}$} (q5);
	
	% Lower branch 
	\node[state, below right=0.8cm and 1.5cm of q1](q2){$q_2$};
	\node[square, right=1.5cm of q2](q3){$\mathcal{S}_X$}; 
	%\node[state, right=2cm of q3](q34){};
	\node[square, right=3cm of q3](q4){$\mathcal{S}_C$};
	%\node[state, right=2cm of q4](q45){};
	\path(q1) edge node[sloped, pos=0.5, below] {$\msgFromTo{\procC}{\procA}{\val_1}$} (q2);
	\path(q2) edge node[sloped, pos=0.5, below] {$\msgFromTo{\procC}{\procB}{\val}$} (q3);
	\path(q3) edge node[sloped, pos=0.5, below] {$\ldots$} (q4); 
	% \path(q34) edge node[sloped, pos=0.5, below] {} (q4);
	\node[state, accepting, right=1.5cm of q4](qf){$q_f$};
	\path(q4) edge node[sloped, pos=0.5, below] {\colorbox{yellow}{$\msgFromTo{\procA}{\procB}{\val}$}} (qf);
	% Lower branch
%	\node[state, below right=0.3cm and 1.5cm of q0](q5){};
%	\node[state, right= 2.5cm of q5](q6){$q_2$};
%	\node[state, accepting, above right=0.07cm and 2.5cm of q6](q6l){};
%	\node[state, below right=0.07cm and 2.5cm of q6](q6r){};
%	\node[state, accepting, right= 1.5cm of q6r](q7){};
%	
%	% Upper branch
%	\path(q0) edge node[sloped, pos=0.5, above] {$\msgFromTo{\procD}{\procB}{\lblmsgB}$} (q1);
%	\path(q1) edge node[sloped, pos=0.5, above] {$\msgFromTo{\procB}{\procA}{x_1 \cond{x_1 = 4}}$} (q2);
%	\path(q2) edge node[sloped, pos=0.5, above] {$\msgFromTo{\procA}{\procB}{\lblmsgO}$} (q3);
%	\path(q3) edge node[sloped, pos=0.5, above] {$\msgFromTo{\procB}{\procC}{\lblmsgM}$} (q4);
%	
%	% Lower branch
%	\path(q0) edge node[sloped, pos=0.5, below] {$\msgFromTo{\procD}{\procB}{\lblmsgM}$} (q5);
%	\path(q5) edge node[below] {$\msgFromTo{\procB}{\procA}{x_2 \cond{\top}}$} (q6);
%	\path(q6) edge node[sloped, pos=0.5, above] {$\msgFromTo{\procB}{\procC}{\lblmsgB \cond{x_2 \neq 5}}$} (q6l);
%	\path(q6) edge node[sloped, pos=0.5, below] {$\msgFromTo{\procB}{\procC}{\lblmsgM \cond{x_2 = 5}}$} (q6r);
%	\path(q6r) edge node[below] {$\msgFromTo{\procA}{\procB}{\lblmsgO}$} (q7);
\end{tikzpicture}
	\caption{Illustration of 3-SAT reduction for implementability from \cite{Li25OopslaOfficial}.}
	\label{fig:3-sat-reduction}
\end{figure}
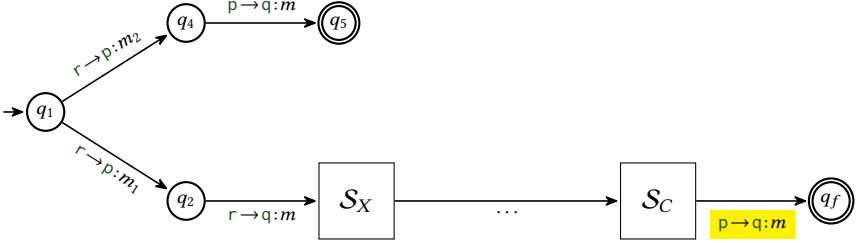

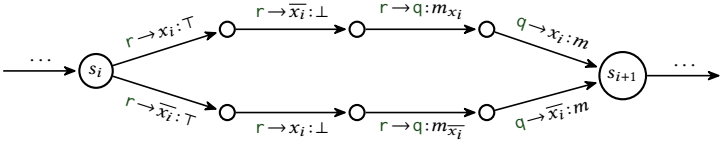
\begin{figure}[b]
	\begin{tikzpicture}[sem] % node distance=1cm and 2cm,>=stealth', line width=0.25mm]
	\node[draw=none,fill=none](q00){};
	\node[state, right=1cm of q00](q0){$s_i$};
	\path(q00) edge node[sloped, pos=0.5, above]{$\ldots$} (q0);
	% Upper branch
	\node[state, above right=0.3cm and 1.5cm of q0](q1){};
	\node[state, right=1.5cm of q1](q2){};
	\node[state, right=1.5cm of q2](q3){};
	\node[state, below right=0.3cm and 1.5cm of q3](q4){$s_{i+1}$};
	\node[draw=none,fill=none, right=1cm of q4](q55){};
	\path(q4) edge node[sloped, pos=0.5, above]{$\ldots$} (q55);

	% Upper branch
	\path(q0) edge node[sloped, pos=0.5, above] {$\msgFromTo{\procC}{x_i}{\top}$} (q1);
	\path(q1) edge node[sloped, pos=0.5, above] {$\msgFromTo{\procC}{\overline{x_i}}{\bot}$} (q2);
	\path(q2) edge node[sloped, pos=0.5, above] {$\msgFromTo{\procC}{\procB}{\val_{x_i}}$} (q3);
	\path(q3) edge node[sloped, pos=0.5, above] {$\msgFromTo{\procB}{x_i}{\val}$} (q4);

	% Lower branch
	\node[state, below right=0.3cm and 1.5cm of q0](q6){};
	\node[state, right=1.5cm of q6](q7){};
	\node[state, right=1.5cm of q7](q8){};
	
	% Lower branch
	\path(q0) edge node[sloped, pos=0.5, below] {$\msgFromTo{\procC}{\overline{x_i}}{\top}$} (q6);
	\path(q6) edge node[sloped, pos=0.5, below] {$\msgFromTo{\procC}{x_i}{\bot}$} (q7);
	\path(q7) edge node[sloped, pos=0.5, below] {$\msgFromTo{\procC}{\procB}{\val_{\overline{x_i}}}$} (q8);
	\path(q8) edge node[sloped, pos=0.5, below] {$\msgFromTo{\procB}{\overline{x_i}}{\val}$} (q4);

\end{tikzpicture}
	\caption{Illustration of modified variable assignment gadget $\mathcal{S}_X$.}
	\label{fig:3-sat-assignment-gadget}
\end{figure}

\begin{figure}[b]
	\begin{tikzpicture}[sem] % node distance=1cm and 2cm,>=stealth', line width=0.25mm]
	% Leftmost states 
	\node[draw=none,fill=none](q00){};
	\node[state, right=1cm of q00](q0){$t_i$};
	\path(q00) edge node[sloped, pos=0.5, above]{$\ldots$} (q0);
	
	% Upper branch
	\node[state, above right=1.5cm and 1.5cm of q0](q1){};
	\node[state, right=1.5cm of q1](q12){};
	\node[state, right=1.5cm of q12](q2){};
	
	% Upper branch
	\node[state, right=1cm and 1.5cm of q0](q3){};
	\node[state, right=1.5cm of q3](q34){};
	\node[state, right=1.5cm of q34](q4){};
	
	% Lower branch
	\node[state, below right=1.5cm and 1.5cm of q0](q5){};
	\node[state, right=1.5cm of q5](q56){};
	\node[state, right=1.5cm of q56](q6){};
	
	% Rightmost states 
	\node[state, right=1.5cm of q4](q7){$t_{i+1}$};
	\node[draw=none,fill=none, right=1cm of q7](q8){};
	\path(q7) edge node[sloped, pos=0.5, above]{$\ldots$} (q8);
	
	% First transitions 
	\path(q0) edge node[sloped, pos=0.5, above] {$\msgFromTo{\procC}{x_a}{\val_1}$} (q1);
	\path(q0) edge node[sloped, pos=0.5, above] {$\msgFromTo{\procC}{x_b}{\val_2}$} (q3);
	\path(q0) edge node[sloped, pos=0.5, below] {$\msgFromTo{\procC}{x_c}{\val_3}$} (q5);
	
	% Second transitions 
	\path(q1) edge node[sloped, pos=0.5, above] {$\msgFromTo{\procC}{\procA}{\textcolor{red}{\val_i}}$} (q12);
	\path(q3) edge node[sloped, pos=0.5, above] {$\msgFromTo{\procC}{\procA}{\textcolor{red}{\val_i}}$} (q34);
	\path(q5) edge node[sloped, pos=0.5, below] {$\msgFromTo{\procC}{\procA}{\textcolor{red}{\val_i}}$} (q56);
	
	% New third transitions 
	\path(q12) edge node[sloped, pos=0.5, above] {\textcolor{red}{$\msgFromTo{\redprocC}{x_a}{\val}$}} (q2);
	\path(q34) edge node[sloped, pos=0.5, above] {\textcolor{red}{$\msgFromTo{\redprocC}{x_b}{\val}$}} (q4);
	\path(q56) edge node[sloped, pos=0.5, below] {\textcolor{red}{$\msgFromTo{\redprocC}{x_c}{\val}$}} (q6);
	
	% Third transitions 
	\path(q2) edge node[sloped, pos=0.5, above] {$\msgFromTo{x_a}{\procA}{\val_1}$} (q7);
	\path(q4) edge node[sloped, pos=0.5, above] {$\msgFromTo{x_b}{\procA}{\val_2}$} (q7);
	\path(q6) edge node[sloped, pos=0.5, below] {$\msgFromTo{x_c}{\procA}{\val_3}$} (q7);
	
\end{tikzpicture}
	\caption{Illustration of modified clause selection gadget $\mathcal{S}_C$. Modifications are highlighted in red.}
	\label{fig:3-sat-clause-selection-gadget}
\end{figure}
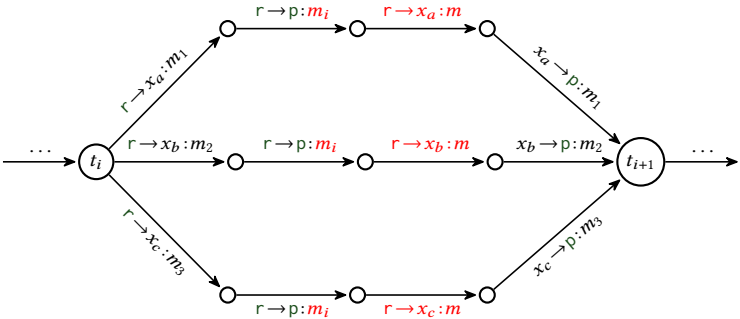

First, we consider the \bag network architecture. 
As established in \cref{sec:instantiations}, any message available in a \ptpbox network is also available in a \bag network. Thus, we only need to show that the rest of $\mathcal{S}_\varphi$ is implementable, which amounts to checking that no other \bag Receive Coherence violations occur. 
Participant $\procC$ does not receive messages, and can thus be ignored. 
Unlike \ptpbox Receive Coherence, \bag Receive Coherence additionally constrains pairs of receptions from the same sender. 
For the variable participants in $\mathcal{S}_X$, each participant receives either a $\bot$ from $\procC$, or a $\top$ followed by an $\val$ message from $\procB$. Thus, \bag Receive Coherence is satisfied. 
Inspecting $\mathcal{S}_C$, each variable participant only receives one kind of message, which is $\val$ from $\procC$, and if so it sends an $\val$ message to $\procA$. Thus, \bag Receive Coherence is satisfied as well. 
Participant $\procA$ is uninvolved in $\mathcal{S}_X$, but in $\mathcal{S}_C$ receives $\val$ messages from $\procC$ which tells it to anticipate a message from some variable participant. The original encoding uses the same message payload from $\procC$ to tell $\procA$ to anticipate a message, but we can modify the construction to let $\procC$ send $\procA$ a message encoding precisely which variable participant to anticipate a message from. This eliminates what would otherwise constitute a \bag Receive Coherence violation for $\procA$, since $\procC$ and the variable participants can overtake $\procA$'s receptions. 
Finally, onto participant $\procB$, who is uninvolved in $\mathcal{S}_C$ and only involved in $\mathcal{S}_X$, $\procB$ receives exactly $n$ messages from $\procC$, that constitute $n$ binary choices between receiving $\val_{x_i}$ and $\val_{\bar{x_i}}$ interrupted by send events from $\procB$. 
Thus, we can conclude that the modified construction is non-implementable iff $\bagavail_{\procA,\procB,\{\procB\}}(\val,q_3)$ holds in $\mathcal{S}_\varphi$ iff $\varphi$ is satisfiable.

Next, we consider the \senderbox network architecture. 
Because \ptpbox implementability implies \senderbox implementability, in this case we only need to independently establish that $\avail_{\procA,\procB,\{\procB\}}(\val,q_3)$ holds for the \senderbox $\avail$ setting. 
As illustrated above, senderbox receivability can be undermined by messages from the same sender to different receivers that are blocked, so we need to check whether any such messages from $\procA$ to other receivers appear in the subprotocols $\mathcal{S}_X$ and $\mathcal{S}_C$. 
It is easy to see that no such messages appear, and thus senderbox implementability still holds. 

For \monobox and \mailbox, we establish that $\avail'_{\procA, \set{\procC,\procA}}(q_2)$ holds if and only if $\varphi$ is satisfiable. 
In gadget $\mathcal{S}_X$, no causally independent receptions to the same receiver occur along the same run. 
In gadget $\mathcal{S}_C$, the addition of the extra message exchange, $\msgFromTo{\procC}{x_a}{\val}$ enforces that the messages from $\procC$ and $x_a$ cannot be independently reordered in $\procA$'s mailbox. 
Thus, the construction maintains that the only possible violation to Generalized Coherence Conditions lies in the availability of the highlighted message from state $q_2$. 

Furthermore, we argue that the construction can be adapted to multiparty session types with directed choice, a syntactically defined fragment of finite global protocols, whose definition is given below: 
\paragraph{Global Multiparty Session Types}
Global types for MSTs 
\cite{DBLP:conf/popl/HondaYC08} are defined by the grammar:
\begin{grammar}
	G \is
	0
	\mid \sum_{i ∈ I} \msgFromTo{\procA}{\procB}{\val_i.G_i}
	\mid \mu t. \; G
	\mid t
\end{grammar}\\[-3ex]
where $\procA, \procB$ range over $\Procs$, $\val_i$ over a finite set $\MsgVals$, and $t$ over a set of recursion variables. 

Note that the top-level choice in \cref{fig:3-sat-reduction} satisfies directed choice, and thus we only require to modify the assignment and clause selection gadgets. 
We depict the gadgets modified to satisfy directed choice in \cref{fig:3-sat-assignment-gadget-mst} and \cref{fig:3-sat-clause-selection-gadget-mst}. 
The encoding from directed choice global protocols to directed choice multiparty session types follows~\cite{Li25OopslaArxiv}, and we refer the reader to their appendix for details. 

\begin{figure}[h]
	\begin{tikzpicture}[sem] % node distance=1cm and 2cm,>=stealth', line width=0.25mm]
	\node[draw=none,fill=none](q00){};
	\node[state, right=1cm of q00](q0){$s_i$};
	\path(q00) edge node[sloped, pos=0.5, above]{$\ldots$} (q0);
	
	% Upper branch
	\node[state, above right=0.3cm and 1.5cm of q0](q1pre){};
	\node[state, right=1.5cm of q1pre](q1){};
	\node[state, right=1.5cm of q1](q2){};
	\node[state, right=1.5cm of q2](q3){};
	\node[state, below right=0.3cm and 1.5cm of q3](q4){$s_{i+1}$};
	\node[draw=none,fill=none, right=1cm of q4](q55){};
	\path(q4) edge node[sloped, pos=0.5, above]{$\ldots$} (q55);
	
	% Upper branch
	\path(q0) edge node[sloped, pos=0.5, above] {\textcolor{red}{$\msgFromTo{\redprocC}{\redprocD}{\top}$}} (q1pre);
	\path(q1pre) edge node[sloped, pos=0.5, above] {$\msgFromTo{\procC}{x_i}{\top}$} (q1);
	\path(q1) edge node[sloped, pos=0.5, above] {$\msgFromTo{\procC}{\overline{x_i}}{\bot}$} (q2);
	\path(q2) edge node[sloped, pos=0.5, above] {$\msgFromTo{\procC}{\procB}{\val_{x_i}}$} (q3);
	\path(q3) edge node[sloped, pos=0.5, above] {$\msgFromTo{\procB}{x_i}{\val}$} (q4);
	
	% Lower branch
	\node[state, below right=0.3cm and 1.5cm of q0](q6pre){};
	\node[state, right=1.5cm of q6pre](q6){};
	\node[state, right=1.5cm of q6](q7){};
	\node[state, right=1.5cm of q7](q8){};
	
	% Lower branch
	\path(q0) edge node[sloped, pos=0.5, below] {\textcolor{red}{$\msgFromTo{\redprocC}{\redprocD}{\bot}$}} (q6pre);
	\path(q6pre) edge node[sloped, pos=0.5, below] {$\msgFromTo{\procC}{\overline{x_i}}{\top}$} (q6);
	\path(q6) edge node[sloped, pos=0.5, below] {$\msgFromTo{\procC}{x_i}{\bot}$} (q7);
	\path(q7) edge node[sloped, pos=0.5, below] {$\msgFromTo{\procC}{\procB}{\val_{\overline{x_i}}}$} (q8);
	\path(q8) edge node[sloped, pos=0.5, below] {$\msgFromTo{\procB}{\overline{x_i}}{\val}$} (q4);
	
\end{tikzpicture}
	\caption{Illustration of directed choice variable assignment gadget $\mathcal{S}_X$. Modifications are highlighted in red. }
	\label{fig:3-sat-assignment-gadget-mst}
\end{figure}
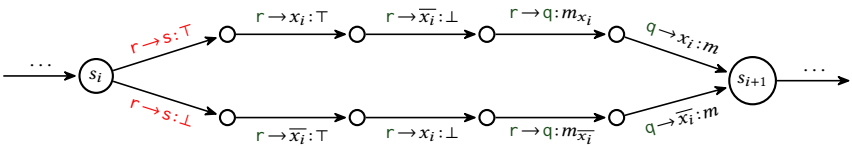

\begin{figure}[h]
	\begin{tikzpicture}[sem] % node distance=1cm and 2cm,>=stealth', line width=0.25mm]
	% Leftmost states 
	\node[draw=none,fill=none](q00){};
	\node[state, right=1cm of q00](q0){$t_i$};
	\path(q00) edge node[sloped, pos=0.5, above]{$\ldots$} (q0);
	
	% Upper branch
	\node[state, above right=1.5cm and 1.5cm of q0](q1pre){};
	\node[state, right=1.5cm of q1pre](q1){};
	\node[state, right=1.5cm of q1](q12){};
	\node[state, right=1.5cm of q12](q2){};
	
	% Upper branch
	\node[state, right=1cm and 1.5cm of q0](q3pre){};
	\node[state, right=1.5cm of q3pre](q3){};
	\node[state, right=1.5cm of q3](q34){};
	\node[state, right=1.5cm of q34](q4){};
	
	% Lower branch
	\node[state, below right=1.5cm and 1.5cm of q0](q5pre){};
	\node[state, right=1.5cm of q5pre](q5){};
	\node[state, right=1.5cm of q5](q56){};
	\node[state, right=1.5cm of q56](q6){};
	
	% Rightmost states 
	\node[state, right=1.5cm of q4](q7){$t_{i+1}$};
	\node[draw=none,fill=none, right=1cm of q7](q8){};
	\path(q7) edge node[sloped, pos=0.5, above]{$\ldots$} (q8);
	
	% New first transitions 
	\path(q0) edge node[sloped, pos=0.5, above] {\textcolor{red}{$\msgFromTo{\redprocC}{\redprocD}{\val_1}$}} (q1pre);
	\path(q0) edge node[sloped, pos=0.5, above] {\textcolor{red}{$\msgFromTo{\redprocC}{\redprocD}{\val_2}$}} (q3pre);
	\path(q0) edge node[sloped, pos=0.5, below] {\textcolor{red}{$\msgFromTo{\redprocC}{\redprocD}{\val_3}$}} (q5pre);
	
	% First transitions 
	\path(q1pre) edge node[sloped, pos=0.5, above] {$\msgFromTo{\procC}{x_a}{\val_1}$} (q1);
	\path(q3pre) edge node[sloped, pos=0.5, above] {$\msgFromTo{\procC}{x_b}{\val_2}$} (q3);
	\path(q5pre) edge node[sloped, pos=0.5, below] {$\msgFromTo{\procC}{x_c}{\val_3}$} (q5);
	
	% Second transitions 
	\path(q1) edge node[sloped, pos=0.5, above] {$\msgFromTo{\procC}{\procA}{\textcolor{red}{\val_i}}$} (q12);
	\path(q3) edge node[sloped, pos=0.5, above] {$\msgFromTo{\procC}{\procA}{\textcolor{red}{\val_i}}$} (q34);
	\path(q5) edge node[sloped, pos=0.5, below] {$\msgFromTo{\procC}{\procA}{\textcolor{red}{\val_i}}$} (q56);
	
	% New third transitions 
	\path(q12) edge node[sloped, pos=0.5, above] {\textcolor{red}{$\msgFromTo{\redprocC}{x_a}{\val}$}} (q2);
	\path(q34) edge node[sloped, pos=0.5, above] {\textcolor{red}{$\msgFromTo{\redprocC}{x_b}{\val}$}} (q4);
	\path(q56) edge node[sloped, pos=0.5, below] {\textcolor{red}{$\msgFromTo{\redprocC}{x_c}{\val}$}} (q6);
	
	% Third transitions 
	\path(q2) edge node[sloped, pos=0.5, above] {$\msgFromTo{x_a}{\procA}{\val_1}$} (q7);
	\path(q4) edge node[sloped, pos=0.5, above] {$\msgFromTo{x_b}{\procA}{\val_2}$} (q7);
	\path(q6) edge node[sloped, pos=0.5, below] {$\msgFromTo{x_c}{\procA}{\val_3}$} (q7);
	
\end{tikzpicture}
	\caption{Illustration of directed choice clause selection gadget $\mathcal{S}_C$. Modifications are highlighted in red. }
	\label{fig:3-sat-clause-selection-gadget-mst}
\end{figure}
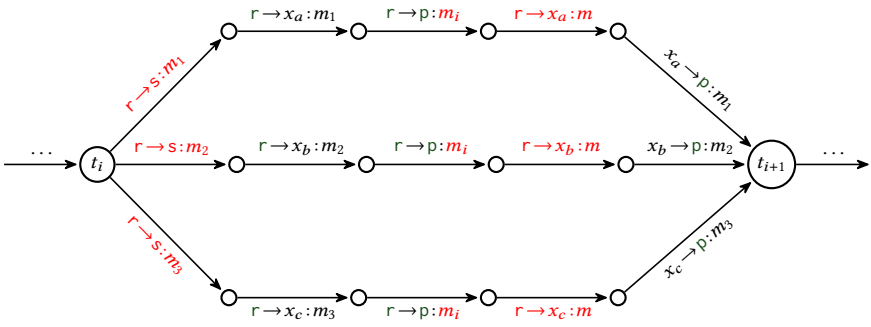

%%% Local Variables:
%%% mode: latex
%%% TeX-master: "main"
%%% End:

	%\section*{Data-Availability Statement}
	%The extended version of this paper containing complete proofs can be found at~\cite{extendedversion}. 
	% \subsubsection{Acknowledgements} Please place your acknowledgments at
	% the end of the paper, preceded by an unnumbered run-in heading (i.e.
	% 3rd-level heading).
%	\begin{comment} 
%	\subsubsection*{Acknowledgements}
%	\end{comment}

	%
	% ---- Bibliography ----
	%
	% BibTeX users should specify bibliography style 'splncs04'.
	% References will then be sorted and formatted in the correct style.
	%
	%\bibliographystyle{plain}

%	\clearpage
%	\appendix
%	\input{app-characterization.tex}
%	\input{app-instantiations.tex}

\end{document}